\definecolor{darkgreen}{rgb}{0.13, 0.55, 0.13}
\newtheorem{theorem}{Theorem}
\newtheorem{proposition}{Proposition}
\newtheorem{lemma}{Lemma}
\newtheorem{corollary}{Corollary}
\theoremstyle{definition}
\newcommand{\I}[1]{\mathds{1}[#1]}
\newcommand{\E}[1]{\mathbb{E}[#1]}
\pgfplotsset{compat=1.16}
\begin{document}

\title{Collective Outlier Detection and Enumeration with Conformalized Closed Testing}

\author{Chiara G. Magnani\footnote{Department of Economics, Management and Statistics, University of Milano-Bicocca, Milano, Italy.} \footnote{Email: c.magnani9@campus.unimib.it} \and Matteo Sesia\footnote{Department of Data Sciences and Operations, University of Southern California, Los Angeles, California, USA.} \and Aldo Solari\footnote{Department of Economics, Ca' Foscari University of Venice, Venice, Italy; Department of Economics, Management and Statistics, University of Milano-Bicocca, Milano, Italy.}}
\date{\today}

\maketitle

\begin{abstract}
 This paper develops a flexible distribution-free method for collective outlier detection and enumeration, designed for situations in which the presence of outliers can be detected powerfully even though their precise identification may be challenging due to the sparsity, weakness, or elusiveness of their signals. This method builds upon recent developments in conformal inference and integrates classical ideas from other areas, including multiple testing, locally most powerful and adaptive rank tests, and non-parametric large-sample asymptotics. The key innovation lies in developing a principled and effective approach for automatically choosing the most appropriate machine learning classifier and two-sample testing procedure for a given data set. The performance of our method is investigated through extensive empirical demonstrations, including an analysis of the LHCO high-energy particle collision data set.
\end{abstract}


\noindent {\bf Keywords}: Algorithms; Conformal Inference; Machine Learning; Multiple Comparisons; Nonparametric Methods.

\section{Introduction}  \label{sec:intro}

\subsection{Background and Motivation}

Outlier detection is a fundamental statistical problem with numerous applications, ranging from many areas of scientific research to fraud detection and security monitoring.
In the age of AI, its relevance is further growing due to concerns over the rise of potentially unrealistic synthetic data. 
The complexity of high-dimensional data has encouraged the use of sophisticated machine learning models for outlier detection, but these models often lack transparency and are prone to errors, complicating their reliability. 
This has spurred substantial interest in conformal inference \citep{vovk2005algorithmic}, which can provide principled statistical guarantees for any outlier detection algorithm under relatively mild assumptions.

While conformal inference is gaining momentum in both statistics and machine learning, its focus in the context of outlier detection has so far primarily been on individual-level outlier {\em identification}, where each data point is separately evaluated as a potential outlier. However, individual-level identification is not always feasible. In many applications, outliers may be too rare or weak to reach statistical significance \citep{donoho2004higher}, or they may exhibit unremarkable behavior in isolation but reveal anomalous patterns when analyzed collectively \citep{feroze2021group}, such as showing under-dispersion relative to inliers.

To extend the applicability of conformal inference to these particularly challenging settings, this paper introduces a novel approach for {\em collective outlier detection} that addresses two main challenges: (1) testing the global null hypothesis that a dataset---or a subset of it---contains no outliers, and (2) estimating the number of outliers present. Our method solves these interrelated problems by leveraging in an innovative way powerful ``black-box'' machine learning algorithms and highly flexible, data-driven test statistics.
This integrative approach is tailored to maximize power for the data at hand and offers reliable type-I error control under mild assumptions. 
Additionally, our method is scalable to large data sets. It builds on recent advances in conformal inference, while also revisiting and integrating several classical concepts from other areas of statistics. As we will demonstrate, this collective approach succeeds even in scenarios where individual outlier detection fails.

Our method is broadly applicable. Financial institutions, for instance, could use it to identify complex fraud schemes involving seemingly legitimate transactions. Similarly, cybersecurity systems could apply it to detect coordinated denial-of-service attacks \citep{ahmed2014network}. 
In high-energy physics, where collective outlier detection generally plays a central role  \citep{vatanen2012semi}, researchers could use our method to sift through massive datasets from particle decay sensors in the search for new particles. These signals are typically rare and weak, making individual-level detection impractical, but they present a promising use case for our approach, as previewed in Figure~\ref{fig:exp-lhco} and Section~\ref{sec:numerical-lhco}.

\subsection{Contributions and Outline} \label{sec:contributions}

We present ACODE, an {\em Automatic Conformal Outlier Detection and Enumeration} method. 
ACODE leverages two-sample rank tests applied to univariate {\em conformity scores}, which can be generated by any model trained to distinguish outliers from inliers. By employing split-conformal inference, ACODE converts these scores into statistical tests controlling the type-I error rate.
In addition, by utilizing the closed testing principle \citep{marcus1976closedtesting}, ACODE can construct simultaneous lower confidence bounds for the number of outliers within any subset of the test sample \citep{goemansolari2011}, while also offering a global test for outlier detection as a byproduct of outlier enumeration. 
While these high-level ideas are intuitive, the flexibility of the conformal inference framework introduces substantial complexity and a great deal of implementation freedom. This brings us to the two key methodological questions addressed in this paper.

The first question involves selecting the most effective classification algorithm for computing conformity scores. 
We specifically examine the trade-offs between existing approaches based on one-class classifiers \citep{bates2021testing} and positive-unlabeled learning via binary classification \citep{marandon2022machine}. 
The second question concerns the choice of rank test to determine statistical significance based on these scores.
As these issues are intertwined and dependent on specific data, we present a principled, data-driven solution. The effectiveness of this approach is showcased in Figure~\ref{fig:exp-lhco}, analyzing data from the 2020 Large Hadron Collider Olympics (LHCO) \citep{kasieczka2021lhc}.

\begin{figure}[!htb]
\centering
\includegraphics[width=\linewidth]{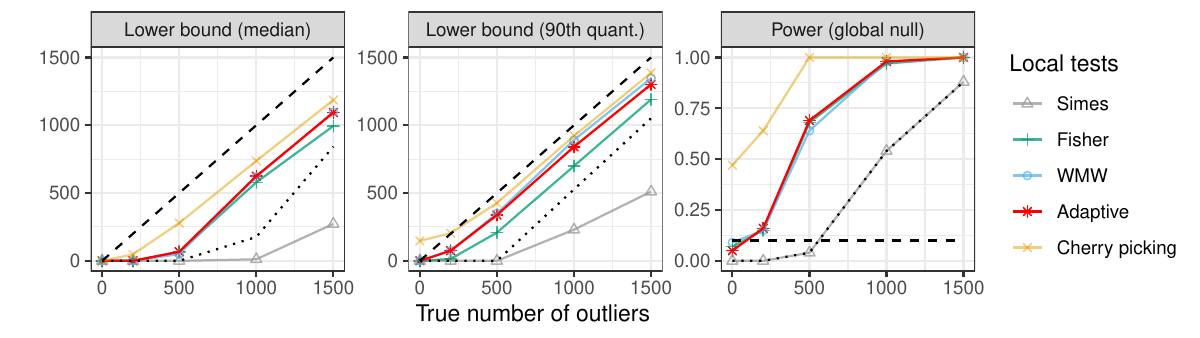}
\caption{Preview of performance of ACODE on the LHCO data.
Left: Median and 90th-quantile over repeated experiments for a 90\% lower confidence bound for the number of outliers.
Right: power against the global null hypothesis of no outliers at the 10\% level (horizontal line).
The results are shown as a function of the true number of outliers in a test set of cardinality 10,000.
ACODE utilizes a testing procedure that may be adaptively selected (red curve) or fixed (other solid curves).
Dotted curve: number of {\em individual} discoveries or power obtained by applying the Benjamini-Hochberg procedure (BH) to conformal $p$-values, controlling the false discovery rate below 10\%.
}
\label{fig:exp-lhco}
\end{figure}

As detailed in Section~\ref{sec:numerical-lhco}, our analysis of the LHCO data aims to (1) detect outliers, or unusual collision events, within a test sample, and (2) establish a 90\% lower confidence bound for the number of these outliers. Our method not only controls errors for both tasks but also provides insightful inferences. This illustrates ACODE's adaptability, as it selects the most suitable testing procedures in a data-driven manner. 

Additionally, Figure~\ref{fig:exp-lhco} highlights the challenge of conducting an adaptive data analysis without incurring selection bias. 
It compares ACODE's performance with that of a naive ``cherry-picking" approach that greedily tests various classification algorithms and testing procedures, selecting the one with the most appealing outcomes. Unsurprisingly, that can lead to inflated type-I errors because the same data are used both to select the classification algorithm and testing procedure and to assess statistical significance.
In contrast, ACODE is protected against selection bias while often achieving similarly high power.

Figure~\ref{fig:exp-lhco} also highlights the potential advantages of an approach specifically designed for collective outlier detection over standard alternatives for individual-level discovery.
The results show that ACODE leads to much more informative collective inferences compared to the Benjamini-Hochberg (BH) procedure for False Discovery Rate (FDR) control \citep{benjamini1995controlling} applied to individual conformal $p$-values \citep{bates2021testing, marandon2022machine}.
In fact, the BH procedure aims to identify outliers individually, and is generally not an effective solution for global testing or outlier enumeration.

The novel contributions of this paper are the following.
\begin{enumerate}
    \item We introduce ACODE, a flexible method for outlier enumeration with data-driven selection of the machine-learning classifier and the two-sample testing procedure. This method is built on a novel integration of existing theoretical results on closed testing (Proposition~\ref{prop:closed-testing}), locally most powerful rank tests (Theorem~\ref*{thm:LMPI}), and data-driven adaptivity via sample splitting (Proposition~\ref*{prop:adaptive_exchangeability}), and yields rigorous inferences.
    
    \item We present asymptotic approximations to the null distributions of the locally most powerful rank test and its adaptive version under exchangeable conformity scores (Theorems~\ref{thm:asymp_distr_LMPI-exch} and \ref{thm:asymp_distr_adaptive-exch}), extending earlier results that assume i.i.d.~scores.  
    
    \item We implement an adaptive rank test, which nonparametrically estimates the unknown outlier distribution, that attains the same local asymptotic power as the locally most powerful rank test with oracle knowledge of the outlier distribution (Theorem~\ref{thm:asymp_equivalence}).  

\end{enumerate}

This paper is structured as follows. Section~\ref{sec:related-work} reviews relevant prior literature, Section~\ref{sec:methods} describes our method, and Section~\ref{sec:numerical} demonstrates its performance on synthetic and real data.
Section~\ref{sec:discussion} concludes by suggesting some directions for future research.
Additional methodological aspects, implementation-specific details of the method, computational shortcuts, and further numerical results are provided in the Appendices.

\subsection{Related Work} \label{sec:related-work}

Conformal inference \citep{vovk2005algorithmic} is an active research topic, broadly seeking reliable uncertainty estimation for the predictions output by black-box machine learning models.
In the context of outlier detection \citep{laxhammar2015inductive,guan2022prediction}, prior works focused on individual-level identification, often aiming to control the FDR \citep{bates2021testing,marandon2022machine}.
By contrast, we focus on collective detection and enumeration; this tends to lead to more informative results when dealing with weak or sparse signals that are difficult to localize.
This approach also connects our work to a broader literature on multiple testing and large-scale inference.

Recent advances in large-scale inference have led to methods for signal detection and estimation of the proportion of non-null effects \citep{cai2017large}. 
In particular, the Higher Criticism test \citep{donoho2004higher} has been extended to provide simultaneous bounds on the false discovery proportion (FDP) via closed testing  \citep{ goeman2021onlyclosed}. However, the FDP bounds derived from Higher Criticism are valid only under the assumption of independent $p$-values, which does not hold in the conformal inference setting.

In this paper, we integrate conformal inference with closed testing to construct simultaneous lower confidence bounds for the number of outliers in any subset of the test set. A central challenge is selecting an effective local testing procedure suited to the data at hand. 

When a uniformly most powerful test is unavailable, a possible strategy is to maximize power locally around the null hypothesis. 
In a seminal paper, \citet{lehmann53power} showed that the Wilcoxon–Mann–Whitney (WMW) test \citep{wilcoxon1945individual, mannwhitney1947} is the \emph{locally most powerful rank} test under the alternative in which the outlier distribution is given by the maximum of two independent copies of the inlier distribution. This result was later extended by \citet{shiraishi1985local}, who characterized locally most powerful rank tests for arbitrary outlier distributions, but at the cost of requiring their specification.

In practice, however, the assumption that the outlier distribution is known is unrealistic. Rank tests are distribution-free under the null and therefore exact regardless of the underlying distribution, offering maximal protection against model misspecification. Their power, however, depends on the true outlier distribution and may be sensitive to misspecification, potentially yielding a locally optimal test for the wrong model.

This dependence can be mitigated through \emph{adaptation}.
The literature on \emph{adaptive rank tests} is extensive; here we focus on two basic paradigms, following \cite{sidak1999theory}.
The \emph{restrictive} paradigm \citep{hogg1975two} requires to fix a reasonable family of distributions for outliers and select one, based on the data, then perform the corresponding rank test. If the true outlier distribution is close to the defined family, the performed test tends to be nearly optimal. 
In the \emph{non-restrictive paradigm}, the outlier distribution is first estimated from the data and then used to construct a test that is optimal for the estimated model. Remarkably, this strategy can succeed: in Theorem \ref{thm:asymp_equivalence}, we show that the resulting adaptive rank test attains the same local asymptotic power as the locally most powerful rank test that has oracle knowledge of the outlier distribution.

ACODE combines these two approaches by considering both a prespecified family of rank tests and an adaptive rank test based on density estimation, and automatically selecting the procedure with the best empirical performance.

A typical concern with closed testing is the computational cost, generally exponential. However, efficient polynomial-time \emph{shortcuts} are often available \citep{goeman2019simultaneous, tian2023Large-scale}.
In Section~\ref{sec:closed-testing} and in Appendix~\ref*{app:shortcuts} we, respectively, present the closed testing method and the computational shortcuts that make our method feasible and scalable. As we will see, different local tests often require different shortcuts and have different costs.

\section{Methodology} \label{sec:methods}

Section~\ref{sec:setup} starts by formally stating the problem. Section~\ref{sec:conf-scores} introduces the learning component of ACODE, detailing how a classification model is trained and applied to rank the test points and a subset of the reference inliers. Section~\ref{sec:local-tests} discusses using these ranks to test local hypotheses about the presence of outliers within specific subsets of test points. Section~\ref{sec:closed-testing} describes the integration of these local tests into a closed testing framework, enabling the derivation of simultaneous confidence bounds for the number of outliers in any test data subset. Section~\ref{sec:conformal-closed-testing} summarizes the overall data analysis pipeline and introduces the enhancements described in Section~\ref{sec:tuning_new}, which make ACODE more adaptive.

\subsection{Setup and Problem Statement} \label{sec:setup}

Consider a {\em reference} data set $Z_1, \ldots, Z_{m} \in \mathbb{R}^d$ comprising $m$ observations each with dimensions $d \geq 1$, where $d$ may be large compared to $m$.
We assume these are {\em independent and identically distributed} (i.i.d.)~random samples from some unknown distribution $P_0$ on $\mathbb{R}^d$.
Consider also a {\em test} set $Z_{m+1}, \ldots, Z_{m+n}$ and, for each $j \in [n] := \{1,\ldots,n\}$, assume $Z_{m+j} \in \mathbb{R}^d$ is independently sampled from some unknown distribution $P_j$, which may or may not be equal to $P_0$. That is,
\begin{align} \label{eq:model}
  & Z_{1},\ldots,Z_{n} \overset{\text{i.i.d.}}{\sim} P_0,
  & Z_{m+j} \overset{\text{ind.}}{\sim} P_j, \; \forall j \in [m].
\end{align}

In the following, we will refer to the data points sampled from $P_0$ as {\em inliers} and those from any $P_j \neq P_0$ as {\em outliers}.
For each $j \in [m]$, define the null hypothesis $H_j: P_j = P_0$. Let $I_0 := \{j \in [n] : P_j = P_0 \}$ denote the subset of inliers in the test set.
Similarly, let $I_1 = [n] \setminus I_0$ denote the subset of outliers in the test set.

Prior work on conformal inference for outlier detection \citep{bates2021testing, marandon2022machine} has primarily focused on testing individual-level hypotheses $H_j$ for each $j \in [m]$, with the goal of controlling the FDR. However, as discussed in Section~\ref{sec:intro}, there are many scenarios where the power to reject these individual-level hypotheses $H_j$ is very low, highlighting the need for an alternative approach.

In this paper, we address two key tasks related to collective outlier detection and enumeration. The first task, {\em outlier detection}, involves testing the intersection hypothesis $H_{S} := \cap_{j \in S} H_j$, which posits that a subset $S \subseteq [n]$ of test points contains no outliers. The second task, {\em outlier enumeration}, focuses on estimating the number of outliers within a subset $S \subseteq [n]$, specifically by providing a lower confidence bound, without necessarily identifying them.
For both tasks, we consider cases where $S$ is either fixed or data-driven. In the latter scenario, we will utilize the closed testing principle 
to obtain {\em simultaneous} inferences that remain valid for all possible subsets $S$ \citep{goemansolari2011}. This approach gives practitioners flexible tools for interactive data analysis.

To deal with the possibly high-dimensional nature of the data and avoid parametric assumptions about their distribution, we will tackle the aforementioned tasks using a conformal inference framework. Essentially, we will extract one-dimensional {\em conformity} scores from the data by leveraging powerful machine learning algorithms trained to distinguish outliers from inliers. Then, we will obtain {\em distribution-free} inferences by applying appropriate {\em rank tests} to these scores.
With this overall strategy in view, the following section begins with a concise overview of the relevant conformal inference background.

\subsection{Computing Conformity Scores and Conformal $p$-values} \label{sec:conf-scores}

Let $(X_1,\ldots,X_m)$ and $(Y_1,\ldots,Y_n)$ indicate two vectors of conformity scores obtained by applying a function $\hat{s}: \mathbb{R}^d \rightarrow \mathbb{R}$ to each calibration and test point, respectively.
In general, $\hat{s}$ may depend on an independent training data set $\mathcal{D}_{\text{train}}$ and possibly also on the unordered collection of observations $\{Z_1, \ldots, Z_{n+m}\}$.
That is, for any $i \in [m]$ and $j \in [n]$,
\begin{align} \label{eq:scores}
& X_i = \hat{s}(Z_i; \mathcal{D}_{\text{train}}, \{Z_1, \ldots, Z_{n+m}\}),
& Y_j = \hat{s}(Z_{m+j}; \mathcal{D}_{\text{train}}, \{Z_1, \ldots, Z_{n+m}\}).
\end{align}
We refer to Section~\ref{sec:conformal-closed-testing} for further details on the computation of the conformity scores.
Here, it is important to point out that Equation~\eqref{eq:scores}, combined with the data generating model defined in~\eqref{eq:model}, implies that the scores $(X_1,\ldots,X_m, (Y_j)_{j \in I_0})$ are exchangeable \citep{marandon2022machine}.
Additionally, it is easy to see that $(X_1,\ldots,X_m, (Y_j)_{j \in I_0})$ are mutually independent if the function $\hat{s}$ does not depend on $\{Z_1, \ldots, Z_{n+m}\}$.

A typical approach is to leverage these scores to compute the following {\em conformal $p$-value} $p_j$ for the null hypothesis $H_j$, for all $j \in [n]$:
\begin{align}\label{eq-permp}
p_j &= \frac{1}{(m+1)}\left(1+ \sum_{i = 1}^{m} \mathds{1}\{ X_i \geq Y_{j} \}\right).
\end{align}
As long as $(X_1,\ldots,X_m, (Y_i)_{i \in I_0})$ are exchangeable, $p_j$ is a valid $p$-value for $H_j$, in the sense that $\mathbb{P}[p_j \leq \alpha \; ;\; H_j] \leq \alpha$ for all $\alpha \in [0,1]$.
Further, $p_j$ is not too conservative, in the sense that $\mathbb{P}[p_j \leq \alpha \; ;\; H_j]$ is not far from $\alpha$, as long as $m$ is large and the conformity scores are almost surely distinct; the latter would be a mild assumption (since one can always add a small amount of independent noise to $\hat{s}$) but is not required.

Testing the individual hypotheses $H_1, \ldots, H_n$ is a ``many-to-one comparisons to a control'' problem \citep{dunnett1955multiple}.
Since the $n$ comparisons of $(Y_j)$ to $(X_1,\ldots,X_m)$ all use the same control sample, the $p$-values $p_1,\ldots,p_n$ are mutually dependent, complicating the application of multiple testing procedures.
\citet{bates2021testing} and \citet{marandon2022machine} proved that $(p_1, \ldots, p_n)$ are \emph{positively regression dependent on a subset} (PRDS) \citep{benjamini2001control}, which implies BH controls the FDR.
Further, \citet{bates2021testing} and \citet{marandon2022machine} have also studied the implications of such positive dependence on the validity of $p$-value based methods designed to test intersection null hypotheses $H_S$, where $S \subseteq [n]$.
This is the starting point of the next section.

\subsection{Local Tests for Outlier Detection} \label{sec:local-tests}

\subsubsection{Existing Approaches}

Consider the null hypothesis, denoted as $H_S$, that a {\em fixed} subset $S \subseteq [n]$ of test points contains no outliers.
As outlined below, there are many tests available for this hypothesis, each with unique strengths and weaknesses. Given the limited scope of existing optimality results and the unfeasibility of predicting which method will perform best on a specific data set, ACODE is designed to integrate a {\em toolbox} of such testing procedures. It then selects the most effective approach in a principled data-driven manner (ref.~Section~\ref{sec:tuning_new}). To decide what to include in this toolbox, we review some relevant literature below.

One approach to testing $H_S$ is to apply standard aggregation methods to the conformal $p$-values.
The combination of $p$-values dates back to the 1930s with Fisher, Tippett, and Pearson, originally under an independence assumption \citep[see][for a review]{owen2009karl}.
Moving beyond independence makes the problem more challenging but is necessary in the context of conformal inference \citep{bates2021testing}.

\citet{gazin2024transductive} characterize the dependence structure of conformal $p$-values, which depends only on the ranks of the conformity scores. Under the null hypothesis, these ranks are exchangeable. Ignoring this structure by using aggregation methods that are valid under arbitrary dependence leads to \emph{inadmissible} procedures---this follows from \citet{vovk2022admissible}, which show that universally valid methods are dominated by the Simes procedure \citep{simes1986improved}, which is applicable to conformal $p$-values \citep{sarkar2008simes}.

While the Simes method tends to work well in scenarios featuring rare and strong signals, it is itself also inadmissible (see Appendix~\ref*{app:local-psimes}), and in practice it is often less powerful than alternative approaches because any $p$-value combination method is optimal against some alternative \citep{birnbaum1954combining}.
For instance, Fisher's method, recently refined by \cite{bates2021testing} to accommodate the positive dependencies of conformal $p$-values, is well-suited for scenarios with numerous weak signals \citep{heard2018choosing}.

A related optimality result is due to \citet{lehmann53power}, which establishes that the WMW test is the \emph{locally most powerful rank} (LMPR) test for the global null hypothesis $\theta=0$ under a nonparametric alternative in which the test scores $Y_j$ follow the mixture distribution 
$(1-\theta)F + \theta F^2$. Here, $\theta \in [0,1]$ is the proportion of outliers and $F$ denotes the cumulative distribution function of the calibration scores $X_i$.
In this model the outliers have \emph{weak} signals, as $F^2$ is the distribution of the maximum of two independent random variables with distribution $F$, where $\mathbb{P}(X_i < Y_j) = 2/3$ for any $F$.
As shown in Appendix~\ref*{app:local-wmw}, the Mann-Whitney statistic for testing
$H_S$ can be formulated as a combination of conformal $p$-values:
$T^{\text{MW}}_S  = (m+1)  \sum_{j \in S}( 1- p_j)$.
Since $T^{\text{MW}}_S$ is a monotonic function of $\bar{p}_S = |S|^{-1}\sum_{j \in S}p_j$, the test statistic can be simplified to the average of conformal $p$-values $\bar{p}_S$.

Finally, permutation tests offer significant flexibility, but in addition to their higher computational cost, which makes them less well-suited to be integrated within a closed-testing framework, they suffer from the same limitation as the aforementioned approaches that in practice it is unclear which test statistic should be used for a given data set.

In this paper, we focus on applying ACODE in conjunction with Simes test, Fisher's combination method, and WMW test, as previewed in Figure~\ref{fig:exp-lhco}. These procedures are chosen for their ease of use within our conformal inference framework and for their complementary strengths against a range of alternative hypotheses, including both sparse/strong and dense/weak effects. We detail these methods in Table~\ref{tab:standard_approaches} in Appendix~\ref*{app:local-testing}, discussing their underlying assumptions, precision, and the scenarios in which they are most effective. Additional, although non-exhaustive, information about existing local testing procedures can be found in Appendix~\ref*{app:local-testing}.

The empirical and theoretical strengths of the WMW test also motivate us to explore in the next section a flexible extension of this approach, inspired by \citet{shiraishi1985local}, that enjoys similar theoretical properties and sometimes achieves higher power in practice.

\subsubsection{Shirashi's Locally Most Powerful Rank (LMPR) Test} \label{sec:new_local-g-wmw}

In this section, let us imagine the calibration scores $X_1,\ldots,X_m$ are drawn independently from an unknown distribution $F$ and the test scores $Y_1,\ldots,Y_n$ are drawn independently from a mixture distribution $(1-\theta)F + \theta G(F)$, where $\theta \in [0,1]$ is the unknown proportion of outliers and $G$ is a distribution function on $[0,1]$ with density function $g$. The global null hypothesis $H_{[n]}: \cap_{j \in [n]} H_j$ of no outliers implies $H_0:\theta = 0$ under this mixture model.

As a starting point, let us assume $g$ is known.
In practice, however, the test discussed in this section can be implemented using a data-driven estimate $\hat{g}$ of $g$, as detailed in Appendix~\ref*{app:outlier-distr}.
Importantly, we will show in Proposition~\ref{prop:adaptive_exchangeability} that this test remains valid when using $\hat{g}$ in place of $g$, provided that $\hat{g}$ is estimated in a manner that preserves the exchangeability between the calibration and test scores.

Let $R_1,\ldots,R_N$ denote the ranks of the pooled sample $(X_1,\ldots,X_m,Y_1,\ldots,Y_n)$, where $N=m+n$.
For any distribution $G$ on $[0,1]$ with density function $g$, define the statistic
\begin{align} \label{eq:def-TG}
  T^{\text{g}} & = \sum_{j = m+1}^{N} \E{g(U_{N}^{(R_{j})})},
\end{align}
where $U_{N}^{(R_j)}$ is the $R_j$-th order statistic in a sample of uniform random variables of size $N$.  The test rejects $H_{0}:\theta = 0$ at level $\alpha \in (0,1)$  if $T^{\text{g}}$ is larger than a  suitable critical value $c^{\text{g}}_{\alpha}(m,n)$, discussed later.
The indicator of this rejection event is:
\begin{align}\label{eq:generalized-WMW-test}
    \phi^{\mathrm{g}} &= \mathds{1}\left\{ T^{\text{g}}  > c^{\text{g}}_{\alpha}(m,n) \right\}.
\end{align}

In the special case of $G(F)=F^2$, the statistic $T^{\text{g}}$ in (\ref{eq:def-TG}) reduces to $\frac{2}{N+1} \cdot T^{\text{WMW}}$, where
\begin{equation*}\label{eq:classic_WMW}
    T^{\text{WMW}} = \sum_{j=m+1}^{N} R_j
\end{equation*}
is the classical WMW statistic. The corresponding test is therefore equivalent to the WMW test, since the multiplicative constant can be ignored.
\cite{lehmann53power} proved that in the mixture model $Y_i \sim (1-\theta)F + \theta F^{2}$, the WMW test
is the LMPR test of $H_0:\theta = 0$, i.e.~it is most powerful within the class of all rank tests at level $\alpha$ uniformly in a small neighbourhood of 0.
\cite{shiraishi1985local} extended the result by \cite{lehmann53power} and proved that the 
LMPR test for $H_0$
given any continuous density function $g$ is the one in Equation~\eqref{eq:generalized-WMW-test}; see Theorem~\ref{thm:LMPI} in Appendix~\ref*{app:local-shirashi}.

In Appendix~\ref*{app:local-shirashi}, we study the power of Shiraishi's LMPR test and compare its relative efficiency with that of the Neyman-Pearson optimal test which assumes oracle knowledge of $F$, $G$ and $\theta$. 
This analysis reveals that Shiraishi's test is most powerful for balanced calibration and test sample sizes, which provides a potentially useful practical guideline. Moreover, it shows that the asymptotic relative efficiency (ARE), or Pitman efficiency, of Shiraishi's LMPR test relative to the LRT optimal test equals $1-\lambda$ if $n/N \rightarrow \lambda \in (0,1)$.

\subsubsection{Asymptotic null distribution of Shiraishi’s test without independence} \label{sec:asymptotic_rank}

The critical value $c^{\text{g}}_{\alpha}(m,n)$ in~\eqref{eq:generalized-WMW-test} may be obtained from the permutation distribution or from a large-sample approximation. The following theorem extends the asymptotic approximation of the null distribution of $T^{\text{g}}$ given by \citet{shiraishi1985local}, which assumed i.i.d.\ scores, to the weaker assumption of exchangeable scores. 
This extension is useful because it allows Shiraishi’s test to be applied in combination with the ACODE methodology presented in this paper, which involves a key step that breaks independence while maintaining exchangeability.
The proof, in Appendix~\ref*{app:proofs}, follows from a general formulation of the finite-population central limit theorem \citep{hajek1960limiting, li2017general}. 

\begin{theorem}
\label{thm:asymp_distr_LMPI-exch}
    As $N\rightarrow \infty$, suppose that
    \begin{eqnarray}\label{eq:regulation_condition}
        \frac{\max_{1\leq r \leq N}(\E{g(U_N^{(r)})} -\mu_N)^2}{\min(n,m) \sigma^2_N} \rightarrow 0,
    \end{eqnarray}
    where $$\mu_N = \frac1{N}\sum_{r\in[N]}\E{g(U_N^{(r)})},\qquad  
    \sigma^2_N = \frac1{N-1} \sum_{r \in [N]}( \E{g(U_N^{(r)})} -\mu_N )^2.$$ 
    Then, under the null hypothesis $H^*_0: \mathrm{the\,\,vector\,\,}(X_1,\ldots,X_m,Y_1,\ldots,Y_n)\,\,\mathrm{is\,\,exchangeable},$ the standardized Shiraishi statistic converges in distribution to a standard normal,
    \begin{eqnarray}\label{eq:rescaled-LMPI}
        \frac{T^{\mathrm{g}} - n\mu_N}{\sqrt{\frac{m n}{N} \sigma^2_N}} \xrightarrow{d} N(0,1).
    \end{eqnarray}
\end{theorem}

Note that as $N\rightarrow \infty$, condition \eqref{eq:regulation_condition} implies  $m\rightarrow \infty$ and $n\rightarrow \infty$ \citep{li2017general}. 
Moreover, the asymptotic approximation of the null distribution of $T^{\text{g}}$ given in \citet[Theorem 1]{shiraishi1985local} requires that $g$ be bounded, which in turn implies~\eqref{eq:regulation_condition}.

As a consequence of Theorem~\ref{thm:asymp_distr_LMPI-exch}, when considering any $S\subseteq [n]$ and denoting by $s$ its cardinality, under the null hypothesis $H_{S}: \cap_{j \in S} H_j$, the test statistic
\begin{align} \label{eq:def-TG-S}
  T^{\text{g}}_S & = \sum_{j = m+1}^{m+s} \E{g(U_{m+s}^{(R_j^S)})},
\end{align}
where $R^S_1,\ldots,R^{S}_{m+s}$ denote the ranks of $(X_1,\ldots,X_m,(Y_j)_{j\in S})$ and $U_{m+s}^{(R_j^S)}$ is the $R^S_j$-th order statistic in a sample of uniform random variables of size $m+s$, is asymptotically normal with mean 
\begin{equation*}
    s\mu_{m+s} = \frac{s}{m+s}\sum_{r\in[m+s]} \E{g(U^{(r)}_{m+s})}
\end{equation*}
and variance
\begin{equation*}
    \frac{ms}{m+s}\sigma^2_{m+s} = \frac{m s}{(m+s)(m+s-1)}\sum_{r\in[m+s]} \left(\E{g(U^{(r)}_{m+s})} - \mu_{m+s}\right)^2.
\end{equation*}

\subsubsection{Shirashi's Adaptive Rank Test} \label{sec:adaptive_rank}

The oracle procedure studied in Theorem~\ref{thm:asymp_distr_LMPI-exch} can be translated into a practical test by replacing the unknown density $g$ with a suitable empirical estimate $\hat g$. In Appendix~\ref*{app:outlier-distr}, we establish finite-sample validity of the resulting adaptive rank test (Proposition~\ref*{prop:adaptive_exchangeability}) and derive an asymptotic approximation to its null distribution (Theorem~\ref*{thm:asymp_distr_adaptive-exch}).

The following theorem, proved in Appendix~\ref*{app:proofs}, implies that Shiraishi's adaptive rank test is asymptotically equivalent to LMPR test in terms of local asymptotic power. Specifically, we first show that the difference between the oracle and adaptive rank statistics is negligible under the null hypothesis.
By contiguity, this negligibility extends to local alternatives, and it follows that the adaptive rank test attains the same local asymptotic power function as the LMPR test with oracle knowledge of $g$.
To simplify notation, we replace  $\mathbb{E}[ g( U^{(r)}_N ) ]$ by the approximation $g\!\left(\frac{r}{N+1}\right)$. The difference  $\mathbb{E}[ g( U^{(r)}_N ) ]-g\!\left(\frac{r}{N+1}\right)$  is asymptotically negligible and does not affect the asymptotic results. 

\begin{theorem}
\label{thm:asymp_equivalence}
Let \(X_{1}, \ldots, X_{m} \stackrel{\mathrm{i.i.d.}}{\sim} F\) and 
\(Y_{1}, \ldots, Y_{n} \stackrel{\mathrm{i.i.d.}}{\sim} (1-\theta)F + \theta G(F)\),
where \(F\) is continuous and \(G\) admits a continuous and bounded density \(g\) on \([0,1]\).
Let \(N = m + n\) and assume that \(n/N \to \lambda \in (0,1)\) as \(N \to \infty\).

Let $\hat g_N$ be an estimate of \(g\) based on the unordered pooled sample
\(\{W_1,\ldots,W_N\}\). If
\begin{equation}\label{eq-L2consistency}
\frac{1}{N}\sum_{r=1}^N
\Bigl[
\hat g_N\!\left(\frac{r}{N+1}\right)
-
g\!\left(\frac{r}{N+1}\right)
\Bigr]^2
\;\xrightarrow{\mathbb{P}}\; 0 ,
\end{equation}
then, under the null hypothesis \(H_0:\theta = 0\),
\[
\frac{T^{\hat{\text{g}}} - n\hat{\mu}_N}{\sqrt{N}} -  \frac{T^{\text{g}} - n\mu_N}{\sqrt{N}}\;\xrightarrow{ \mathbb{P}_0 } 0,
\]
where $T^{\hat{\text{g}}} = \sum_{j = m+1}^{N} \hat g_N\!\left(\frac{R_j}{N+1}\right)$ and $\hat{\mu}_N = N^{-1} \sum_{r \in [N]} \hat g_N\!\left(\frac{r}{N+1}\right)$. 
\end{theorem}

\begin{corollary} \label{cor:asymp_equivalence-power}
As a consequence of Theorem~\ref{thm:asymp_equivalence}, the Shiraishi's adaptive rank statistic attains the same local asymptotic power function as the oracle LMPR statistic.
\end{corollary}

In Appendix~\ref*{app:adaptive_localpower}, we show that our data-driven estimate of $\hat{g}$ can be implemented using the approach of \citet{patra2016estimation}, in which case condition \eqref{eq-L2consistency} is satisfied. Thus, we can leverage an adaptive rank test with the same local asymptotic power as the oracle LMPR test. Moreover, this test can be practically applied within our closed testing methodology described below thanks to the closed-testing shortcut described in Appendix~\ref*{app:shortcuts-G}.

The desirable theoretical properties of this adaptive test are corroborated by the empirical results presented in Section~\ref{sec:numerical}, where we show that in some scenarios of interest this adaptive test can achieve much higher power compared to other local testing methods, such as Simes', Fisher's, and the WMW test.
In any case, the true strength of our ACODE method lies in its flexibility, as it does not depend on any single testing procedure but can instead dynamically select the most effective approach based on the data at hand.

\subsection{Estimating the Number of Outliers via Closed Testing} \label{sec:closed-testing}

Closed testing \citep{marcus1976closedtesting} was initially proposed within the scope of family-wise error rate control, but it is also useful to obtain simultaneous bounds for the false discovery proportion \citep{goemansolari2011}. In this paper, we apply closed testing to construct a $(1-\alpha)$ {\em simultaneous} lower confidence bound $d(S)$ for the number of outliers in any subset $S \subseteq [n]$ of the test set.
This approach is outlined by Algorithm~\ref{alg:closed-testing}.
Although this procedure may seem computationally unfeasible, since it generally requires evaluating $\mathcal{O}(\exp(n))$ tests, it can be implemented efficiently in many useful cases, as discussed later.

\begin{algorithm}[!htb]
\SetKwInOut{Input}{Input}
\Input{Individual hypotheses $H_j$, for all $j \in [n]$; a method $\phi$ for carrying out {\em local} tests $\phi_J$ of $H_J$, for any $J \subseteq [n]$; significance level $\alpha \in (0,1)$.}

For each $J \subseteq [n]$, test $H_J$ at level $\alpha$; let $\phi_J \in \{0,1\}$ denote the rejection indicator.

For each $J \subseteq [n]$, adjust the local test by setting $\bar{\phi}_J := \min\{\phi_K: K\supseteq J\}$.

\textbf{Query \(S\subseteq[n]\):} compute the lower bound for the number of true discoveries in $S$,
\begin{align}\label{eq-dS}
d(S) &:= \min_{K \subseteq S} \{|S \setminus K|:\bar{\phi}_K = 0\}.
\end{align}

\SetKwInOut{Output}{Output}
\Output{A $(1-\alpha)$ lower bound $d(S)$ for any queried subset $S \subseteq [n]$.}
\caption{Closed testing blueprint for simultaneous outlier enumeration} \label{alg:closed-testing}
\end{algorithm}

The first step of Algorithm~\ref{alg:closed-testing} tests all possible intersection hypotheses $H_J$. We refer to these tests as the \emph{local tests}.
In the second step, a {\em multiplicity adjustment} is applied to all local tests. The latter results in $H_J$
being rejected, setting
$\bar{\phi}_J=1$, if and only if $H_{K}$ was locally rejected for all supersets $K$ of $J$.
Finally, $d(S)$ is given by~\eqref{eq-dS}.
This is easiest to explain in the special case of $S=[n]$, when the lower confidence bound for the total number of outliers is given by the difference between the total number of test points, $n$, and the cardinality of the largest subset for which the multiplicity-adjusted local test failed to reject the null hypothesis. For example, if the global null $H_{[n]}$ is rejected but there exists a subset $K \subseteq [n]$ with cardinality $|K|=n-1$ for which $\phi_K=0$, then $d([n]) = 1$.

The following result, due to \citet{goemansolari2011}, guarantees that closed testing leads to simultaneously valid confidence bounds as long as the local tests are valid. 
\begin{proposition}[\citet{goemansolari2011}] \label{prop:closed-testing}
If Algorithm~\ref{alg:closed-testing} is applied using a valid local test $\phi$ satisfying $\mathbb{P}[\phi_{I_0} = 1 \; ;\; H_{I_0}] \leq \alpha$, then the output lower bounds $d(S)$ satisfy:
\begin{align}\label{eq-TDguarantee}
  \mathbb{P}\left[ d(S) \leq |I_1 \cap S| \text{ for all } S \subseteq [n] \right] \geq 1-\alpha.
\end{align}
\end{proposition}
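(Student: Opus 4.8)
The plan is to reduce the simultaneous, probabilistic claim to a single random event combined with a purely deterministic coherence argument coming from the closed-testing construction. The only probabilistic input available is the level guarantee on the local test of the true inlier set, $\mathbb{P}[\phi_{I_0} = 1 \mid H_{I_0}] \le \alpha$, so I would aim to show that the whole event $\{ d(S) \le |I_1 \cap S| \text{ for all } S \subseteq [n]\}$ contains the single event $\{\phi_{I_0} = 0\}$. Since $H_{I_0}$ is true by the very definition of $I_0$ (every $j \in I_0$ has $P_j = P_0$), this containment would immediately deliver the desired $1-\alpha$ bound.

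The heart of the argument is therefore deterministic, and I would fix an arbitrary $S \subseteq [n]$ and work on the event $\{\phi_{I_0} = 0\}$. The key idea is to exhibit a single feasible set in the minimization~\eqref{eq-dS} that already attains the target value, rather than reasoning about the minimum abstractly. The natural candidate is $K = S \cap I_0 = S \setminus I_1$, the inliers inside $S$. First I would check that this $K$ is feasible, i.e.\ that $\bar{\phi}_K = 0$. Since $\bar{\phi}_K = \min\{\phi_L : L \supseteq K\}$, it suffices to produce one superset of $K$ that is not locally rejected; the superset $L = I_0$ does the job, because $K = S \cap I_0 \subseteq I_0$ and, on the good event, $\phi_{I_0} = 0$, so that $\bar{\phi}_K \le \phi_{I_0} = 0$. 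As $K \subseteq S$ and $\bar{\phi}_K = 0$, the set $K$ enters the feasible region of the minimization defining $d(S)$, giving $d(S) \le |S \setminus K| = |S \setminus (S \cap I_0)| = |S \cap I_1| = |I_1 \cap S|$.

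Since the choice of $S$ was arbitrary, I would conclude the inclusion $\{\phi_{I_0} = 0\} \subseteq \{ d(S) \le |I_1 \cap S| \ \forall S\}$ and then pass to probabilities, obtaining $\mathbb{P}[d(S) \le |I_1 \cap S| \ \forall S] \ge \mathbb{P}[\phi_{I_0} = 0] \ge 1 - \alpha$, where the last inequality is exactly the assumed level guarantee applied to the true hypothesis $H_{I_0}$. I expect the main obstacle to be conceptual rather than computational: recognizing that only the single set $I_0$ must be controlled in probability, and then pinning down the correct witness $K = S \cap I_0$ together with the correct certifying superset $I_0$ in the adjustment step. A secondary point worth verifying is that the minimization in~\eqref{eq-dS} is nonempty on the good event, so that $d(S)$ is well defined; this holds because $K = S \cap I_0$ is itself always feasible there.
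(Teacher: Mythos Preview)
Your argument is correct and is precisely the standard closed-testing proof due to \citet{goemansolari2011}: reduce to the single event $\{\phi_{I_0}=0\}$, then for each $S$ witness feasibility in~\eqref{eq-dS} with $K=S\cap I_0$ via the superset $I_0$. The paper itself does not supply a proof, merely citing the original reference, so there is nothing further to compare.
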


The availability of computational shortcuts for Algorithm~\ref{alg:closed-testing} hinges on the choice of local tests \citep{goeman2019simultaneous, goeman2021onlyclosed, tian2023Large-scale}.
See Table~\ref*{tab:comp-costs} in Appendix~\ref*{app:shortcuts} for a summary of relevant shortcuts and their computational complexity.
Appendix~\ref*{app:shortcuts-simes} reviews the shortcut of \citet{goeman2019simultaneous}, which reduces the cost of Algorithm~\ref{alg:closed-testing} to $\mathcal{O}(n \log n)$ in the case of  Simes local tests. Appendix~\ref*{app:shortcuts-adaptive-simes} presents a shortcut for an adaptive version of the Simes local test. Appendix~\ref*{app:shortcuts-G} introduces a shortcut for the Shirashi test described in Section~\ref{sec:new_local-g-wmw}.
Appendix~\ref*{app:shortcuts-tian} reviews the strategy of \citet{tian2023Large-scale}, which provides similar shortcuts for a family of local tests including Fisher's method and the WMW test.

\subsection{Conformalized Closed Testing} \label{sec:conformal-closed-testing}

We now present Algorithm~\ref{alg:conformalized-closed-testing}, which summarizes the analysis pipeline described above, starting from the training of a classifier.
In short, this applies the closed testing method defined in Algorithm~\ref{alg:closed-testing} based on local tests that take as input the calibration and test scores, assuming a practical computational shortcut is available. Therefore, Algorithm~\ref{alg:conformalized-closed-testing} outputs a simultaneous lower bound $d(S)$ guaranteed to satisfy~\eqref{eq-TDguarantee}.

Valid conformity scores can be obtained through various methods. A common approach involves using a one-class classifier trained on an independent set of inliers \citep{bates2021testing}. Alternatively, one can employ a binary classifier via positive-unlabeled learning \citep{marandon2022machine}.
Although the optimal scoring function can, in principle, be approximated through binary classification \citep[Lemma 4.3; see also see Appendix~A4]{marandon2022machine},  in practice there are settings in which one-class classification approaches may perform better \citep{liang2022integrative}. 
Another setting in which binary classifiers can be used is when labeled outliers are available \citep{liang2022integrative}; however, we focus for simplicity on scenarios where all labeled data consist of inliers.

 \begin{algorithm}[!htb]
          \SetKwInOut{Input}{Input}
          \Input{Inlier data $\mathcal{D}^{\text{train}} = \{Z^{\text{train}}_1, \ldots, Z^{\text{train}}_{m_{\text{train}}}\}$ and $D^{\text{cal}} = \{Z_1, \ldots, Z_{m}\}$.\\
            Test data $D^{\text{test}} = \{Z_{m+1}, \ldots, Z_{m+n}\}$. Significance level $\alpha \in (0,1)$. \\
            Machine learning algorithm $\mathcal{A}$ for one-class or binary classification.\\
            Chosen local testing method $\phi$; e.g., Simes, WMW, etc.
          }
        
          \uIf{$\mathcal{A}$ is a one-class classification algorithm}{
            Train $\mathcal{A}$ using the data in $(Z^{\text{train}}_1, \ldots, Z^{\text{train}}_{m_{\text{train}}})$.\\
          }
          \uElseIf{$\mathcal{A}$ is a binary classification algorithm}{
            Train $\mathcal{A}$ using $(Z^{\text{train}}_1, \ldots, Z^{\text{train}}_{m_{\text{train}}})$ and $\{Z_1, \ldots, Z_{m}, Z_{m+1}, \ldots, Z_{m+n}\}$.\\
          }
          Apply $\mathcal{A}$ to evaluate the calibration and test scores $(X_1,\ldots,X_m)$ and $(Y_1,\ldots,Y_n)$.
          
          Compute the ranks $R_1,\ldots,R_{n+m}$ of the pooled sample and the conformal $p$-values.

          \textbf{Query  \(S\subseteq[n]\):} compute $d(S)$ using Algorithm~\ref{alg:closed-testing}
          (ref.~Appendix~\ref*{app:shortcuts}) and the closed-testing adjusted $p$-value $\hat{p}(S)$ for the local test $\bar{\phi}_S$ of $H_S$.\\

        \SetKwInOut{Output}{Output}
        \Output{A $(1-\alpha)$ lower bound $d(S)$ for the number of outliers in $S$, and a $p$-value $\hat p(S)$ for testing $H_S$, for any queried subset $S \subseteq [n]$.}          
        \caption{Conformal Outlier Detection and Enumeration} \label{alg:conformalized-closed-testing}
    \end{algorithm}

Proposition~\ref{prop:closed-testing} implies Algorithm~\ref{alg:conformalized-closed-testing} produces valid inferences as long as the local test $\phi_{I_0}$ is valid, and our method is precisely designed to achieve this (ref.~Section~\ref{sec:local-tests}).
While the lower bound $d(S)$ output by Algorithm~\ref{alg:conformalized-closed-testing} is {\em simultaneously} valid for all possible subsets $S \subseteq [n]$, in the sense of~\eqref{eq-TDguarantee}, this does not mean Algorithm~\ref{alg:conformalized-closed-testing} needs to explicitly output $d(S)$ for all $S \subseteq [n]$.
On the contrary, one would typically apply Algorithm~\ref{alg:conformalized-closed-testing} focusing on a particular (but possibly data-driven) choice of $S$, as demonstrated in Section~\ref{sec:numerical}.
In any case, the components of Algorithm~\ref{alg:conformalized-closed-testing} involving the model training and the computation of the conformity scores only need to be applied once, irrespective of $S$.

We conclude this section by emphasizing that the flexibility of Algorithm~\ref{alg:conformalized-closed-testing}, which can accommodate a variety of classifiers and testing procedures, introduces some challenges.
In particular, different classification algorithms and testing procedure may result in significant performance variation across different data sets, and it is unclear a priori how to maximize power.
Unfortunately, leaving too much latitude to practitioners is not always desirable, as it may inadvertently encourage ``cherry picking" behaviors that, as previously illustrated in Figure~\ref{fig:exp-lhco}, can result in invalid inferences.
This issue motivates the extension introduced in the next section. This extension will channel the flexibility of Algorithm~\ref{alg:conformalized-closed-testing} into a principled, automatic method that often achieves competitively high power in practice while adding a layer of protection against the risks of human-driven selection bias.

\subsection{Data-Driven Tuning} \label{sec:tuning_new}

We extend Algorithm~\ref{alg:conformalized-closed-testing} to leverage a potentially diverse suite of classifiers and local testing procedures, with the goal of approximately maximizing power while controlling type-I error. Our solution, outlined in Algorithm~\ref{alg:acode_oracle}, is inspired by approaches proposed by \citet{liang2022integrative} and \citet{marandon2022machine} in the context of individual outlier identification under FDR control. The key idea is to embed a data-driven tuning step within an additional layer of sample splitting. As detailed below, this extra sample splitting is essential to preserve the validity of local tests after selecting the most effective classifier and local testing procedure; at the same time, provided the available sample size is sufficiently large, it typically does not lead to a substantial loss of power relative to data-driven selection without splitting (an intuitive but theoretically invalid benchmark).

\begin{algorithm}[!htb]
  \SetKwInOut{Input}{Input}
 \Input{Inlier data $\mathcal{D}^{\text{train}} = \{Z^{\text{train}}_1, \ldots, Z^{\text{train}}_{m_{\text{train}}}\}$, $D^{\text{cal}} = \{Z_1, \ldots, Z_{m}\}$, $\mathcal{D}^{\text{tune}} = \{Z^{\text{tune}}_1, \ldots, Z^{\text{tune}}_{m_{\text{tune}}}\}$. \\
   Test data $D^{\text{test}} = \{Z_{m+1}, \ldots, Z_{m+n}\}$,
   Significance level $\alpha \in (0,1)$. \\
   A list of algorithms $\mathcal{A}^{(1)}, \ldots, \mathcal{A}^{(K)}$ for one-class or binary classification.\\
   A list of local testing methods $\phi^{(1)}, \ldots, \phi^{(L)}$; e.g., Simes, WMW, etc.\\
   A subset of interest $R$, with default choice $R=[n]$.
 }
  
  Define $\tilde{\mathcal{D}}^{\text{test}} = \{Z_1, \ldots, Z_{m}\} \cup \{ Z_{m+1}, \ldots, Z_{m+n} \}$.\\
    \For{each $k \in [K]$}{

        \For{each $l \in [L]$}{
        Compute $d_{k,l}(R)$ and $\hat{p}_{k,l}(R)$ by applying Algorithm~\ref{alg:conformalized-closed-testing} based on $\mathcal{A}^{(k)}$ and $\phi^{(l)}$, using $\mathcal{D}^{\text{train}}$ for training, $\mathcal{D}^{\text{tune}}$ for calibration, and $\tilde{\mathcal{D}}^{\text{test}}$ as the test set.\\
        }
    }
    Find $(\hat{k}(R), \hat{l}(R)) = \arg\max_{k \in [K], l \in [L]} d_{k,l}(R)$; if there are multiple maximizers, select the one with the smallest $p$-value $\hat{p}_{k,l}(R)$. \\

    \textbf{Query  \(S\subseteq[n]\):} compute $d(S)$ and $\hat{p}(S)$ with Algorithm~\ref{alg:conformalized-closed-testing} based on $\mathcal{A}^{(\hat{k}(R))}$ and $\phi^{(\hat{l}(R))}$, using $\mathcal{D}^{\text{train}} \cup \mathcal{D}^{\text{tune}}$ for training, $\mathcal{D}^{\text{cal}}$ for calibration, and $\mathcal{D}^{\text{test}}$ as test set.

\SetKwInOut{Output}{Output}
\Output{A $(1-\alpha)$ lower bound $d(S)$ for the number of outliers in  $S$, and a $p$-value $\hat p(S)$ for testing $H_S$, for any queried subset $S \subseteq [n]$.}

\caption{Automatic Conformal Outlier Detection and Enumeration} \label{alg:acode_oracle}
\end{algorithm}

The tuning module of Algorithm~\ref{alg:acode_oracle} sees the calibration and test data only through the lenses of the unordered collection $\{Z_1, \ldots, Z_{m}\} \cup \{ Z_{m+1}, \ldots, Z_{m+n} \}$.
Consequently, when Algorithm~\ref{alg:conformalized-closed-testing} is applied again in the inference step of Algorithm~\ref{alg:acode_oracle}, using $(Z_1, \ldots, Z_{m})$ as a calibration set and $(Z_{m+1}, \ldots, Z_{m+n})$ as a test set, the inlier scores within $(Z_1, \ldots, Z_{m+n})$ are still exchangeable conditional on the selected classifier and testing procedure.

There is an interesting distinction between exchangeability and independence in the context of Algorithm~\ref{alg:acode_oracle}.
Recall from Section~\ref{sec:conf-scores} that the model in~\eqref{eq:model} and the approach described in~\eqref{eq:scores} lead to i.i.d.~scores for the inliers in the case of one-class classification, and only exchangeable scores in the case of binary classification \citep{marandon2022machine}.

Algorithm~\ref{alg:acode_oracle} changes this picture a little.
As we condition on the selected classifier and closed testing procedure, the scores of the inliers among $(Z_1, \ldots, Z_{m+n})$, although still exchangeable, can no longer be independent, irrespective of whether they were initially obtained through one-class or binary classification.

Relaxing the i.i.d.\ assumption on the inliers to exchangeability does not affect inference.
Theorems~\ref{thm:asymp_distr_LMPI-exch} and~\ref{thm:asymp_distr_adaptive-exch} show that the large-sample approximations of the null distributions of Shiraishi’s statistic and its adaptive version remain valid under exchangeability. Moreover, for the other local testing procedures considered in this paper, namely Fisher’s and Simes’ combinations, the joint distribution of null conformal $p$-values is unchanged whether the inliers are i.i.d.\ or exchangeable, as shown in \citet[Proposition~2.2]{gazin2024transductive}.
Formally, Algorithm~\ref{alg:acode_oracle} outputs a simultaneous lower bound 
$d(S)$ that satisfies \eqref{eq-TDguarantee}, provided that exchangeability among the inliers holds.
Thus, Algorithm~\ref{alg:acode_oracle} can often achieve high power without selection bias, as previewed in Figure~\ref{fig:exp-lhco} and confirmed in Section~\ref{sec:numerical}.

Finally, Algorithm~\ref{alg:acode_oracle} requires specifying a subset of interest $R$, with default $R=[n]$. ACODE then selects the pair of classifier and local test yielding the largest lower bound $d(R)$ on the number of outliers in 
$R$. Multiple choices of $R$ may be considered, and the final selection of the (classifier, local test) pair may be made post hoc by comparing their results, as long as this choice is made before entering line~7 of the Algorithm. 
If the goal is to optimize rejection of the global null hypothesis (detection), rather than to maximize the lower bound (enumeration), the algorithm should select the (classifier, local test) pair that yields the smallest $p$-value; i.e., 
$(\hat{k}(R), \hat{l}(R)) = \arg\min_{k \in [K], l \in [L]} \hat p_{k,l}(R)$ with $R=[n]$. 

\section{Empirical Demonstrations}\label{sec:numerical}

Section~\ref{sec:numerical-synthetic} applies ACODE to synthetic data, while Section~\ref{sec:numerical-lhco} considers the 2020 LHCO dataset \citep{kasieczka2021lhc}. Additional experiments are reported in Appendix~\ref*{app:numerical}. In particular, Appendix~\ref*{app:numerical-real} presents results on datasets previously studied by \citet{bates2021testing} and \citet{marandon2022machine}. Consistent with \citet{marandon2022machine}, binary classifiers achieve higher power on those relatively low-dimensional data. Yet, as known from \citet{liang2022integrative} and confirmed in this section, one-class classifiers can yield higher power on different data sets, highlighting the practical benefit of ACODE's adaptability.

\subsection{Experiments with Synthetic Data} \label{sec:numerical-synthetic}

\subsubsection{Setup}

We apply ACODE on synthetic data from a distribution inspired by~\citet{liang2022integrative} and \citet{bates2021testing}.
Each observation $Z_i \in \mathbb{R}^{1000}$ is sampled from a multivariate Gaussian mixture $P_Z^{a}$, such that $Z_i = \sqrt{a} \, V_i + W_i$, for some constant $a\geq 0$ and vectors $V_i,W_i \in \mathbb{R}^{1000}$.
The inliers correspond to $a=1$ and the outliers to $a = 0.7$.
The elements of $V_i$ are standard Gaussian, while each element of $W_i$ is independent and uniformly distributed on a discrete set $\mathcal{W} \subseteq \mathbb{R}^{1000}$ with $|\mathcal{W}|=1000$.
The vectors in $\mathcal{W}$ are independently sampled from uniform($[-3,3]^{1000}$) prior to the first experiment.
The numbers of inliers in the training, calibration, and tuning sets are by default 1000, 750, and 250, respectively.
As shown in Figure~\ref{fig:exp-synthetic-1-tuning} (Appendix~\ref{app:numerical-synthetic}), ACODE’s performance is quite stable across a wide range of tuning splits, degrading only when tuning data are extremely scarce.

ACODE is applied using 6 classification algorithms and 5 local testing procedures.
The algorithms include 3 one-class classifiers (isolation forest, support vector machine, and ``local outlier factor'' nearest neighbors) and 3 binary classifiers (deep neural network, random forest, and AdaBoost), all implemented in the Python package {\em scikit-learn}.
The one-class classifiers compute out-of-sample scores, e.g., as in \citet{bates2021testing}, while the binary classifiers compute in-sample scores with the approach of \citet{marandon2022machine}.

The testing procedures leveraged by ACODE are: Simes test with and without Storey's correction, Fisher's combination method, the WMW test, and Shiraishi's LMPR test from Section~\ref{sec:new_local-g-wmw}, applied using $G(F) = F^3$. 
This choice of $G$ is optimal against Lehmann's alternative with $k=3$, which may or may not fit our data well.
An implementation of the Shiraishi test based on a data-driven estimate of $G$ will be considered later.

\subsubsection{Global Outlier Enumeration}

We begin by constructing 90\% lower confidence bounds for the total number of outliers in a test set of size 1000, varying the proportion of outliers as a control parameter.
Figure~\ref{fig:exp-synthetic-1} summarizes median lower bounds produced by ACODE over 100 independent experiments, separately for different classifiers and local testing procedures.
The left and center panels compare the performance of ACODE applied using only the 3 one-class or binary classifiers, respectively, while the right panel corresponds to ACODE automatically selecting a classifier from the full suite of all 6 options.

\begin{figure}[!htb]
\centering
\includegraphics[width=\linewidth]{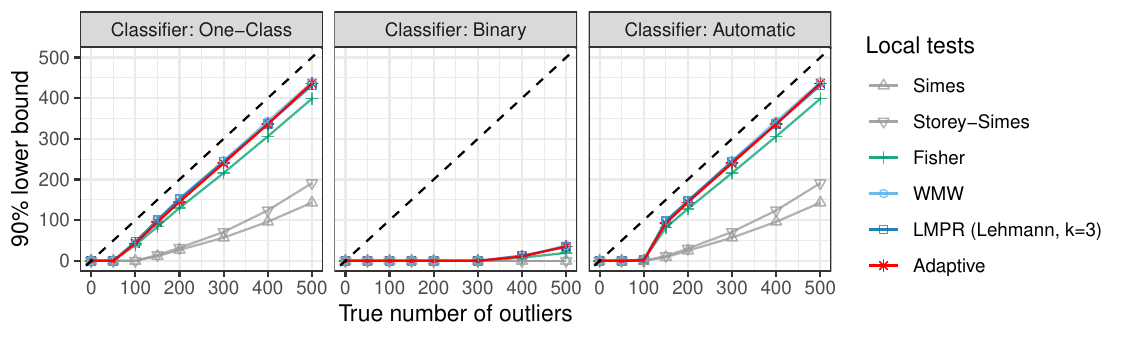}
\caption{Median values for a 90\% lower confidence bound on the number of outliers in a test set, computed by ACODE on synthetic data based on different classifiers and local testing procedures. The results are shown as a function of the true number of outliers within a test set of size 1000. The most adaptive version of ACODE can automatically select an effective classifier and local testing procedure in a data-driven way.
}
\label{fig:exp-synthetic-1}
\end{figure}

The findings highlight that one-class classifiers generally outperform binary classifiers on these data \citep{liang2022integrative}, although binary classifiers can be advantageous in other contexts, as shown in Appendix~\ref*{app:numerical-synthetic}.
Figure~\ref{fig:exp-synthetic-1} distinguishes between the performances of the 5 local testing procedures (shown in different colors), and the scenario in which ACODE selects one procedure adaptively.
For these data, the WMW test yields the most informative (highest) lower confidence bounds, and ACODE's performance closely approximates that of an {\em ideal oracle} that knows the optimal combination of classifier and testing procedure.

\subsubsection{Selective Outlier Enumeration} \label{sec:exp-selective}

Figure~\ref{fig:exp-synthetic-1-sel-q0.5} reports on related experiments in which the goal is to construct a 90\% lower confidence bounds for the number of outliers within a data-driven subset of test points, selected as those with the largest conformity scores.
To facilitate the interpretation of these experiments, ACODE is applied using a fixed classification algorithm, a one-class support vector machine.
This ensures the test subsets are always selected based on comparable conformity scores in all repetitions of the experiments.
We do not apply ACODE with the Shiraishi local test here due to its higher computational cost when the selected subset differs from the full test set; see Table~\ref*{tab:comp-costs}.
These results further demonstrate how the efficacy of different local testing procedures can vary in different situations.
Simes' method performs better with very small selected sets, while the WMW and Fisher's test excel in cases with moderately large selected sets.
Once more, ACODE can approximately maximize power by autonomously identifying the most effective local testing procedure for each scenario.

\begin{figure}[!htb]
\centering
\includegraphics[width=0.9\linewidth]{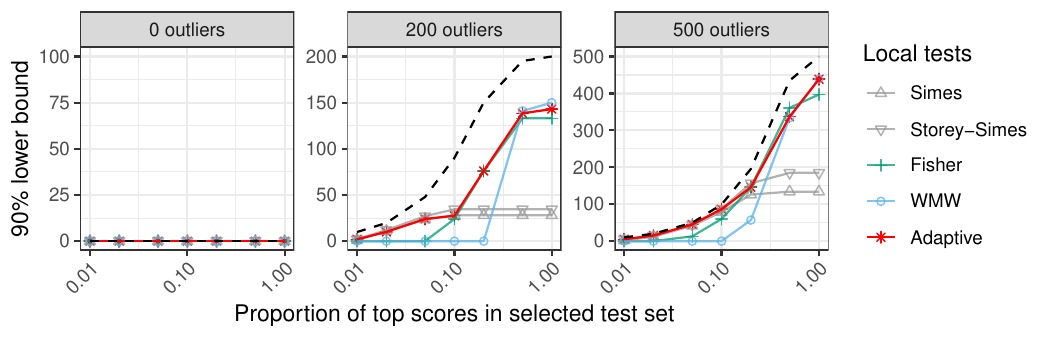}
\caption{Median values for a 90\% lower confidence bound on the number of outliers within an adaptively selected subset of 1000 test points, in experiments similar to those of Figure~\ref{fig:exp-synthetic-1}.
The results are shown as a function of the proportion of selected test points and of the total number of outliers in the test set.
The dashed curve corresponds to the true number of outliers in this selected set.
In these experiments, ACODE is applied using a one-class support vector classifier to compute the conformity scores.
}
\label{fig:exp-synthetic-1-sel-q0.5}
\end{figure}

\subsubsection{Hunting for Adversarial Anomalies} \label{sec:exp-adversarial}

We now apply ACODE leveraging an additional local testing procedure, utilizing a more flexible implementation of the LMPR approach outlined in Section~\ref{sec:new_local-g-wmw}. This approach enables our method to even detect elusive anomalies hidden by an adversary.

We consider inlier data sampled from a standard multivariate normal distribution with 100 independent components. The outliers are generated by an ``adversary'' agent through the following process. The adversary first trains a one-class support vector machine on a separate dataset consisting of 1,000 inlier points. 
Then, to generate each outlier, it randomly produces three independent inliers and selects the one closest to the decision boundary of the support vector machine. This approach results in outliers that are difficult to detect on an individual basis but can be identified collectively, as their conformity scores tend to be under-dispersed compared to those of the inliers.

Without prior knowledge of the distribution of outlier scores, ACODE is applied as in previous experiments, using the same toolbox of six machine learning classifiers, but including a sixth local testing procedure.
This procedure is a practical approximation of the Shiraishi test described in Section~\ref{sec:new_local-g-wmw}, using an empirical estimate $\hat{g}$ of $g$. This estimate is obtained by fitting a mixture of Beta distributions, while preserving the exchangeability between calibration and test scores, as detailed in Appendix~\ref*{app:outlier-distr}. 

The results reported in Figure~\ref{fig:exp-synthetic-adversary} demonstrate that our method effectively detects these elusive adversarial outliers, even though the first five local testing procedures considered become ineffective in this scenario.
For additional experiments that offer a more detailed view of the performance of our empirical approximation of the Shiraishi local testing procedure in different settings, please refer to Figures~\ref{fig:exp-G-hat-power}--\ref{fig:exp-G-hat-lb} in Appendix~\ref*{app:numerical-synthetic}.

\begin{figure}[!htb]
\centering
\includegraphics[width=0.85\linewidth]{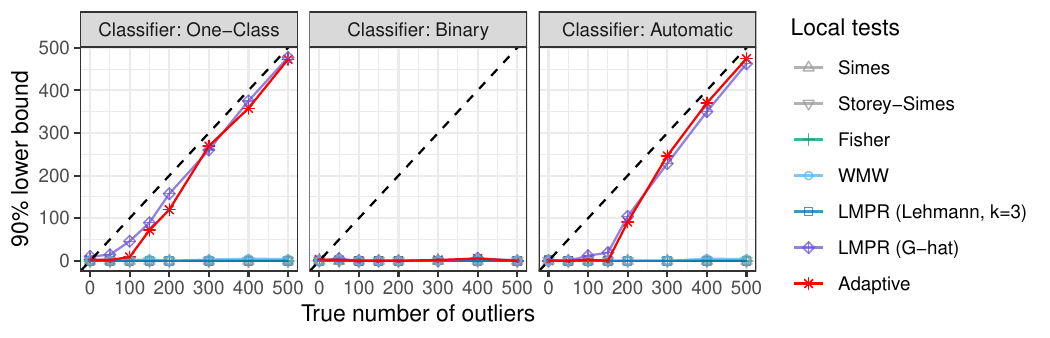}
\caption{
Empirical 90th percentile of the 90\% lower confidence bound on the number of outliers in the test set, computed by ACODE on synthetic data with adversarially hidden outliers exhibiting underdispersed conformity scores.
Most local testing procedures cannot detect these outliers, but a data-driven approximation of the Shiraishi test enables our method to achieve high power.
Other details are as in Figure~\ref{fig:exp-synthetic-1}. }
\label{fig:exp-synthetic-adversary}
\end{figure}

\subsection{Experiments with High-Energy Particle Collision Data} \label{sec:numerical-lhco}

We now apply our method on data from the 2020 LHCO contest \citep{kasieczka2021lhc}, revisiting the example previewed in Figure~\ref{fig:exp-lhco}.
Participants in this contest were asked to estimate the number of interesting ``signal'' particle collision events within a given data set by contrasting it with a ``reference''  data exclusively comprised of ``background'' events  generated by a physics simulation software.
While the LHCO challenge is closely related to the problem considered in this paper, the analysis in this section overlooks some of the physics intricacies and emphasizes instead the statistical aspects of our method.

We focus on a subset of the LHCO data containing 1,100,000 observations of 14 high-level features constructed based on subject domain knowledge.
Each observation was labeled as either a background (inlier) or a {\em 2-prong signal} (outlier) event, where the latter refers to hadron collision resulting in the emission of two charged particles.
The goals are to detect and enumerate the outliers within a randomly chosen unlabeled test set, having access to a reference set of independent and identically distributed inliers.

In each experiment, we randomly select a test set of 10,000 events, among which the proportion of outliers is a control parameter varied between 0 and 0.15.
We also randomly sample three disjoint subsets of inliers, to be used for training (cardinality 10,000), calibration (cardinality 1000), and tuning (cardinality 1000), respectively.
ACODE is then applied as described in Section~\ref{sec:numerical-synthetic}, leveraging the same suites of 6 classification algorithms and 5 local testing procedures.
The additional local testing procedure described in Section~\ref{sec:exp-adversarial} (the data-driven approximation of the Shiraishi oracle) is omitted here, for simplicity. This omission is due to its somewhat higher computational cost of its closed testing shortcut and the lack of clear advantages observed with these data.

Figure~\ref{fig:exp-lhco} compares the performance of different implementations of our method, in terms of power and the total number of detected outliers, as a function of the true number of outliers.
We refer to Figure~\ref*{fig:exp-lhco-full} and Table~\ref*{tab:lhco-lb} in Appendix~\ref*{app:numerical-synthetic} for a more comprehensive view of these results. The latter detail the performance of ACODE separately applied based on each of the 5 alternative local testing procedures considered here--- to prevent overcrowding, only a subset of these local procedures were previously displayed in Figure~\ref{fig:exp-lhco}.

Note that these results also include comparisons of ACODE's performance to that of a naive ``cherry-picking'' heuristic version of our method, which does not provide valid inferences, as well as to that of the BH procedure applied to conformal $p$-values.
The latter is an effective approach for individual outlier {\em identification} under FDR control \citep{bates2021testing,marandon2022machine}, but it is not designed to {\em estimate} the number of outliers.

Overall, ACODE yields more informative inferences compared to individual-level outlier identification, especially when leveraging the WMW test or Fisher's combination method.
Further, ACODE achieves near-oracle performance with respect to the selections of classifier and local testing procedure, without incurring in selection bias.
Additional results, with qualitatively similar conclusions, are presented in Appendix~\ref*{app:numerical-lhco}.

\section{Discussion} \label{sec:discussion}

This work opens several avenues for future research. For example, while our method employs sample splitting, one could explore extensions to more sophisticated conformal inference frameworks that use cross-validation, which tends to be more computationally expensive but also more powerful for small sample sizes \citep{barber2019predictive}. Additionally, this paper assumes an ideal reference data set of inliers; future work could examine relaxations of this assumption, accounting for possible contamination of the calibration data \citep{barber2022conformal}, or other distribution shifts \citep{tibshirani2019conformal} following a weighting approach similar to that of \citet{hu2023two}.

In the future, one could investigate whether faster closed-testing shortcuts can be developed for the Shirashi local test, which tends to be powerful but is relatively computationally expensive as currently implemented, especially in the adaptive selection regime.
A further extension is to integrate alternative procedures for estimating the number of outliers, such as the approach of \citet{gazin2024transductive}, which can be interpreted as approximate shortcuts to closed testing and could be incorporated into ACODE, as discussed in Appendix~\ref*{app:bounds_DKW}.
Lastly, an intriguing question is whether the optimality of local tests extends to closed testing procedures, a concept intertwined with the notion of admissibility. Although this issue remains unexplored, a starting point may be found in \cite{goeman2021onlyclosed}, which provides a general discussion on admissibility and its implications for closed testing.


\newpage

\noindent
\textbf{Software Availability.} \url{https://github.com/msesia/conformal_nout}.

\medskip

\noindent
\textbf{Data Availability.} The LHCO data that support the findings of this study are openly available at \url{https://lhco2020.github.io/homepage/}; see \citet{kasieczka2021lhc}.

\medskip

\noindent
\textbf{AI Usage.} ChatGPT-5.2 for language editing.

\medskip

\noindent
\textbf{Funding.}
Co-funded by the European Union (ERC, BigBayesUQ, project number: 101041064).
Views and opinions expressed are however those of the author(s) only and do not necessarily reflect those of the European Union or the European Research Council. Neither the European Union nor the granting authority can be held responsible for them.

\bibliographystyle{chicago}
\bibliography{biblio}

\clearpage

\appendix

\renewcommand{\thesection}{A\arabic{section}}
\renewcommand{\theequation}{A\arabic{equation}}
\renewcommand{\thealgocf}{A\arabic{algocf}}
\renewcommand{\thetheorem}{A\arabic{theorem}}
\renewcommand{\thecorollary}{A\arabic{corollary}}
\renewcommand{\theproposition}{A\arabic{proposition}}
\renewcommand{\thelemma}{A\arabic{lemma}}
\renewcommand{\thetable}{A\arabic{table}}
\renewcommand{\thefigure}{A\arabic{figure}}
\setcounter{figure}{0}
\setcounter{table}{0}
\setcounter{proposition}{0}
\setcounter{theorem}{0}
\setcounter{lemma}{0}
\setcounter{algocf}{0} 

\newtheorem*{theorem*}{Theorem}

\section{Background on Local Testing} \label{app:local-testing}

\begin{table}[!htb]
\centering
\resizebox{\textwidth}{!}{\small
\begin{tabular}{p{2cm}p{3cm}p{4cm}p{2.2cm}p{4cm}}
\toprule
\textbf{Method} & \textbf{Validity} & \textbf{Precision} & \textbf{Power} & \textbf{References} \\
\midrule
Simes$^\dagger$ & PRDS & Exact if $\alpha(m+1)/|S|$ is an integer & Rare/Strong effects & \cite{simes1986improved}, \cite{sarkar2008simes} \\
\midrule
Fisher & Asymptotic$^\mathsection$ (Permutations) & Conservative (Exact) & Dense/Weak effects & \cite{fisher1970statistical}, \cite{bates2021testing} \\
\midrule
WMW & Permutations$^\ddagger$ & Exact$^\mathparagraph$ & Moderately Sparse/Weak effects & \cite{wilcoxon1945individual}, \cite{mannwhitney1947} \\
\bottomrule
\end{tabular}
}
\flushleft{
\footnotesize{$^\dagger$ A variant of the Simes' test incorporating Storey’s estimator for the proportion of nulls has also been proven to be valid and may lead to improvement if the proportion of true nulls is not close to 1. See Appendix~\ref{app:local-simes} for more details and the proof of the exactness of the Simes' test when $\alpha(m+1)/|S|$ is an integer. \\
$^\mathsection$ The asymptotic approximation proposed by \citet{bates2021testing} works well even with moderate values of $m$ and $|S|$. Our empirical observations suggest that permutations can sometimes enhance power, as the asymptotic approximation tends to be somewhat conservative when both $m$ and $|S|$ are small. Therefore,
for very small samples, one may want to embed Fisher's combination statistic within a permutation test, which increases the computational cost but ensures exact finite-sample validity under minimal exchangeability assumptions. We refer to Section~\ref{app:local-fisher} for a more detailed review.
\\
$^\ddagger$ Permutation-based methods offer exact finite-sample validity as long as the inlier scores in $(X_1,\ldots,X_m, (Y_j)_{j \in I_0})$ are exchangeable. \\
$^\mathparagraph$ The critical value for the WMW test may be calculated either exactly, in small samples, or through accurate asymptotic approximations.
See Appendix~\ref{app:local-wmw} for further details.
}}
\caption{Standard approaches for testing $H_S$. }
\label{tab:standard_approaches}
\end{table}

\subsection{The Simes Test for Conformal $p$-values} \label{app:local-simes}

The Simes combining function is a test statistic based the ordered $p$-values:
\begin{align}\label{eq-Simesstat}
    T^{\mathrm{Simes}}_S &= \min_{k \in \{1,\ldots,|S|\}} |S| \cdot \frac{p_{(k:S)}}{k},
\end{align}
where $p_{(k:S)}$ denotes the $k$-th smallest value in the multiset $\{p_j: j \in S \}$. The Simes combining function is symmetric because it remains unchanged under any permutation of $(p_j)_{j \in S}$.

The Simes test \citep{simes1986improved} rejects $H_S$ at level $\alpha \in (0,1)$ if and only if $\phi^{\mathrm{Simes}}_S =1$, where
\begin{align} \label{eq-Simeslocaltest}
   \phi^{\mathrm{Simes}}_S &=  \mathds{1}\left\{  T^{\mathrm{Simes}}_S \leq \alpha \right\}.
\end{align}
The PRDS property of the conformal $p$-values in~\eqref{eq-permp} implies that the Simes test applied to the conformal $p$-values $(p_j)_{j \in S}$ is a valid level-$\alpha$ test for the intersection null hypothesis $H_S$ \citep{sarkar2008simes}: for any given $S \subseteq [n]$, $\mathbb{P}(\phi^{\mathrm{Simes}}_S = 1|H_S) \leq \alpha$.

We say that a level-$\alpha$ test $\phi_S$ \emph{dominates} another level-$\alpha$ test $\psi_S$ if $\phi_S \geq \psi_S$. The
domination is strict if, in addition, $\phi_S > \psi_S$ for at least one $(p_j)_{j \in S}$. We say that a test
is \emph{admissible} if it is not strictly dominated by any test. 

Theorem 3.1 in \cite{vovk2022admissible} demonstrates that the Simes test dominates all tests valid under arbitrary dependence of the $p$-values and relying on symmetric combining functions. 

\subsection{The Permutation-Based Simes Test} \label{app:local-psimes}

The Simes test for conformal $p$-values is itself inadmissible, as it is dominated by the permutation-based Simes method.

The theory of permutation tests and related references can be found in the books \cite{cox1979theoretical}, 
 \cite{lehmann05},  \cite{pesarin2010permutation}, as well as in the works of \cite{hemerik2018exact, hemerik2021another}.

Write $(W_1,\ldots,W_{m+|S|})$ for the combined sample $(X_1,\ldots,X_m,(Y_j)_{j\in S})$. Consider the permutation group $\Pi$ whose elements are permutations of the set of integers $[m+|S|]$. Choose a test statistic $T=T(W_1,\ldots,W_{m+|S|})$ for which small values are to be
regarded as evidence against $H_S$. For any $\pi \in \Pi$, 
let $T_\pi$ denote the value of $T$ applied to the permuted vector $(W_{\pi(1)},\ldots,W_{\pi(m+|S|)}))$.
For $\alpha \in (0,1)$, the permutation critical value is defined by:
\begin{align} \label{eq-permcrit}
  \alpha^{\mathrm{Perm}} =  \max\left(a \in \{0,(T_\pi)_{\pi \in \Pi}\}:  \sum_{\pi \in \Pi} \mathds{1}\{ T_{\pi} \leq a \} \leq \alpha |\Pi| \right).
\end{align}
Then, the permutation-based Simes test  with $T=T_S^{\mathrm{Simes}}$ in~\eqref{eq-Simesstat}  rejects $H_S$ at level $\alpha$ if and only if $\phi^{\mathrm{SimesPerm}}_S=1$, where
\begin{align} \label{eq-SimesPerm}
   \phi^{\mathrm{SimesPerm}}_S &=  \mathds{1}\left\{ T^{\mathrm{Simes}}_S  \leq \alpha^{\mathrm{Perm}} \right\}.
\end{align}

A permutation test is \emph{exactly valid}, i.e. $\mathbb{P}[\phi^{\mathrm{Perm}}_S = 1 \; ;\; H_S] = \alpha$, for any choice of test statistic under the sole assumption that the scores $(X_1,\ldots,X_m, (Y_i)_{i \in I_0})$ are exchangeable. However, achieving exact validity for all values of $\alpha$ is generally not feasible due to the discreteness of the permutation distribution. Nevertheless, it is possible to ensure exactness through a mathematical artifice using a randomized critical region.

The computational cost of a permutation test may be $\mathcal{O}((m+|S|)!)$, which is prohibitive if $m$ or $|S|$ are even moderately large.
The typical solutions involve restricting the set of permutations, either by utilizing a fixed subgroup or a random subset of $\Pi$ (chosen independently of the conformity scores) with an added trivial identity permutation; e.g., see Theorems 1 and 2 in \cite{hemerik2018exact}.

The precise computation of $\alpha^{\mathrm{Perm}}$ for the Simes statistic requires ${n+|S| \choose |S|}$ permutations, representing the number of ways, disregarding order, that $|S|$ test units can be chosen from among $n+|S|$ units. However, this $\alpha^{\mathrm{Perm}}$ does not depend on the scores $(W_1,\ldots,W_{m+|S|})$ but only on $m$, $|S|$ and $\alpha$, making it possible to tabulate.
This is because the permutation-based Simes test is a \emph{rank} test. 

To see this, consider the conformal $p$-value corresponding to the permuted vector $(W_{\pi(1)},\ldots,W_{\pi(m+|S|)}))$, that is:
\begin{eqnarray*}\label{eq-permpW}
    p^\pi_j &=& \frac{1}{(m+1)}\left(1+ \sum_{k = 1}^{m} \mathds{1}\{ W_{\pi(k)} \geq W_{\pi(m+j)} \}\right).
\end{eqnarray*}
This yields $p_j$ in~\eqref{eq-permp} for the identity permutation. Note that $p^\pi_j$ does not depend on the scores $(W_1,\ldots,W_{m+|S|})$, but only on their ranks.

The next proposition demonstrates that the Simes test based on conformal 
$p$-values is inadmissible, as it is strictly dominated by the permutation-based Simes test. Moreover, the two tests coincide when $\alpha(m+1)/|S|$ is an integer.
\begin{proposition}\label{prop:Simes} %
The $\alpha$-level test $\phi^{\mathrm{SimesPerm}}_S$ in~\eqref{eq-SimesPerm} dominates the $\alpha$-level test
$\phi^{\mathrm{Simes}}_S$ in~\eqref{eq-Simeslocaltest}. The domination is strict for at least one combination of $\alpha$, $m$, $n$ and $S\subseteq [n]$.
Moreover, it holds that 
\begin{eqnarray}\label{eq-inequalities}
 \frac{|S|}{m+1} \left \lfloor \alpha \frac{m+1}{|S|} \right \rfloor \leq \mathbb{P}[ \phi^{\mathrm{Simes}}_S = 1 \; ;\; H_{S}] \leq \mathbb{P}[ \phi^{\mathrm{SimesPerm}}_S = 1 \; ;\; H_{S}] \leq \alpha, 
\end{eqnarray} 
and if $\alpha(m+1)/|S|$ is an integer, then all inequalities in~\eqref{eq-inequalities}  becomes equalities, i.e. the Simes test based on conformal $p$-values is of exact size $\alpha$.
\end{proposition}

\begin{proof}
If $\alpha^{\mathrm{Perm}}  > \alpha$, then $\phi^{\mathrm{SimesPerm}}_S  \geq  \phi^{\mathrm{Simes}}_S$. \\
If $\alpha^{\mathrm{Perm}}  < \alpha$, then $\phi^{\mathrm{SimesPerm}}_S  \leq  \phi^{\mathrm{Simes}}_S$. We need to show that if $\alpha^{\mathrm{Perm}}  < \alpha$, then $\phi^{\mathrm{SimesPerm}}_S  =  \phi^{\mathrm{Perm}}_S$ for all $(p_j)_{j \in S}$. We will derive a contradiction. Suppose $\phi^{\mathrm{Simes}}_S  >  \phi^{\mathrm{PermSimes}}_S$ for some $(\tilde{p}_j)_{j \in S}$. The corresponding test statistic $\tilde{T}^{\mathrm{Simes}}_S$ must be $\alpha^{\mathrm{Perm}}< \tilde{T}^{\mathrm{Simes}}_S \leq \alpha$. Since $\tilde{T}^{\mathrm{Simes}}_S=T_{\tilde{\pi}}$ for some $\tilde{\pi} \in \Pi$, it follows that $|\Pi|^{-1}\sum_{\pi \in \Pi} \mathds{1}\{T_\pi \leq \alpha \}  > \alpha$ and we have a contradiction because $\mathbb{P}[ \phi^{\mathrm{Simes}}_S = 1 \; ;\; H_{S}]  \leq \alpha$. \\
If $\alpha^{\mathrm{Perm}}  = \alpha$, then $\phi^{\mathrm{Simes}}_S  =  \phi^{\mathrm{PermSimes}}_S $. By Corollary 3.5 in \cite{marandon2022machine}, if $\alpha(m+1)/|S|$ is an integer, then all inequalities in (1) becomes equalities. 
\end{proof}

Table~\ref{tab:sizeSimes} below presents the size of the tests $\phi^{\mathrm{Simes}}$ and $\phi^{\mathrm{SimesPerm}}$, along with the critical value $\alpha^{\mathrm{Perm}}$, as a function of $m$, with $\alpha=0.1$ and $n=3$. 

\begin{table}[!htb]
\resizebox{\textwidth}{!}{
\begin{tabular}{r|rrrrrrrrrr}
\toprule
$m$  & 9 & 14 & 19 & 24 & 29 & 34 & 39 & 44 & 49 & 54 \\ 
\midrule
   $\mathbb{P}(\phi_S^{\mathrm{Simes}}=1|H_S)$ &  0.000 & 0.022 & 0.013 & 0.009 & 0.100 & 0.086 & 0.075 & 0.072 & 0.064 & 0.058  \\ 
    $\mathbb{P}(\phi_S^{\mathrm{SimesPerm}}=1|H_S)$  & 0.055 & 0.025 & 0.014 & 0.009 & 0.100 & 0.097 & 0.083 & 0.072 & 0.065 & 0.058  \\ 
  $\alpha^{\mathrm{Perm}}$  & 0.200 & 0.133 & 0.100 & 0.080 & 0.100 & 0.171 & 0.150 & 0.111 & 0.100 & 0.091  \\ 
   \hline
\end{tabular}
}
\caption{Size of the tests $\phi^{\mathrm{Simes}}$ and $\phi^{\mathrm{SimesPerm}}$, along with the critical value $\alpha^{\mathrm{Perm}}$, as a function of $m$, with $\alpha=0.1$ and $n=3$.}
\label{tab:sizeSimes}
\end{table}

\subsection{The Fisher Combination Test for Conformal $p$-values} \label{app:local-fisher}

An alternative classical approach for testing $H_S$ based on conformal $p$-values $(p_j)_{j \in S}$ is provided by the Fisher combination method \citep{fisher1970statistical}.
This method originally assumed independence but was later refined by \citet{bates2021testing} to accommodate the (weak) positive dependencies exhibited by conformal $p$-values.
The idea is to evaluate the statistic $T^{\mathrm{Fisher}}_S = - 2 \sum_{j \in S} \log(p_j)$ 
and then reject $H_S$ at level $\alpha$ if and only if $T^{\mathrm{Fisher}}_S > c^{\chi^2}_{\alpha}(|S|) \sqrt{1+|S|/m} - 2 |S| ( \sqrt{1+|S|/m} - 1 )$, where $c^{\chi^2}_{\alpha}(|S|)$ is the $(1-\alpha)$-quantile of the $\chi^2$ distribution with $2|S|$ degrees of freedom. The indicator of this rejection event can thus be written as:
\begin{align}\label{fisher}
    \phi^{\mathrm{Fisher}}_S &= \mathds{1}\left\{ - 2 \sum_{j \in S} \log(p_j) > c^{\chi^2}_{\alpha}(|S|) \sqrt{1+|S|/m} - 2 |S| \left( \sqrt{1+|S|/m} - 1 \right) \right\}.
\end{align}
\citet{bates2021testing} proved that this test is asymptotically valid in the limit of large $|S|$ and $m$, with both sample sizes growing at the same rate.
This differs from the classical Fisher test \citep{fisher1970statistical}, which assumes the $p$-values are mutually independent and can be recovered from \eqref{fisher} by letting $m \to \infty$ while holding $|S|$ fixed.

\subsection{The Wilcoxon-Mann-Whitney Rank Test} \label{app:local-wmw}

An important class of permutation tests is defined in terms of sample ranks, known as \emph{rank tests}. These tests rely on the ranks of the data, making them invariant to monotone transformations, as these transformations do not alter the relative ranking of observations.

For any $S \subseteq [n]$ with cardinality $s$, let $R_1,\ldots,R_{m+s}$ denote the ranks of $(X_1,\ldots,X_m,(Y_j)_{j\in S})$. 
The dependence of $R_1,\ldots,R_{m+s}$ on $S$ is left implicit to simplify the notation, since this does not create ambiguity.

This classical two-sample Wilcoxon \citep{wilcoxon1945individual} or Mann-Whitney \citep{mannwhitney1947} test rejects $H_S$ for large values of the sum of ranks in the test sub-sample $S$
\begin{align}\label{eq-Wstat}
    T^{\mathrm{W}}_S &= \sum_{j=m+1}^{m+s} R_j,
\end{align}
or for large values of the U-statistic 
\begin{align}\label{eq-MWstat}
    T^{\mathrm{MW}}_S &=  \sum_{j \in S}\sum_{i = 1}^{m}  \mathds{1}\{ X_{i} < Y_{j}\}.
\end{align}
The Mann Whitney U-statistic can be expressed as $T^{\mathrm{MW}}_S = T^{\mathrm{W}}_S + |S|(|S|+1)/2$, differing only by a constant term. The two formulations result in the same test, which we will call the Wilcoxon-Mann-Whitney (WMW) test.
The WMW level-$\alpha$ test is given by
\begin{align}\label{eq-MWtest}
    \phi^{\mathrm{WMW}}_S &= \mathds{1}\left\{ T^{\mathrm{W}}_S  \geq c^{\text{W}}_{\alpha}(m,|S|) \right\}=\mathds{1}\left\{ T^{\mathrm{MW}}_S  \geq c^{\text{MW}}_{\alpha}(m,|S|) \right\}.
\end{align}
The critical values $c^{\text{W}}_{\alpha}(m,|S|)$ and $c^{\text{MW}}_{\alpha}(m,|S|)$ are the $(1-\alpha)$ quantiles of the permutation distribution of $T^{\mathrm{W}}_S$ and $T^{\mathrm{MW}}_S$, respectively. Since the two test statistics result in the same test, we will denote them generally by $T^{\mathrm{MWW}}_S$ and the critical value by $c^{\text{WMW}}_{\alpha}(m,|S|)$. 

For small samples, the permutation null distribution of $T^{\text{WMW}}_S$ can be found either via recursion \citep{mannwhitney1947} or direct permutations. 
The test $\phi^{\mathrm{WMW}}_S$ is of exact size $\alpha$, i.e. $\mathbb{P}(\phi^{\mathrm{WMW}}_S = 1 | H_S)= \alpha$, for $\alpha \in \Lambda = \{a_r, r \in [m|S|] \}$, where $a_r=\mathbb{P}(T^{\mathrm{WMW}}_S \geq r | H_S)$; for $\alpha \in (0,1) \setminus  \Lambda$, it is conservative, i.e. $\mathbb{P}(\phi^{\mathrm{WMW}}_S = 1 | H_S) < \alpha$.

A well-known approach when $m$ and $|S|$ are both large is that of \cite{hoeffding1948ustatistics}, which is based on an application of the Central Limit Theorem for U-statistics.   
Under the null hypothesis $H_S$, for large $m$ and $|S|$, the Mann Whitney statistic $T^{\text{MW}}_S$ is approximately normally distributed:
\begin{align}\label{eq-WMWnormalapprox}
  T^{\text{MW}}_S & \approx N\left(\frac{|S|m}{2}, \frac{m|S|(m + |S|+ 1)}{12}\right).
\end{align}
Other asymptotic approximations include the Edgeworth expansion of \citet{fix1955EdgApproximation} and the uniform approximation of \citet{buckle1969UnifApproximation}.

It is also interesting to note that the Mann-Whitney statistic $T^{\text{MW}}_S$ reduces to $T^{\text{MW}}_{j} = \sum_{i=1}^m\mathds{1}\{X_i < Y_j \}$ if $S=\{j\}$, for any $j \in [n]$. The statistic $T^{\text{MW}}_{j}$ can be equivalently expressed as a function of the conformal $p$-value $p_j$ in ~\eqref{eq-permp}:
\begin{align}\label{eq-MWj}
    T^{\mathrm{MW}}_j &=   \sum_{i = 1}^{m} \mathds{1}\{ X_{i} < Y_{j}\}=   m - \sum_{i = 1}^{m} \mathds{1}\{ X_{i} \geq Y_{j}\} =  (m+1)(1-p_j) 
\end{align}
When $S=\{j\}$, the WMW two-sample test, with the 1st sample being $X_1,\ldots,X_n$ and the 2nd ``sample'' being just $Y_j$, rejects $H_j$ if the rank of $Y_j$ in the sequence $(X_1,\ldots,X_n,Y_j)$ is large or, equivalently, if the conformal $p$-value $p_j$ in~\eqref{eq-permp} is small. 
Therefore, the WMW test provides a particularly intuitive bridge between the modern framework of conformal inference and the classical world of two-sample rank tests \citep{kuchibhotla2020exchangeability}.

Another interesting connection, formally stated in the next proposition, is that the Mann-Whitney test for $H_S$
  can be defined by the average of conformal $p$-values test statistic, also known as Edgington's method \citep{edgington1972additive}.

\begin{proposition}
    The Mann-Whitney test statistic $T^{\text{MW}}_S$ can be equivalently expressed as a function of the conformal $p$-values $p_j$ in ~\eqref{eq-permp}:
    \begin{align*}
T^{\text{MW}}_S & = (m+1) \sum_{j \in S}( 1- p_j).
\end{align*}
Then the WMW test can be defined by the average of conformal $p$-values test statistic, also known as Edgington's method  \citep{edgington1972additive}:
   \begin{align*}
\bar{p}_S & = \frac{1}{|S|} \sum_{j \in S}p_j.
\end{align*}
\end{proposition}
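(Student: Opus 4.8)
The plan is to derive the stated identity directly from the single-point relation already recorded in Equation~\eqref{eq-MWj}, and then to observe that the resulting expression is a strictly monotone function of $\bar{p}_S$, so that the WMW rejection rule translates verbatim into a rule on the average p-value.

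First, I would exploit the additivity of the Mann-Whitney statistic over the elements of $S$. By the definition in~\eqref{eq-MWstat},
\[
  T^{\mathrm{MW}}_S = \sum_{j \in S}\sum_{i=1}^m \mathds{1}\{X_i < Y_j\} = \sum_{j \in S} T^{\mathrm{MW}}_j ,
\]
where $T^{\mathrm{MW}}_j = \sum_{i=1}^m \mathds{1}\{X_i < Y_j\}$ is the single-point statistic. Equation~\eqref{eq-MWj} already establishes $T^{\mathrm{MW}}_j = (m+1)(1-p_j)$, so substituting termwise and pulling out the constant $(m+1)$ immediately yields $T^{\mathrm{MW}}_S = (m+1)\sum_{j \in S}(1-p_j)$. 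This first half requires no new computation beyond re-invoking~\eqref{eq-MWj}.

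Second, to connect with Edgington's average, I would rewrite the sum as
\[
  T^{\mathrm{MW}}_S = (m+1)\Big(|S| - \sum_{j \in S} p_j\Big) = (m+1)\,|S|\,(1 - \bar{p}_S),
\]
exhibiting $T^{\mathrm{MW}}_S$ as a strictly decreasing affine function of $\bar{p}_S$ for fixed $m$ and $|S|$. Because the WMW test in~\eqref{eq-MWtest} rejects $H_S$ for large values of $T^{\mathrm{MW}}_S$ and this map is a monotone relabeling, the event $\{T^{\mathrm{MW}}_S \geq c^{\mathrm{WMW}}_\alpha(m,|S|)\}$ coincides with $\{\bar{p}_S \leq 1 - c^{\mathrm{WMW}}_\alpha(m,|S|)/((m+1)|S|)\}$. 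Since both thresholds are read off the same permutation distribution, the two statistics define identical level-$\alpha$ tests, which is the asserted equivalence with Edgington's method.

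The argument is short, and I do not anticipate a genuine obstacle; the only point needing care is the direction of the monotonicity. Because $\bar{p}_S \mapsto T^{\mathrm{MW}}_S$ is strictly \emph{decreasing}, ``large $T^{\mathrm{MW}}_S$'' must be matched with ``small $\bar{p}_S$,'' and the permutation critical value has to be transported through the affine map accordingly. Beyond correctly transporting this threshold, nothing deeper is involved.
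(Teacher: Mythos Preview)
Your proposal is correct and follows exactly the same approach as the paper: additivity $T^{\mathrm{MW}}_S = \sum_{j\in S} T^{\mathrm{MW}}_j$, invocation of~\eqref{eq-MWj} termwise, and then the observation that the result is strictly monotone in $\bar p_S$ so that the WMW rejection rule is equivalent to rejecting for small $\bar p_S$. Your write-up is in fact slightly more explicit than the paper's, as you spell out the affine map and the transported critical value.
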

\begin{proof}
Note that The Mann-Whitney test statistic $T^{\text{MW}}_S$ is the sum of the individual contributions $T^{\text{MW}}_j$ with $j \in S$, i.e. $T^{\text{MW}}_S = \sum_{j \in S} T^{\text{MW}}_j$. Then 
    $T^{\text{MW}}_S  = (m+1) \sum_{j \in S}( 1- p_j)$ follows from~\eqref{eq-MWj}. Since $T^{\text{MW}}_S$ is a monotonic function of $\bar{p}_S = |S|^{-1}\sum_{j \in S}p_j$, the test statistic can be simplified to the average of conformal $p$-values $\bar{p}_S$, leading to rejection of $H_S$ for small values of $\bar{p}_S$.
\end{proof}

The Edgington's method has been studied by \cite{ruschendorf1982random} and \cite{meng1994posterior}. They show that twice the average of the $p$-values is a valid $p$-value for arbitrary dependence. Recently, \cite{choi2023averaging} demonstrated that the ``twice the average'' rule cannot be improved even under the assumption of exchangeability of the $p$-values.
In contrast, the WMW test, when utilizing the average of the conformal $p$-values as the test statistic, rejects $H_S$ if $\bar{p}_S \leq \alpha^{\mathrm{Perm}}$, where the critical value $\alpha^{\mathrm{Perm}}$ is determined by taking the $\alpha$ quantile of the permutation distribution of $\bar{p}_S$, ensuring exact validity. This critical value $\alpha^{\mathrm{Perm}}$ may be calculated either exactly in small samples or through asymptotic approximations. A well-known asymptotic approach when $m$ and $|S|$ are both large is that of \cite{hoeffding1948ustatistics}, which leads to a much more powerful test compared to the combination approach for arbitrary dependence. 
In particular, the WMW test rejects $H_S$ when
\begin{align*}
\bar{p}_S \leq  & \, \frac{m+2}{2(m+1)} - z_{1-\alpha}  \sqrt{\frac{m(|S|+m+1)}{12 |S|(m+1)^2 }},
\end{align*}
where $z_{1-\alpha}$ is the $(1-\alpha)$ quantile of the standard normal distribution. For $\alpha=0.1$ and $m=|S|=100$, this translates to rejecting $H_S$ if $\bar{p}_S \leq 0.453$, which is a much better critical value than the $\alpha/2=0.05$ required for arbitrary dependence.

\cite{lehmann53power} investigated the power of rank tests in the case when the distribution of the scores assigned to outliers in the test set is equal to $G(F)=F^{k}$ for any integer $k\geq 2$, which is referred to as \textit{Lehmann's alternative} \citep{balakrishnan2021my}.
Here, $F^{k}$ represents the distribution of the maximum of $k$ independent random variables with distribution $F$. If $X \sim F$ and $Y \sim F^{k}$, then $\mathbb{P}(X<Y)=k/(k+1)$. It is worth noting that the distance of $F^{k}$ from $F$ remains the same for any specification of $F$.

Under Lehmann's alternative $G(F) = F^k$, for integers $k\geq2$, the associated density function is $g(u)= ku^{k-1}\I{u\in[0,1]}$ and the test statistics in~\eqref{eq:def-TG} reduces to:
\begin{align}\label{eq:def-Tk}
  T^{\text{WMW, k}} & = \frac{k}{(N+1)\cdot\ldots\cdot(N+k-1)}\sum_{j=m+1}^N\prod_{l=0}^{k-2} (R_j+l),
\end{align}
which are the locally most powerful rank tests against the class of Lehmann's alternatives. 
For $k=2$, the test statistic in~\eqref{eq:def-Tk} differs from the WMW statistic only by a multiplicative constant and therefore yields an equivalent test.

Rank tests against Lehmann’s alternative were also studied in 
\cite{conover1988locally} and \cite{rosenbaum2007confidence}.

\subsection{Shirashi’s Locally Most Powerful Rank Test} \label{app:local-shirashi}

\begin{theorem}[\cite{shiraishi1985local}]\label{thm:LMPI}
    Assume $X_1,\ldots,X_m$ i.i.d.~with continuous c.d.f.~$F$, and $Y_1,\ldots,Y_n$ i.i.d. with c.d.f. $(1-\theta)F + \theta G(F)$ for some $\theta \in [0,1]$ and some continuous c.d.f. $G$ on $[0,1]$  with density $g$.
    Then, the LMPR test rejects  $H_0: \theta = 0$ for large values of:
\begin{eqnarray*}
    T^{\text{g}}  = \sum_{j = m+1}^{N} \E{g(U_{N}^{(R_{j})})}.
\end{eqnarray*}
\end{theorem}

Under the setup of Section~\ref*{sec:new_local-g-wmw}, we now write the local asymptotic power function of Shiraishi's LMPR test and compare its relative efficiency with that of the Neyman–Pearson optimal test. Let $\pi^{\phi}_N(\theta) = \mathbb{E}_\theta( \phi )$ denote the power function of a test $\phi$ of size $\alpha$ for  $\theta \in [0,1]$ and overall sample size $N$, and assume that $n/N \rightarrow \lambda \in (0,1)$. Consider a sequence of local alternatives $\theta_N = h/\sqrt{N}$ converging to the null hypothesis for some $h>0$.
The \emph{local asymptotic power function} of Shiraishi's LMPR test $\phi^g$
 is given by \citet[Theorem 1]{shiraishi1985local} as
$$\pi^{\phi^{\mathrm{g}}}(h) := \lim_{N\rightarrow \infty}\pi^{\phi^{\mathrm{g}}}_N\Big(\frac{h}{\sqrt{N}}\Big) = \Phi\Big(z_{\alpha} + h \sqrt{I_g} \Big),\qquad I_g = \lambda(1-\lambda)\mathbb{V}\mathrm{ar}(g(U)),$$
where $I_g$ is the local asymptotic information. 
The local asymptotic power is maximized at $\lambda=1/2$, so the Shiraishi's test is most powerful for balanced calibration and test sample sizes. This follows from the fact that the number of cross-group comparisons is of order 
$mn = \lambda(1-\lambda)N^2$, which is maximized at $\lambda=1/2$. This observation provides a concrete and practically relevant guideline for test design.

By contrast, the local asymptotic power function of the optimal test $\phi^{opt}$, which is the likelihood ratio test (LRT) by the Neyman-Pearson lemma and assumes oracle knowledge of $F$, $G$ and $\theta$ (see also Appendix~\ref{app:optimal_scoring}), is given by
$$\pi^{\phi^{opt}}(h) := \lim_{N\rightarrow \infty}\pi^{\phi^{opt}}_N\Big(\frac{h}{\sqrt{N}}\Big) = \Phi\Big(z_{\alpha} + h \sqrt{I^{opt}_g} \Big), \qquad I^{opt}_g = \lambda\mathbb{V}\mathrm{ar}(g(U)).$$
The power of the LRT increases with $\lambda$ because the test statistic depends only on the test sample. Since the inlier distribution $F$ is assumed known, the calibration sample is not used. It follows that the asymptotic relative efficiency (ARE), or Pitman efficiency, of Shiraishi's LMPR test relative to the LRT optimal test equals $1-\lambda$ \citep[Corollary 1]{shiraishi1985local}. 
This shortfall in asymptotic efficiency is not a limitation of the LMPR test per se, but rather a consequence of the two-sample setting. In the one-sample setting, the LMPR test may achieve optimal power, as shown in \cite{shiraishi1986optimum}.

The dense and sparse regimes $\sqrt{N}\theta_N \rightarrow \infty$ and $\sqrt{N}\theta_N \rightarrow 0$ are studied nonparametrically in \citet{huang2023detecting} under a location-shift model for the outliers.

\section{Estimating the Outlier Distribution} \label{app:outlier-distr}

This section details how to empirically estimate the outlier distribution $G$ and its density function $g$, which is essential for implementing a data-driven version of Shirashi's local testing procedure as described in Section~\ref{sec:new_local-g-wmw}.
Recall from Section~\ref{sec:new_local-g-wmw} that Shirashi's optimality result \citep{shiraishi1985local} is based on the following mixture model:
\begin{align} \label{eq:lmpi-mixture-model}
\begin{split}
  X_1,\ldots,X_m & \overset{\text{i.i.d.}}{\sim} F, \\
  Y_1,\ldots,Y_n & \overset{\text{i.i.d.}}{\sim} (1-\theta)F + \theta G(F),
\end{split}
\end{align}
where $\theta \in [0,1]$ represents the proportion of outliers, and $G$ is a distribution function on $[0,1]$ with density $g$. 
In the following, we describe a practical and effective approach for computing an estimate $\hat{g}$ of $g$ using the data in $(X_1,\ldots,X_m,Y_1,\ldots,Y_n)$, while ensuring that an approximately valid and approximately LMPR test can still be performed using Shirashi's method conditional on $\hat{g}$.

\subsection{Diluting the Test Set for Conditional Exchangeability}

The key idea, inspired by approaches proposed in different contexts by \citet{liang2022integrative} and \citet{marandon2022machine}, is to randomly split the calibration sample $(X_1,\ldots,X_m)$ into two disjoint subsets, namely $(X_1,\ldots,X_{m_1})$ and $(X_{m_1+1}, \ldots, X_{m})$, for some choice of $m_1$ such that $1 < m_1 < m$, like $m_1 = \lceil m/2 \rceil$.
Intuitively, $(X_1,\ldots,X_{m_1})$ is used as a smaller calibration set and $(X_{m_1+1}, \ldots, X_{m}, Y_1,\ldots,Y_n)$ as a ``diluted'' test set when fitting $g$.
More precisely, the estimation of 
$g$ is carried out by fitting the following mixture model:
\begin{align} \label{eq:lmpi-mixture-model-2}
\begin{split}
  X_1,\ldots,X_{m_1} & \overset{\text{i.i.d.}}{\sim} F, \\
  X_{m_1+1}, \ldots, X_{m}, Y_1,\ldots,Y_n & \overset{\text{i.i.d.}}{\sim} (1-\theta_1)F + \theta_1 G(F),
\end{split}
\end{align}
where $\theta_1 = (\theta n + m_1) / (n + m_1)$ and $G$ is the same outlier distribution function as in Equation~\eqref{eq:lmpi-mixture-model}.
Any approach may be applied to fit this mixture model, as long as it is invariant to the ordering of the data in $(X_{m_1+1}, \ldots, X_{m}, Y_1,\ldots,Y_n)$. Otherwise, we simply need to permute the scores of the diluted test set before estimating $g$. 
The motivation for this method is that it ensures the inlier scores among $(Y_1,\ldots,Y_n)$ remain exchangeable with the smaller calibration set $(X_{m_1+1}, \ldots, X_{m})$ conditional on the estimated $\hat{g}$. 
Therefore, any downstream conformal inferences that utilizes $(X_{m_1+1}, \ldots, X_{m})$ as a reference (or calibration) set are still valid.
This is akin to the idea underlying AdaDetect, the individual outlier identification method  proposed by \citet{marandon2022machine}.


A subtle point worth highlighting is that conditional on $\hat{g}$ the inliers among $(Y_1,\ldots,Y_n)$ are only exchangeable with $(X_{m_1+1}, \ldots, X_{m})$ but not independent. 

Nevertheless, Shiraishi’s adaptive rank test, which replaces the unknown outlier density $g$ with its estimate $\hat{g}$, retains finite-sample validity, and the asymptotic approximation of its null distribution remains valid, as shown in Appendix~\ref{app:adaptive_validity}.

\subsection{An Intuitive Mixture Modeling Approach}

Although there is an extensive literature on fitting mixture models like the one in~\eqref{eq:lmpi-mixture-model-2}, covering a wide range of parametric and nonparametric methods, a full review is beyond our scope. Instead, we present a simple and intuitive approach that has proven effective for our purposes.
\begin{enumerate}

\item Rescale the data, $X_1,\ldots,X_{m},Y_{1},\ldots,Y_{n}$, to be between 0 and 1, if they are not already scaled as such (many standard classifiers output scores in this range by default).

\item Fit a $\text{Beta}(b_1,b_2)$ distribution to $X_1,\ldots,X_{m_1}$ via maximum likelihood. To avoid numerical instabilities from values too close to 0 or 1, threshold all data points between 0.001 and 0.999. Let $\hat{b}_1$ and $\hat{b}_2$ be the estimated parameters.

\item Apply an inverse CDF transform based on the fitted $\text{Beta}(\hat{b}_1,\hat{b}_2)$ distribution to $(X_{m_1+1}, \ldots, X_{m}, Y_1,\ldots,Y_n)$, making the inlier distribution approximately uniform on $[0,1]$. Let $(\tilde{X}_{m_1+1}, \ldots, \tilde{X}_{m}, \tilde{Y}_1,\ldots,\tilde{Y}_n)$ denote the transformed test set.

\item Fit the following mixture model via maximum likelihood:
\begin{align} \label{eq:lmpi-mixture-model-3}
\tilde{X}_{m_1+1}, \ldots, \tilde{X}_{m}, \tilde{Y}_1,\ldots,\tilde{Y}_n & \overset{\text{i.i.d.}}{\sim} (1-\theta_1) \text{Uniform}(0,1) + \theta_1 G,
\end{align}
where $G$ is approximated by a $\text{Beta}(b'_1,b'_2)$ distribution. The estimated parameters are denoted as $\hat{b}'_1$ and $\hat{b}'_2$.

\item Output the estimated outlier distribution: $\hat{G} \approx \text{Beta}(\hat{b}'_1,\hat{b}'_2)$.

\end{enumerate}

While this approach may seem somewhat heuristic, as it is not aimed at making rigorous inferences about $g$, it is sufficient for our purposes. The invariance of this estimation procedure to the ordering of $(\tilde{X}_{m_1+1}, \ldots, \tilde{X}_{m}, \tilde{Y}_1,\ldots,\tilde{Y}_n)$ ensures the validity of our test regardless of how closely $\hat{g}$ approximates the true $g$. Moreover, as demonstrated in our numerical experiments, this estimation method performs well enough in practice to provide our testing method with a noticeable advantage over standard approaches, such as the WMW test, particularly in scenarios where the oracle Shiraishi test has a significant edge.

\subsection{Enabling Closed Testing Shortcuts for Monotone Statistics}

When applied within a closed testing framework, Shirashi's testing procedure can become computationally expensive without an appropriate shortcut. As discussed in Appendix~\ref{app:shortcuts},  a fast and exact shortcut exists, but it relies on the monotonicity of the outlier probability density.

Although more sophisticated approaches have been studied to address this issue \citep{patra2016estimation}, a simple post-hoc solution that works well for our purposes is to approximate the probability density function of the $\text{Beta}(\hat{b}'_1,\hat{b}'_2)$ distribution, obtained as described in the previous section, with a monotone increasing function using isotonic regression on a finite grid of values. Since it may not be clear in advance whether the best approximation of the outlier density is increasing or decreasing, the direction of monotonicity is adaptively chosen by comparing the residual sum of squares for the two corresponding isotonic regression models.

While this method is somewhat heuristic, we have found it to be more reliable in practice than other, more sophisticated approaches we tried. Nonetheless, it is likely that this approach could be further refined with some additional effort.








\subsection{Shiraishi’s Adaptive Rank Test: Finite-Sample Validity and Asymptotic Null Distribution}\label{app:adaptive_validity}

Let $\hat{F}$ denote any estimator of $F$ based on $(X_1,\ldots,X_{m_1})$. Define the probability integral-transformed variables 
$\tilde{X}_{j} = \hat{F}(X_{j})$, $j =m_1+1,\ldots,m$, and $\tilde{Y}_{j} = \hat{F}(Y_{j})$, $j \in [n]$, and estimate $g$ by fitting the mixture model
$$\tilde{X}_{m_1+1}, \ldots, \tilde{X}_{m}, \tilde{Y}_1,\ldots,\tilde{Y}_n  \overset{\text{i.i.d.}}{\sim} (1-\theta_1) \mathrm{Uniform}(0,1) + \theta_1 G,$$
where $G$ has density $g$ on $[0,1]$. 
Let $\hat{g}$  be the resulting estimator, and assume that it is invariant under permutations of
$(\tilde W_1,\ldots,\tilde W_{\tilde N}) = (\tilde{X}_{m_1+1}, \ldots, \tilde{X}_{m}, \tilde{Y}_1,\ldots,\tilde{Y}_n)$, where $\tilde N = m-m_1 + n$. 
No further assumptions on  $\hat g$ are imposed beyond the condition
(\ref{eq:regulation_condition_adaptive}) discussed below, which ensures the validity of the asymptotic approximation. In particular, the i.i.d. assumption underlying the fitted mixture model need not hold. 

Let $\tilde R_1,\ldots, \tilde R_{\tilde N}$ denote the ranks of $\tilde W_1,\ldots,\tilde W_{\tilde N}$.
The Shiraishi adaptive rank test statistic is defined as
\begin{align} \label{eq:adaptive_Shiraishi_stat}
  T^{\hat{\text{g}}} & = \sum_{j = m-m_1+1}^{ \tilde N} \E{\hat{g}(U_{\tilde N}^{(\tilde R_{j})})},
\end{align}
and the corresponding test by 
\begin{align} \label{eq:adaptive_Shiraishi_test}
\phi^{\hat{g}} = \mathds{1}\left\{ T^{\hat{g}}  > c_\alpha^{\hat{g}} \right\}.
\end{align}
Here $c_\alpha^{\hat{g}}$ denotes the permutation critical value, defined as the $k$-th ordered statistic of $\pi_1 T^{\hat{g}}, \ldots, \pi_{\tilde N !} T^{\hat{g}}$
where each $\pi$ is a permutation of $[\tilde N]$, and $\pi T^{\hat{g}}$ is the test statistic $T^{\hat{g}}$ computed on  $(\tilde W_{\pi(1)},\ldots, \tilde W_{\pi(\tilde N)})$, that is,
$$\pi T^{\hat{g}} = \sum_{j=m-m_1+1}^{\tilde N} \E{\hat{g}(U_{\tilde N}^{(\tilde R_{\pi(j)})})}.$$ 
The index $k$ is given by $k = \lceil (1-\alpha) N! \rceil$.

Finite-sample validity of the adaptive Shiraishi rank test follows from the exchangeability of the transformed vector
$(\E{\hat{g}(U_{ \tilde N}^{(\tilde R_1)})}, \ldots, \E{\hat{g}(U_{ \tilde N}^{(\tilde R_{\tilde N})})})$, as shown formally in the following proposition. 

\begin{proposition}\label{prop:adaptive_exchangeability}
If the vector $(\tilde X_{m_1+1},\ldots, \tilde X_n, \tilde Y_1,\ldots, \tilde Y_n)$ is exchangeable, then the transformed vector
$$(\E{\hat{g}(U_{ \tilde N}^{(\tilde R_1)})}, \ldots, \E{\hat{g}(U_{ \tilde N}^{(\tilde R_{\tilde N})})})$$ is also exchangeable. 

\end{proposition}

The proof relies on the following lemma, which states that a transformation of exchangeable random variables is itself exchangeable whenever applying a permutation after the transformation is equivalent to applying the transformation after permuting the original variables.

\begin{lemma}[\cite{dean1990linear}]
Suppose $(W_1,\ldots,W_k) \in \mathcal{W}^k$ is a vector of exchangeable random variables. Fix a transformation $A:  \mathcal{W}^k \rightarrow  (\mathcal{W}')^q$. If for each permutation $\pi_1: [q] \rightarrow [q]$ there exists a permutation $\pi_2: [k] \rightarrow [k]$ such that 
$$\pi_1 A(w) = A(\pi_2 w) \quad \mathrm{for\,\,all\,\,} w \in \mathcal{W}^k,$$
then $A(\cdot)$ preserves exchangeability. 
\end{lemma}

\begin{proof}[Proof of Proposition~\ref{prop:adaptive_exchangeability}]
Let $A: \mathbb{R}^{\tilde N} \rightarrow \mathbb{R}^{\tilde N}$ denote the transformation that maps the vector $(\tilde X_{m_1+1},\ldots, \tilde X_m, \tilde Y_1,\ldots, \tilde Y_n)$ to the vector $(\E{\hat{g}(U_{ \tilde N}^{(\tilde R_1)})}, \ldots, \E{\hat{g}(U_{ \tilde N}^{(\tilde R_{\tilde N})})})$. 
Let $\pi:[ \tilde N] \rightarrow [ \tilde N]$ any permutation of the indices. Then for all $( w_1,\ldots, w_{\tilde N} ) \in \mathbb{R}^{ \tilde N}$
$$\pi A( w_1,\ldots, w_{\tilde N} ) = A( \pi( w_1,\ldots, w_{\tilde N}) ).$$
Thus, $A$ preserves exchangeability. 
\end{proof}

We now state a parallel result to Theorem~\ref{thm:asymp_distr_adaptive-exch} for the asymptotic approximation of the null distribution of Shiraishi’s adaptive test.

\begin{theorem}
\label{thm:asymp_distr_adaptive-exch}
    Suppose that
\begin{eqnarray}\label{eq:regulation_condition_adaptive}
        \frac{\max_{1\leq r \leq \tilde N}(\E{\hat g(U_{\tilde N}^{(r)})} -\hat \mu_{\tilde N})^2}{\min(n,m-m_1) \hat \sigma^2_{\tilde N}} \xrightarrow{\mathbb{P}} 0,
    \end{eqnarray}
    where $$\hat \mu_{\tilde N} = \frac1{\tilde N}\sum_{r\in[\tilde N]}\E{\hat g(U_N^{(r)})},\qquad  
    \hat \sigma^2_{\tilde N} = \frac1{\tilde N-1} \sum_{r \in [\tilde N]}( \E{\hat g(U_{\tilde N}^{(r)})} -\hat \mu_{\tilde N} )^2,$$
    and where $\hat g$ is invariant under permutations of the pooled sample $(\tilde X_{m_1+1},\ldots, \tilde X_m, \tilde Y_1,\ldots,\tilde Y_n)$. 
    Then, under the null hypothesis 
    $\tilde H^*_0: \mathrm{the\,\,vector\,\,}(\tilde X_{m_1+1},\ldots, \tilde X_m, \tilde Y_1,\ldots,\tilde Y_n)\,\,\mathrm{is\,\,exchangeable},$ the standardized Shiraishi adaptive statistic converges in distribution to a standard normal,
    \begin{eqnarray}\label{eq:rescaled-LMPI}
        \frac{T^{\hat{\mathrm{g}}} - n\hat\mu_N}{\sqrt{\frac{(m-m_1) n}{\tilde N} \hat{\sigma}^2_{\tilde N}}} \xrightarrow{d} N(0,1).
    \end{eqnarray}
\end{theorem}

\begin{proof}
Conditional on the pooled data
\((\tilde X_{m_1+1},\ldots,\tilde X_m,\tilde Y_1,\ldots,\tilde Y_n)\),
the estimator \(\hat g\) is fixed, and the quantities
\(\E{\hat{g}(U_{ \tilde N}^{(1)})}, \ldots, \E{\hat{g}(U_{ \tilde N}^{(\tilde N)})}\)
are deterministic values.
Since condition~\eqref{eq:regulation_condition_adaptive} holds in probability,
it holds for all realizations of the pooled data except on a set of probability
tending to zero. We restrict attention to such realizations in what follows.

The Shiraishi adaptive statistic can be written as
    \begin{align*} 
  T^{\hat{\text{g}}} & = \sum_{r \in [\tilde N]} \E{\hat{g}(U_{\tilde N}^{(r)})} C_r,
\end{align*}
where $C_r = 1$ if the unit with rank $r$ is assigned to the test sample, and $C_r = 0$ otherwise.
Exchangeability under \(\tilde H_0^*\) implies that, conditional on the pooled data,
\((C_1,\ldots,C_{\tilde N})\) is uniformly distributed over all vectors with exactly
\(n\) ones. Thus, the null distribution of \(T^{\hat g}\) is induced entirely by
permutations of \((C_1,\ldots,C_{\tilde N})\), which is equivalent to sampling
\(n\) units without replacement from the finite population
$
\{\E{\hat{g}(U_{ \tilde N}^{(1)})}, \ldots, \E{\hat{g}(U_{ \tilde N}^{(\tilde N)})}\}$.
Consequently, \(T^{\hat g}/n\) is the sample mean of this finite population, and the
result follows from Theorem~1 of \citet{li2017general}.
\end{proof}

Note that if $\hat g$  is bounded in probability, then  condition~\eqref{eq:regulation_condition_adaptive} holds. This property is satisfied, for example, by the nonparametric estimator $\hat g$ of \citet{patra2016estimation}, and may alternatively be ensured via truncation of $\hat g$.

\subsection{Shiraishi’s Adaptive Rank Test: Local Asymptotic Power}\label{app:adaptive_localpower}

We now show that, under suitable assumptions, the estimator $\hat{g}$ of \citet{patra2016estimation} satisfies condition \eqref{eq-L2consistency}. 
Consequently, the adaptive rank test based on the Patra-Sen estimator has the same local asymptotic distribution as the oracle LMPR test and therefore achieves the same local asymptotic power. Moreover, the closed-testing shortcut described in Appendix A3.4 applies, giving an adaptive test that is directly usable in practice.

In the procedure described in Appendix~\ref{app:outlier-distr}, the calibration sample $(X_1,\ldots,X_m)$ is split into two parts. 
We estimate \(F\) by the empirical distribution function \(\hat F\) based on \(X_1,\ldots,X_{m_1}\). Define the probability integral-transformed variables 
$\tilde{X}_{j} = \hat{F}(X_{j})$, $j =m_1+1,\ldots,m$, and $\tilde{Y}_{j} = \hat{F}(Y_{j})$, $j \in [n]$, so that the inlier distribution is discrete Uniform on $[0,1]$.
The distribution of the inlier can be made exactly Uniform on $[0,1]$  by applying the randomized probability integral-transform
$$\tilde W = \hat{F}(W - \epsilon) + U[\hat{F}(W) - \hat{F}(W-\epsilon)]$$
for a sufficiently small $\epsilon>0$, where $U\sim \mathrm{Uniform}(0,1)$. 

Let  $\tilde W_1,\ldots, \tilde W_{\tilde N}$ denote the pooled transformed sample, where $\tilde N=(m-m_1)+n$. 
Conditional on $(X_1,\ldots,X_{m_1})$, the pooled sample satisfies
$$\tilde W_1,\ldots, \tilde W_{\tilde N} \stackrel{\mathrm{i.i.d.}}{\sim} (1-\theta_1) \mathrm{Uniform}(0,1) +  \theta_1 G.$$
We further assume that $G$ has a continuous, bounded, and non-increasing density $g$, and the mixture model is identifiable, for example by the condition $g(1)=0$, placing us in the setting required for the estimator $\hat{g}_{\tilde N}$ of \citet{patra2016estimation}. 

The Grenander-type estimator $\hat g_{\tilde N}$ proposed by
\citet{patra2016estimation} satisfies, for any $p\in[1,2)$,
\[
\int_0^1 \bigl|\hat g_{\tilde N}(u)-g(u)\bigr|^p\,du
= O_{\mathbb P}\!\left(\tilde N^{-p/3}\right),
\]
provided that assumptions (A1), (A2), and (A2$'$) in
\citet{durot2007l_p} are satisfied; see Corollary~1 therein.

The estimation error $(\hat g_{\tilde N}-g)$ is uniformly bounded in probability
because both terms are uniformly bounded. 
This boundedness, together with the above $L^p$ convergence,
implies convergence in $L^2$: for any $p\in[1,2)$,
\[
\int_0^1 \bigl|\hat g_{\tilde N}(u)-g(u)\bigr|^2\,du
\le
\biggl(\sup_{u\in[0,1]} \bigl|\hat g_{\tilde N}(u)-g(u)\bigr|\biggr)^{2-p}
\int_0^1 \bigl|\hat g_{\tilde N}(u)-g(u)\bigr|^p\,du
\xrightarrow{\mathbb P} 0.
\]

Let $u_r=r/(\tilde N+1)$ and set $\delta_{\tilde N}(u)=\hat g_{\tilde N}(u)-g(u)$. Then
\[
\frac{1}{\tilde N}\sum_{r=1}^{\tilde N} (\hat b_r-b_r)^2
= \int_0^1  \delta_{\tilde N}(u)^2\,du + E_{\tilde N},
\]
where
\[
E_{\tilde N}
:= \frac{1}{\tilde N}\sum_{r=1}^{\tilde N}  \delta_{\tilde N}(u_r)^2
   - \int_0^1  \delta_{\tilde N}(u)^2\,du.
\]
Since $g$ and $\hat g_{\tilde N}$ are non-increasing and uniformly bounded,
they have bounded total variation, denoted by $\mathrm{TV}(g)$ and
$\mathrm{TV}(\hat g_{\tilde N})$. Consequently, $\delta_{\tilde N}$ also has bounded
total variation, and
\[
\mathrm{TV}\!\left( \delta_{\tilde N}^2\right)
\le 2\| \delta_{\tilde N}\|_\infty\,\mathrm{TV}( \delta_{\tilde N})
= O_{\mathbb P}(1).
\]
It follows that
\[
|E_{\tilde N}|
\le \frac{\mathrm{TV}( \delta_{\tilde N}^2)}{\tilde N}
= o_{\mathbb P}(1),
\]
and therefore
\[
\frac{1}{\tilde N}\sum_{r=1}^{\tilde N}(\hat b_r-b_r)^2
\xrightarrow{\mathbb P} 0.
\]

Other nonparametric estimators for the unknown outlier density are studied in \cite{matias2014nonparametric} and \cite{liang2025nonparametric}.

\section{Computational Shortcuts for Closed Testing} \label{app:shortcuts}

In general, computing $d(S)$ in~\eqref{eq-dS} involves the evaluation of exponentially many tests, which hinders its practical application. For specific local tests, however, there exist polynomial-time \emph{shortcuts} \citep{goeman2019simultaneous, goeman2021onlyclosed, tian2023Large-scale}. 

Shortcuts fall into two categories: exact shortcuts that yield identical rejections
to the full closed testing procedure, and approximate or conservative shortcuts that may produce fewer rejections than the full procedure. The shortcuts presented in this section are exact, with the exception of an approximate shortcut discussed at the end of the section, based on the method of \cite{gazin2024transductive}. 

The table below summarizes the computational complexity of these shortcuts for computing the lower bound $d$ on the total number of outliers, as well as the lower bound  $d(S)$ for a subset $S$ of the test sample. In all exact shortcuts considered in this paper, computation of $d(S)$ requires first computing $d$. Since $d$ does not depend on $S$, it needs to be computed only once. 

The reported computational complexity includes a preparatory step
in which the 
$m$ calibration and $n$ test scores are sorted, the ranks of the pooled scores are computed, and the conformal $p$-values are obtained from these ranks, and are therefore already ordered.
This requires \(\mathcal{O}(n \log n)\) time when $m$ is of the same order as $n$.
Existing shortcuts for Simes, WMW, and Fisher local tests are implemented efficiently, with computational complexity linearithmic in
$n$ for both $d$ and $d(S)$, matching the cost of sorting $n$ values.
The quadratic and cubic complexities for computing $d$ and $d(S)$ for Storey-Simes and Shiraishi local tests arise from naive implementations of Algorithms~\ref{alg-ASimesCT} and~\ref{alg:ShiraishiCT}, and could likely be further refined with additional effort.

\begin{table}[H]
    \centering
    \begin{tabular}{c|c|c}
    \toprule
        Local test & \multicolumn{2}{c}{Computational complexity for computing the lower bound}  \\
                & for the total number of outliers & for the number of outliers in a subset  \\
          \midrule
        Simes & $\mathcal{O}(n\log n)$ & $\mathcal{O}(n\log n)$  \\
        Storey-Simes & $\mathcal{O}(n^2)$ & $\mathcal{O}(n^3)$  \\
        Shiraishi & $\mathcal{O}(n^2)$ & $\mathcal{O}(n^3)$  \\
        WMW / Fisher & $\mathcal{O}(n\log n)$ & $\mathcal{O}(n\log n)$  \\
        \hline
    \end{tabular}
    \caption{Computational cost of the shortcut procedures for Algorithm~\ref{alg:closed-testing} for computing the lower bound $d$ for the total number of outliers and the lower bound $d(S)$ on the number of outliers in a subset $S$ of the test set,  as a function of the chosen local test. 
    }
    \label{tab:comp-costs}
    \end{table}

\subsection{Shortcuts for Simes Local Tests} \label{app:shortcuts-simes}

With Simes' local test, the shortcut described in~\cite{goeman2019simultaneous} and given in Algorithm~\ref{alg-SimesCT} is exact and allows calculating $d_{\mathrm{Simes}}(S)$ in linear time, after an initial preparatory step that takes $\mathcal{O}(n \log n)$ time.

\begin{algorithm}[!htb]
\SetKwInOut{Input}{Input}
\Input{$p$-values $p_1,\ldots,p_n$;  significance level $\alpha \in (0,1)$.}

Sort the $p$-values $p_{(1)} \leq \ldots \leq p_{(n)}$;

Compute $h=\max\{0\leq k \leq n: p_{(n-k+j)} > j\alpha/k \,\,\mathrm{for\,\,} j=1,\ldots,k\}$;

Compute the lower bound on the total number of outliers $d_{\mathrm{Simes}} = d_{\mathrm{Simes}}([n]) = n-h$;

\textbf{Query \(S\subset[n]\):} compute $d_{\mathrm{Simes}}(S) = \min\{0\leq k \leq |S|: p_{(k+j:S)} > j\alpha/h \,\,\mathrm{for\,\,} j=1,\ldots,|S|-k\}$;

\SetKwInOut{Output}{Output}
\Output{A simultaneous $(1-\alpha)$-confidence lower bound $d_{\mathrm{Simes}}(S)$ for the number of outliers in $S$, for $S=[n]$ and for any queried subset $S\subset[n]$.}

\caption{Shortcut for closed testing with Simes' test~\citep{goeman2019simultaneous}}\label{alg-SimesCT}
\end{algorithm} 

Line 2 of Algorithm~\ref{alg-SimesCT} gives $h$, which is a $(1-\alpha)$-confidence upper bound for the number of true inliers $|I_0|$ in the overall test set. Then  $d_{\mathrm{Simes}} = n - h$ is a $(1-\alpha)$-confidence lower bound for the number of false hypotheses (cfr.~number of true outliers in the test sample) $|I_1|$.

The index set of the discoveries (cfr. localized outliers) is given by $D_{\mathrm{Simes}} =  \{ j \in [n]: d_{{\mathrm{Simes}}}(\{j\}) = 1\}
    = \{ j \in [n]: p_{j} \leq \alpha/h \}.$
All hypotheses $H_j$ with index $j \in D_{\mathrm{Simes}}$ are rejected by closed testing while controlling the familywise error rate (FWER) at level-$\alpha$ \citep{hommel1988stagewise}: $\mathbb{P}( |D_{\mathrm{Simes}} \cap I_0 |>0 ) \leq \alpha$.

\subsection{Closed Testing with Simes Local Tests vs BH} \label{app:closed-testing-simes-bh}

The Simes' test and the Benjamini-Hochberg (BH) procedure share the same critical values, and we can expect some relationships between the two methods.

Let $\phi$ denote a level-$\alpha$  test for the global null hypothesis $H = H_{[n]}$. Closed testing provides the following quantities: 
\begin{itemize}
    \item $\mathds{1}\{\phi=1\}$ to indicate outlier detection; 
    \item $d$ to provide outlier enumeration;
    \item $|D|$ to quantify outlier discovery.
\end{itemize}
Outlier enumeration involves counting the number of true outliers in the test sample without necessarily pinpointing their exact locations. Outlier enumeration is implied by outlier discovery (by using $|D|$) and, in turn, implies outlier detection (by using $\mathds{1}\{d>0\}$).
These quantities can be applied to any subset $S$ of the test sample, obtaining $\mathds{1}\{\phi_S=1\}$, $d(S)$, and $|D \cap S |$. 
Simultaneous confidence bounds $d(S)$ obtained by closed testing satisfy the following relations:
\begin{eqnarray}\label{CTrelations}
\mathds{1}\{\phi_S=1\} \geq \mathds{1}\{d(S)>0\}, \qquad d(S) \geq |D \cap S |.
\end{eqnarray}

The first relation,
    $1\{\phi_S=1\} \geq 1\{d(S)>0\}$, follows from the identity $1\{d(S)>0\} = 1\{\phi_K=1, \forall \,K \supseteq S\}$. It highlights that a procedure which fixes $S$ a priori and tests only $H_S$ is at least as powerful as testing the same hypothesis with correction for multiple testing.
In the relevant case $S=[n]$, however, the two approaches coincide. The second relation emphasizes that the number of outliers obtained by enumeration is at least as large as the number of outliers identified.

The BH algorithm applied to $p_1,\ldots,p_n$ at level $\alpha$ returns the index set of the discoveries:
\begin{eqnarray}\label{eq-DBH}
D_{\mathrm{BH}} &=&  \{j \in [n]:  p_j \leq \alpha d_{\mathrm{BH}}/n\}
\end{eqnarray}
where
\begin{eqnarray}\label{eq-dBH}
d_{\mathrm{BH}} &=&  \max\left\{k \in \{0,\ldots,n\}: \sum_{j=1}^{n} \mathds{1}\{p_j \leq \alpha k/n \} \geq k \right\}.
\end{eqnarray}
is the number of discoveries made by the BH procedure. To distinguish between discoveries with FWER and FDR guarantees, we will refer to discoveries made by closed testing as FWER discoveries, while those made by an FDR controlling procedure will be called FDR discoveries.

The Simes' test for $H$ and the BH procedure share the same critical values, and we can expect some relationships between the two methods.
A simple observation is that the event where the CT$_{\mathrm{Simes}}$ and the BH procedure make at least one non-trivial statement is identical: 
\begin{eqnarray}\label{eq-weakFWER}
\mathds{1}\{d_{\mathrm{Simes}} > 0\}=\mathds{1}\{d_{\mathrm{BH}} > 0\}.
\end{eqnarray}
Consequently, the two methods have the same weak FWER control and the same power for outlier detection. 

This is illustrated in the right panel of Figure~\ref{fig:exp-lhco}, where the solid gray line (Simes) and the dotted gray line (BH) coincide, indicating that both procedures have identical power for rejecting the global null.
Furthermore, \cite{goeman2019simultaneous} showed that the confidence bound $d_{\mathrm{Simes}}$ is between the number of Hommel-FWER discoveries and the number of BH-FDR discoveries, i.e.
\begin{eqnarray}\label{eq-dSimesdBH}
    | D_{\mathrm{Simes}} | \leq d_{\mathrm{Simes}} \leq | D_{\mathrm{BH}} |.
\end{eqnarray}
Finally, note that the BH discoveries do
not provide the true discovery guarantee, i.e.,
\begin{eqnarray}\label{eq-BHnoguarantee}
\mathbb{P}(| D_{\mathrm{BH}} \cap S| \leq |I_1 \cap S| \mathrm{\,\,for\,\,all\,\,}S  )
\end{eqnarray}
may be less than $1-\alpha$.

\subsection{Shortcuts for Storey-Simes Local Tests} \label{app:shortcuts-adaptive-simes}

The BH procedure guarantees that expected proportion of inliers among the discoveries is bounded by $\alpha\pi_0$, where $\pi_0=|I_0|/n$ is the unknown proportion of inliers in the test sample. When $\pi_0$ is expected to be not close to 1, $\alpha\pi_0$ falls below the target level $\alpha$, making the BH procedure slightly too conservative. This motivates adjusting the level into $\alpha/\hat{\pi}_0$ with $\hat{\pi}_0$ an estimate of $\pi_0$, resulting in a $\pi_0$-adaptive version of the BH algorithm.

The idea of the adaptive BH procedure extends to the Simes test, leading to a $\pi_0$-adaptive version of Simes' test for $H_S$ that incorporates the proportion of true null hypotheses in $S$. The test is formulated as:
\begin{equation}\label{eq-ASimeslocaltest}
    \phi_S^\mathrm{Storey-Simes} = \mathds{1}\left\{\min_{k\in \{1,\ldots,|S|\}} \{|S| p_{(k:S)}/k\}\leq \alpha/ \hat\pi_0^S \right\},
\end{equation}
where 
\begin{eqnarray}\label{eq:storey-estimator}
\hat{\pi}^{S}_0 &=& \frac{1+\sum_{j\in S} \mathds{1}\{p_j > \lambda\} }{|I|(1-\lambda)}   
\end{eqnarray} 
and $\lambda = h/(|S|+1)$ for any pre-specified integer $h$. This estimator is known as the Schweder-Spjøtvoll or Storey's estimator \citep{schweder1982plots, storey2002direct, storey2004strong}.
The test $\phi_I^\mathrm{Storey-Simes}$ was considered in \cite{bogomolov2023testing} and \cite{heller2023simultaneous}.

The closed testing with adaptive Simes' test ensures the true discovery guarantee by observing that the adaptive Simes' test rejects $H_{I_0}$ if and only if the adaptive Benjamini-Hochberg procedure applied to conformal $p$-values $(p_j, j\in I_0)$ rejects at least one hypothesis. This rejection occurs with a probability bounded by $\alpha$, as shown in Corollary 2.5 and Theorem 2.6 in \cite{bates2021testing}, and Theorem 3.1 in \cite{mary2022semisup}.

The shortcut given in Algorithm~\ref{alg-ASimesCT} allows calculation of $d_\mathrm{Storey-Simes}(S)$ in cubic time.
Algorithm~\ref{alg-ASimesCT} is based on a worst-case argument. For any $L \subseteq [n]$ with $|L|=l$,  the Storey-Simes test rejects $H_L$ if
$$  \displaystyle  \min_{  j \in L } \Big(  l \frac{p_{(j:L)}}{j} \Big)  \leq \alpha \Big(\frac{l (1-\lambda)}{1 + \sum_{j \in [L]} \mathds{1}\{p_{j} > \lambda\}} \Big).$$
Among all subsets $L$ of size $l$, the least favorable (hardest to reject) choice is the set corresponding to the largest $l$ $p$-values $p_{(n-l+1)},\ldots,p_{(n)}$. 
Consequently, for a fixed $l$, it suffices to evaluate the Storey–Simes test on this worst-case subset. Algorithm~\ref{alg-ASimesCT} exploits this fact to implement closed testing exactly, while avoiding explicit enumeration of all subsets.

\begin{algorithm}[H]\label{alg-ASimesCT}
\SetKwInOut{KwInit}{Initialize}
\SetKwInOut{Input}{Input}
\SetKwInOut{Output}{Output}
\Input{$p$-values $p_1,\ldots,p_n$;  significance level $\alpha \in (0,1)$; tuning parameter $\lambda \in (0,1)$.}

Sort the $p$-values $p_{(1)} \leq \ldots \leq p_{(n)}$;

$h \gets n$;

\While{$h \ge 1$ \ \textbf{and}\ 
$  \displaystyle  \min_{  j \in [h] } \Big(  h \frac{p_{(n-h+j)}}{j} \Big)  \leq \alpha \Big(\frac{h (1-\lambda)}{1 + \sum_{j \in [h]} \mathds{1}\{p_{(n-h+j)} > \lambda\}} \Big)$}{
    $h \gets h - 1$;
}

Compute $d_\mathrm{Storey-Simes} = d_\mathrm{Storey-Simes}([n]) = n-h$;

\textbf{Query  \(S\subset [n]\):} $d_\mathrm{Storey-Simes}(S) \gets 0$;

Sort the $p$-values in $S$, $p_{(1:S)}\leq \ldots \leq p_{(|S|:S)}$, and in $S^c$, $p_{(1:S^c)}\leq \ldots \leq p_{(n-|S|:S^c)}$;

\For{each $i\in\{1, \ldots, |S|\}$}{
\For{each $j\in\{0, \ldots, \max(h-|S|+i-1,0)\}$}{

  $l \gets |S|-i+j+1$;

Merge $(p_{(i:S)},\ldots,p_{(|S|:S)})$ and $(p_{(n-|S|-j+1:S^c)},\ldots,p_{(n-|S|:S^c)})$ and obtain the sorted values $q_{(1)} \leq \ldots \leq q_{(l)}$;

  \If{
  $\displaystyle  \min_{j \in [l]} \Big(  l \frac{q_{(j)}}{j} \Big)  > \alpha \Big(\frac{l (1-\lambda)}{1 + \sum_{j \in [l]} \mathds{1}\{q_j > \lambda\}} \Big)$,
  }{\Return{$d_\mathrm{Storey-Simes}(S)$}.}
  
}

$d_\mathrm{Storey-Simes}(S) = d_\mathrm{Storey-Simes}(S)+1$;
}

\Output{$(1-\alpha)$-confidence lower bound $d_{\mathrm{Storey-Simes}}(S)$ for the number of true discoveries in $S$, for $S=[n]$ and any queried subset $S \subset [n]$.}
\caption{Shortcut for closed testing with adaptive Simes local tests using the Storey estimator.}
\end{algorithm}

\subsection{Shortcut for Shiraishi Local Tests} \label{app:shortcuts-G}

The shortcut provided in Algorithm~\ref{alg:ShiraishiCT} is exact and enables the calculation of $d_\mathrm{Shiraishi}(S)$ in cubic time for a given monotone function $g$. The same shortcut applies when $g$ is replaced by a monotone estimate $\hat{g}$. In the following, we assume that $g$ is non-decreasing.

Let $a^l = (a^l_1,\ldots,a^l_{m+l})$ be a vector where the $r$th element is $a^l_r = \mathbb{E}[g(U^{(r)}_{m+l})]$. Since $g$ is non-decreasing, the elements of the vector are ordered such that $a_1^l \leq \ldots \leq a^l_{m+l}$. The score vector $a^l$ can be estimated with the desired accuracy using Monte Carlo simulation. 

Alternatively, one may use the  approximation $b^l_r = g(r/(m+l+1))$, which is computationally efficient.

The shortcut given in Algorithm~\ref{alg:ShiraishiCT} allows calculation of $d_\mathrm{Shiraishi}(S)$ in cubic time when $a^l$ can be evaluated in linear time, as is the case when using the approximation $b^l$.
In a similar spirit to Algorithm~\ref{alg-ASimesCT}, Algorithm~\ref{alg:ShiraishiCT} is based on a worst-case argument and is therefore exact. For any $L \subseteq [n]$ with $|L|=l$,  the Shiraishi test statistic for testing $H_L$ is
$$
T_L^g =  \sum_{j \in L} a^{l}_{R_{m+j}}.
$$
Among all subsets $L$ of size $l$, the least favorable (hardest to reject) choice is the subset corresponding to the smallest
$l$ scores among $Y_1,\ldots,Y_n$. 

Shiraishi's test for $H_L$ is defined as $$\phi_{L}^{\mathrm{Shiraishi}} = \mathds{1}\{ T_L^g > c_\alpha^g(m,l) \}.$$
Using the asymptotic approximation, the critical value $c^g_\alpha(m,l)$ corresponds to the $(1-\alpha)$-quantile of the normal distribution with the following mean and variance:
\begin{equation*}
    l \mu_{m+l} = \frac{l}{m+l}\sum_{k\in[m+l]} a^l_k,\quad
    \frac{m l}{m+l} \sigma^2_{m+l} = \frac{ml}{(m+l)(m+l-1)}\sum_{k\in[m+l]} \Big(a^l_k-\frac{1}{m+l}\sum_{k\in[m+l]} a^l_k\Big)^2.
\end{equation*}

\begin{algorithm}[!htb]
\SetKwInOut{Input}{Input}
\Input{Calibration scores $(X_1,\ldots,X_m)$; test scores $(Y_1,\ldots,Y_n)$; a non-decreasing density function $g$;  significance level $\alpha \in (0,1)$.}

Sort the calibration score $X_{(1)}\leq \ldots \leq X_{(m)}$ and the test scores $Y_{(1)}\leq \ldots\leq Y_{(n)}$;

Compute the ranks $R_1,\ldots,R_N$ of the pooled sample  $(X_{(1)},\ldots,X_{(m)},Y_{(1)},\ldots,Y_{(n)})$

$h \gets n$;

\While{$h \ge 1$ \ \textbf{and}\ 
$\sum_{j=m+1}^{m+h} \mathbb{E}[g(U_{m+h}^{(R_j)})] > c^g_\alpha(m,h)$}{
    $h \gets h - 1$;
}

Compute $d_\mathrm{Shiraishi} = d_\mathrm{Shiraishi}([n]) = n-h$;

\textbf{Query  \(S\subset [n]\):} $d_\mathrm{Shiraishi}(S) \gets 0$;

Sort the test scores in $S$, $Y_{(1:S)}\leq \ldots \leq Y_{(|S|:S)}$, and in $S^c$, $Y_{(1:S^c)}\leq \ldots \leq Y_{(n-|S|:S^c)}$;

\For{$i = 1, \ldots, |S|$}{
\For{$j =  0,\ldots, \max(h-|S|+i-1,0)$}{

  $l \gets |S|-i+j+1$;

  Merge $(Y_{(1:S)},\ldots,Y_{(|S|-i+1:S)})$ and $(Y_{(1:S^c)}, \ldots, Y_{(j:S^c)})$ to obtain the sorted values $W_{(1)}\leq \ldots \leq W_{(l)}$;
  
  Compute the ranks $R_1,\ldots,R_{m+l}$ of  $(X_{(1)},\ldots,X_{(m)},W_{(1)},\ldots,W_{(l)})$;

  \If{
  $\sum_{j=m+1}^{m+l} \mathbb{E}[g(U_{m+l}^{(R_j)})] \leq c^g_\alpha(m,l)$,
  }{\Return{$d_\mathrm{Shiraishi}(S)$.}}

$d_\mathrm{Shiraishi}(S) = d_\mathrm{Shiraishi}(S)+1$;
  
}
}

\SetKwInOut{Output}{Output}

\Output{A $(1-\alpha)$-lower bound $d_\mathrm{Shiraishi}(S)$ for the number of outliers in $S$, for $S=[n]$ and any queried subset $S \subset [n]$.}

\caption{Shortcut for closed testing with Shiraishi local tests.} \label{alg:ShiraishiCT}
\end{algorithm}

\subsection{Shortcut for Monotone, Symmetric, and Separable Tests} \label{app:shortcuts-tian}

\cite{tian2023Large-scale} introduced an exact shortcut for the general case of closed testing with local tests satisfying three properties: \emph{monotonicity}, \emph{symmetry} and \emph{separability}. These properties are met by classic Mann-Whitney test and Fisher's combination test. For precise definitions of these properties, please refer to Appendix B in~\cite{tian2023Large-scale}.

Algorithm 1 in \cite{tian2023Large-scale} enables a linear-time computation of the $(1-\alpha)$-confidence lower bound $d(S)$ for any $S\subseteq[n]$, after a preparatory step involving the computation and sorting of the test statistics.

\subsection{Approximate Shortcuts Based on DKW Concentration}\label{app:bounds_DKW}

\cite{gazin2024transductive}  investigated the joint distribution of conformal $p$-values and derived a finite-sample Dvoretzky–Kiefer–Wolfowitz (DKW) type concentration inequality for their empirical distribution function. Although we have not done so in this paper, the approach of \citet{gazin2024transductive} could also be integrated within ACODE (specifically Algorithm~\ref*{alg:acode_oracle}) to complement the local testing procedures currently implemented.

For nested sets $R_i = \{j \in [n]:p_j \leq p_{(i)} \}$, $i \in [n]$, \cite{gazin2024transductive} proposed a simultaneous false discovery proportion (FDP) bound 
\begin{eqnarray*}
\overline{\mathrm{FDP}}_{\mathrm{DKW}}(R_i) = \frac{n_0(I_m(t) + \lambda_{\alpha,m,n_0})}{\max(1,|R_i|)}
\end{eqnarray*}
where $n_0= |I_0|$, $I_m(t) = \lfloor (m+1)t \rfloor / (m+1)$,
\begin{eqnarray*}
    \lambda_{\alpha,m,n} &=& \Psi^{(r)}(1),\\
    \Psi(x) &=& \left(\frac{\log(1/\alpha) + \log(1+\sqrt{2\pi} \frac{2 \tau_{m,n} x}{(m+n)^{1/2}} )}{2 \tau_{m,n}}\right)^{1/2} \wedge 1,
\end{eqnarray*}
$\Psi^{(r)}$ denotes the function $\Psi$ iterated $r$ times and $\tau_{m,n} = m n / (m+n)$. In practice the unknown $n_0$ should be replaced $n$ or by a suitable estimator $\hat n_0$; see 
\citet[Corollary 4.1]{gazin2024transductive}.
Since the finite-sample joint distribution of the conformal $p$-values can be simulated exactly, the constant $\lambda_{\alpha,m,n_0}$ can be calibrated numerically; see 
\citet[Remark 2.6]{gazin2024transductive}.

The above FDP bound is explicit and elegant but, as pointed out by \citet[Section B]{gazin2024transductive}, it can be conservative in some cases. \cite{gazin2024transductive} proposed an improved and more general numerical approach based on the notion of templates introduced by \cite{blanchard2020post}. Here we show that an improved version, extending to all sets, can also be obtained via \emph{interpolation} \citep{blanchard2020post,goeman2021onlyclosed}.
Specifically, we first translate the false discovery bound into the corresponding lower bound on the number of outliers in $S$: 
\begin{eqnarray}\label{eq:DKW}
    d_{\mathrm{DKW}}(S) &=& \left\{ 
    \begin{array}{ll}
      |R_i| - \bar{V}(R_i)  & \mathrm{if\,\,}S = R_i\mathrm{\,\,for\,\,some\,\,}1\leq i\leq n, \\
       0  & \mathrm{otherwise,}
    \end{array}
 \right.
\end{eqnarray}
where $$\bar{V}(R_i) = \min(\lfloor n_0(I_m(t) + \lambda_{\alpha,m,n_0})\rfloor , |R_i|)$$ is the upper bound on the number of false discoveries in $R_i$.
Here, the original upper bound on the number of false discoveries in $R_i$ is rounded down to an integer, and the expression is truncated at zero to ensure
$\bar{V}(R_i) \leq |R_i|$. 

Then, by Lemma C.2 in \cite{li2022simultaneous}, the interpolated version of 
$d_{\mathrm{DKW}}(S)$ is
\begin{eqnarray}\label{eq:DKW_improved}
    \bar{d}_{\mathrm{DKW}}(S) &=& \max_{i \in [n]} \{ |S \cap R_i| - \bar{V}(R_i) \} \vee 0.
\end{eqnarray}
This bound dominates the original one in the sense that
$$\bar{d}_{\mathrm{DKW}}(S) \geq d_{\mathrm{DKW}}(S),\qquad \forall\,S \subseteq [n].$$
The improvement is typically strict, since $d_{\mathrm{DKW}}(R_i)$ is not necessarily monotone in $R_i$; in particular, it may occur that $d_{\mathrm{DKW}}(R_i) > d_{\mathrm{DKW}}(R_k)$ for $R_i \subset R_k$. In contrast, the interpolated bound $\bar{d}_{\mathrm{DKW}}(S)$ enforces monotonicity and raises
$d_{\mathrm{DKW}}(R_k)$ to at least $d_{\mathrm{DKW}}(R_i)$, implying that $\bar{d}_{\mathrm{DKW}}(R_k) > d_{\mathrm{DKW}}(R_k)$. 

Moreover, by Lemma C.3 in \cite{li2022simultaneous}, the bound $\bar{d}_{\mathrm{DKW}}(S)$ in Equation (\ref{eq:DKW_improved}) is \emph{coherent}, in the sense that it cannot be further improved by interpolation \citep{goeman2021onlyclosed}.  
One can also apply Theorem 1 in \cite{goeman2021onlyclosed}  to connect the simultaneous bounds $\bar{d}_{\mathrm{DKW}}(S)$ to closed testing. The closed testing procedure implied by this theorem yields $d^{\mathrm{CT}}_{\mathrm{DKW}}(S)$ (see Appendix C.4 in \cite{li2022simultaneous} for an example of such a construction), which is either equal to or strictly larger than $\bar{d}_{\mathrm{DKW}}(S)$. Consequently, the original bound $d_{\mathrm{DKW}}(S)$ in Equation (\ref{eq:DKW}) can be viewed as an  approximate shortcut to the closed testing procedure.

\subsection{Empirical Comparison of Computation Times}\label{app:timings}

The computational complexities summarized in Table~\ref{tab:comp-costs} describe the asymptotic scaling of the shortcut procedures, but do not by themselves quantify their practical cost. To address this point, we complement Table~\ref{tab:comp-costs} with an empirical comparison of computation times for the proposed lower bounds $d$ across the local tests considered in the paper.

We compare Simes, Fisher, and WMW (which have linearithmic complexity and are implemented in \textsf{C++}), as well as Storey--Simes and Shiraishi (which have quadratic complexity and are implemented in \textsf{R}). To ensure a fair comparison, we include a common preparatory step in the timing for \emph{every} method: we sort the $m$ calibration scores and $n$ test scores, compute the ranks of the pooled scores, and form the corresponding conformal $p$-values (which are therefore already ordered). When $m$ is of the same order as $n$, this preprocessing requires $\mathcal{O}(N\log N)$ time with $N=m+n$, and it is included in the reported runtime of each method.

\paragraph{Simulation setup.}
We generate balanced calibration and test samples with $m=n$. Inliers are drawn from a standard normal distribution, and outliers are drawn from a location-shifted distribution with shift $\mu=\sqrt{2\log n}$. The outlier proportion is set to $\theta=0.5$. We vary the sample size from $n=5000$ to $n=25000$.

\paragraph{Reported metric.}
For each method and each $n$, we report the median computation time (in milliseconds) over 3 repetitions.

\paragraph{Results.}
Figure~\ref{fig:timings} displays the results. All methods exhibit the expected asymptotic behavior: the \textsf{C++} implementations of Simes, Fisher, and WMW scale close to the initial sorting cost, whereas Storey--Simes and Shiraishi show the anticipated super-linear growth. The Shiraishi method in our implementation relies on the linear-time approximation $g(r/(N+1))$ for the score vector; using $\mathbb{E}\!\left[g(U_N^{(r)})\right]$ instead would be more computationally demanding. Nevertheless, for $n=5000$ and $n=15000$, computing $d$ takes on the order of 1 second and 10 seconds, respectively, which appears reasonable for the applications considered in this paper.

\begin{figure}[!htb]
    \centering
    \includegraphics[scale=.7]{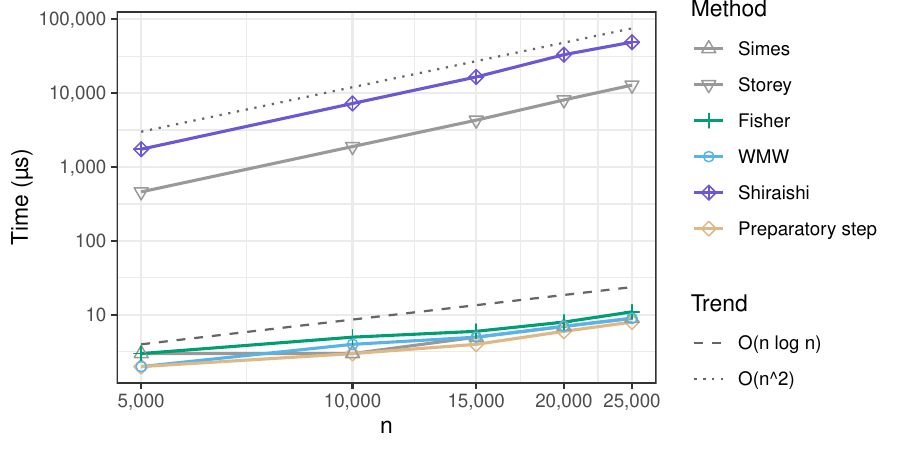}
    \caption{Median computation time (milliseconds) for computing the overall lower bound $d$ across local tests, including a common preprocessing step (sorting pooled scores, computing ranks, and forming ordered conformal $p$-values). Times are medians over 10 repetitions.}
    \label{fig:timings}
\end{figure}

\section{Optimal Score Function for Outlier Detection}\label{app:optimal_scoring}

In this section, we show that the likelihood ratio function $r^{opt}$ is the optimal scoring function for constructing conformity scores that are \emph{sufficient} for the true proportion of outliers in the test sample. In particular, sufficiency of the conformity scores implies that no power is lost for testing the global null hypothesis of no outliers when inference is based on the conformity scores. Specifically, we show that the optimal test based on the original observations $Z_{m+1},\ldots,Z_{m+m}$ coincides with a two-step procedure that (i) computes the optimal conformity scores $Y_1^{opt}=r^{opt}(Z_{m+1}),\ldots,Y^{opt}_n=r^{opt}(Z_{m+n})$ and (ii) applies the optimal likelihood ratio test to these scores. 

We assume that the test observations $Z_{m+1}, \ldots, Z_{m+n} \in \mathbb{R}^d$ are i.i.d. from the mixture model:
\begin{equation}\label{eq:model-mixture}
   Z_{m+1}, \ldots, Z_{m+n} \overset{\text{i.i.d.}}{\sim} (1-\pi)P_0 + \pi \bar P_1,
\end{equation}
where $\pi \in [0,1]$ denotes the expected outlier proportion, and $P_0$ is the inlier distribution generating the calibration sample $Z_1, \ldots, Z_m$. 
The mixture model (\ref{eq:model-mixture}) can be reconciled with the setup in Section~\ref{sec:setup} by expressing the outlier component $\bar P_1$ as the average of the individual outlier distributions $P_j$ in Equation~\eqref{eq:model}: $\bar P_1 := n^{-1}\sum_{j \in [n]} P_j$. 
With this choice, the model in~\eqref{eq:model-mixture} closely resembles that in~\eqref{eq:model}, although the two are not exactly equivalent. 
We further assume that $P_0$ and each $P_j$ are continuous distributions with densities $p_0$ and $p_j$, for all $j \in [n]$. Consequently, the outlier component $\bar P_1$ is also a continuous distribution with density $\bar p_1= n^{-1} \sum_{j \in [n]} p_j$. 


Under this setup, the likelihood for $\pi$ of the test sample $Z_{m+1},\ldots,Z_{m+n}$ is
\begin{align*}
       L(\pi; Z_{m+1},\ldots,Z_{m+n}) &= \nonumber
       \prod_{j\in[n]} \left[(1- \pi) p_0(Z_{m+j}) + \pi \bar p_1(Z_{m+j})\right]
       =\\&= \nonumber
       \prod_{j\in[n]}\left[(1-\pi) + \pi r^{opt}(Z_{m+j}) \right]\cdot \prod_{j\in[n]} p_0(Z_{m+j}),
    \end{align*}
    where
\begin{align*}
r^{opt}(z) = \frac{\bar{p}_1(z)}{p_0(z)}.
    \end{align*}
This factorization shows that the likelihood depends on $\pi$ only through  $r^{opt}(Z_{m+1}),\ldots, r^{opt}(Z_{m+n})$, implying by  that these conformity scores are sufficient for $\pi$. 

To test the null hypothesis
$H: \pi = 0$ against the alternative  
$K: \pi = \pi_1$ for some $\pi_1 \in (0,1]$, the Neyman–Pearson lemma yields the likelihood ratio test (LRT) with test statistic
\begin{align}\label{eq:optimal_LRT}
       \Lambda &= 
       \prod_{j\in[n]} \left[  (1- \pi_1) + \pi_1 r^{opt}(Z_{m+j})\right].
\end{align}
Since the conformity scores $Y_1^{opt}=r^{opt}(Z_{m+1}),\ldots, Y_n^{opt}=r^{opt}(Z_{m+n})$ are sufficient for $\pi$, the LRT based on the original observations coincides with the LRT based on these scores. This justifies the use of conformity scores derived from 
$r^{opt}$ as a dimension-reducing transformation that is optimal for testing the global null hypothesis of no outliers.

This suggests that inference should be based on the optimal calibration scores $X^{opt}_1 = r^{opt}(Z_1), \ldots, X^{opt}_m = r^{opt}(Z_m)$ and the optimal test scores $Y_1^{opt}= r^{opt}(Z_{m+1}), \ldots, Y_n^{opt}=r^{opt}(Z_{m+n})$. Local testing methods based on conformal $p$-values or ranks are invariant under monotone transformations of the conformity scores $(X^{opt}_1,\ldots,X^{opt}_m,Y^{opt}_1,\ldots,Y^{opt}_n)$.
Consequently, using any scoring function of the form
$r^* = \Psi \circ  r^{opt}$ with $\Psi$ increasing and continuous (possibly depending on unknown parameters), yields 
the same inference \citep{marandon2022machine}. This holds because the ranks $R^{opt}_1,\ldots,R^{opt}_N$ of the optimal conformity scores $(X^{opt}_1,\ldots,X^{opt}_m,Y^{opt}_1,\ldots,Y^{opt}_n)$ coincide with the ranks of the conformity scores $(X^{*}_1,\ldots,X^{*}_m,Y^{*}_1,\ldots,Y^{*}_n)$
constructed from any scoring function of the form $r^*$.

\citet{marandon2022machine} derived the scoring function $(1-\pi)r^{opt}$, which is of the form $r^*$, for outlier identification with marginal false discovery rate (mFDR) control. They showed that, among all procedures that reject hypotheses by thresholding conformity scores while controlling mFDR, the score function $(1-\pi)r^{opt}$ maximizes the true discovery rate.

\citet{marandon2022machine} considered applying the Benjamini–Hochberg procedure to the oracle conformal $p$-values $p^{opt}_1,\ldots,p^{opt}_n$ obtained from $(X^{opt}_1,\ldots,X^{opt}_m,Y^{opt}_1,\ldots,Y^{opt}_n)$, or equivalently from conformity scores constructed using any scoring function of the form $r^*$. 
They propose to estimate $r^*$ via binary classification, with the aim of approximating the oracle conformal $p$-values $p^{opt}_1,\ldots,p^{opt}_n$, or equivalently, the oracle ranks $R^{opt}_1,\ldots,R^{opt}_N$. This supports the use of binary classification for computing conformity scores when using testing methods based on conformal $p$-values or ranks. 

In this section, we consider optimality for outlier detection, that is, for the global test of the null hypothesis of no outliers. However, optimality for outlier enumeration does not, in general, follow from optimality of the individual local tests \citep{heller2023optimal}, although admissibility of the local tests is necessary \citep{goeman2021onlyclosed}.

\section{Proofs}\label{app:proofs}

\subsection{Theorem~\ref*{thm:asymp_distr_LMPI-exch}}

\begin{proof}[Proof of Theorem~\ref*{thm:asymp_distr_LMPI-exch}]
This result follows from a general formulation of the finite-population central limit theorem \citep{hajek1960limiting, li2017general}. 
The Shiraishi statistic can be written as 
  $$T^{\text{g}}  = \sum_{r \in [N]} \E{g(U_{N}^{(r)})} C_r$$
where $C_r = 1$ if the unit with rank $r$ is assigned to the test sample, and $C_r = 0$ otherwise. Exchangeability under $H_0^*$ implies that $(C_1,\ldots,C_{N})$ are uniformly distributed over all vectors with exactly $n$ ones. Thus, the null distribution of $T^{\text{g}}$ is induced by permutations of $(C_1,\ldots,C_{N})$, which is equivalent to sampling $n$ units without replacement from the finite population $\{ \E{g(U_{  N}^{(1)})}, \ldots, \E{g(U_{  N}^{(N)})} \}$. Consequently, $T^{\text{g}}/n$ is the sample mean of this finite population, and the result follows by applying Theorem~1 of \cite{ li2017general}.
\end{proof}

\subsection{Theorem~\ref{thm:asymp_equivalence} and Corollary~\ref{cor:asymp_equivalence-power}}

\begin{proof}[Proof of Theorem~\ref{thm:asymp_equivalence}]
Under $H_0:\theta=0$, we condition on the pooled sample and define
\[
\mathbb P_* := \mathbb P_0(\,\cdot \mid \{W_1,\ldots,W_N\}) .
\]
Conditioning on the pooled sample fixes $\hat g_N$.  

Let $R\subset[N]$ denote the set of ranks assigned to the test sample, with $|R|=n$. Under $\mathbb P_*$, $R$ is uniformly distributed over all subsets of $[N]$ of cardinality $n$.

Define the score errors 
$$\delta_r=\hat g_N\!\left(\frac{r}{N+1}\right)
-
g\!\left(\frac{r}{N+1}\right), \qquad r \in [N]$$
and their finite-population mean
\[
\bar\delta = \frac{1}{N}\sum_{r=1}^N \delta_r .
\]
The difference between the centered statistics $T^{\hat{\text{g}}}$ and $T^{\text{g}}$ is
\[
(T^{\hat{\text{g}}} - n\hat{\mu}_N) - (T^{\text{g}} - n\mu_N)
=
\sum_{r=1}^N \delta_r\,\mathds 1\{r\in R\} - n\bar\delta
=
\sum_{r=1}^N (\delta_r-\bar\delta)\,\mathds 1\{r\in R\}.
\]
Hence, conditional on the pooled sample, $(T^{\hat{\text{g}}} - n\hat{\mu}_N) - (T^{\text{g}} - n\mu_N)$ is the sum of $n$ draws  obtained by simple random sampling without replacement from the finite population
$\{\delta_r-\bar\delta:r=1,\ldots,N\}$.
Since $\mathbb E_*[\mathds 1\{r\in R\}]=n/N$ for all $r$, it follows that
\[
\mathbb E_*[(T^{\hat{\text{g}}} - n\hat{\mu}_N) - (T^{\text{g}} - n\mu_N)]=0 .
\]
For simple random sampling without replacement, the finite-population variance
formula gives
\[
\mathbb{V}\mathrm{ar}_*[(T^{\hat{\text{g}}} - n\hat{\mu}_N) - (T^{\text{g}} - n\mu_N)]
=
\frac{n(N-n)}{N-1}\cdot
\frac{1}{N}\sum_{r=1}^N(\delta_r-\bar\delta)^2 .
\]
Since $n/N\to\lambda\in(0,1)$, the multiplicative term is of order $N$, and therefore there
exists a constant $C>0$ such that, for all sufficiently large $N$,
\[
\mathbb{V}\mathrm{ar}_*[(T^{\hat{\text{g}}} - n\hat{\mu}_N) - (T^{\text{g}} - n\mu_N)]
\le
C\,N\cdot\frac{1}{N}\sum_{r=1}^N\delta_r^2 .
\]
By Chebyshev’s inequality, for any $\varepsilon>0$,
\[
\mathbb P_*\!\left(\bigl|(T^{\hat{\text{g}}} - n\hat{\mu}_N) - (T^{\text{g}} - n\mu_N)\bigr|>\varepsilon\sqrt N\right)
\le
\frac{C}{\varepsilon^2}\cdot\frac{1}{N}\sum_{r=1}^N\delta_r^2 .
\]
By (\ref{eq-L2consistency}), the right-hand side converges to zero in probability.
Hence,
\[
\mathbb P_*\!\left(\bigl|(T^{\hat{\text{g}}} - n\hat{\mu}_N) - (T^{\text{g}} - n\mu_N)\bigr|>\varepsilon\sqrt N\right)
\xrightarrow{\mathbb P} 0,
\]
which shows that $$\frac{(T^{\hat{\text{g}}} - n\hat{\mu}_N) - (T^{\text{g}} - n\mu_N)}{\sqrt N}\xrightarrow{\mathbb P_*}0.$$
Finally, taking expectations with respect to the pooled sample yields
$$\frac{(T^{\hat{\text{g}}} - n\hat{\mu}_N) - (T^{\text{g}} - n\mu_N)}{\sqrt N}\xrightarrow{\mathbb P_0}0.$$
\end{proof}

\begin{proof}[Proof of Corollary~\ref{cor:asymp_equivalence-power}]
Consider the sequence of probability measures $\{\mathbb{P}^N_h\}$ corresponding to the local alternatives
$\theta_N = h/\sqrt{N}$ for some $h>0$.
By \citet[Theorem~1]{shiraishi1985local}, the sequence
$\{\mathbb P_h^{N}\}$ is contiguous with respect to the sequence
$\{\mathbb P_0^{N}\}$.
Therefore, by Le Cam’s first lemma
\citep[Lemma~6.4]{van2000asymptotic},
\[
\frac{(T^{\hat{\text{g}}} - n\hat{\mu}_N) - (T^{\text{g}} - n\mu_N)}{\sqrt N}\xrightarrow{\mathbb P_0}0
\quad\Longrightarrow\quad
\frac{(T^{\hat{\text{g}}} - n\hat{\mu}_N) - (T^{\text{g}} - n\mu_N)}{\sqrt N}\xrightarrow{\mathbb P_h}0.
\]
Since the oracle LMPR statistic satisfies
\[
\frac{T^{\text{g}} - n\mu_N}{\sqrt N}
\xrightarrow{d}
N \!\bigl(h I_g,\, I_g\bigr)
\quad\text{under } \mathbb P_h, \qquad I_g = \lambda(1-\lambda)\mathbb{V}\mathrm{ar}(g(U)),
\]
it follows by Slutsky’s theorem that
\[
\frac{T^{\hat{\text{g}}} - n\hat{\mu}_N}{\sqrt N}
=
\frac{T^{\text{g}} - n\mu_N}{\sqrt N}
+
\frac{(T^{\hat{\text{g}}} - n\hat{\mu}_N) - (T^{\text{g}} - n\mu_N)}{\sqrt N}
\xrightarrow{d}
 N\!\bigl(h I_g,\, I_g\bigr)
\quad\text{under } \mathbb P_h .
\]
Consequently, the adaptive rank statistic and the oracle LMPR statistic
have the same limiting distribution under local alternatives and therefore
the same local asymptotic power function.

\end{proof}

\section{Additional Empirical Results} \label{app:numerical}

\subsection{Numerical Experiments with Synthetic Data} \label{app:numerical-synthetic}

Figure~\ref{fig:exp-synthetic-1-tuning} illustrates the effect of the proportion of calibration data allocated to tuning on ACODE’s performance. Specifically, we repeat the experiments summarized in Figure~\ref*{fig:exp-synthetic-1} while varying the fraction of calibration points used for tuning. The total calibration sample size is fixed at 1000, of which 250 points (25\%) were used for tuning in Figure~\ref*{fig:exp-synthetic-1}; here we vary this proportion from 0.01 to 0.99. The results show that ACODE's performance remains quite stable over a wide range of tuning fractions, but degrades when too few points are used for tuning (rendering tuning ineffective) or when too many are allocated to tuning (leaving insufficient data for computing significance tests). This motivates our choice of 25\% as a reasonable default.

\begin{figure}[!htb]
\centering 
\includegraphics[width=\linewidth]{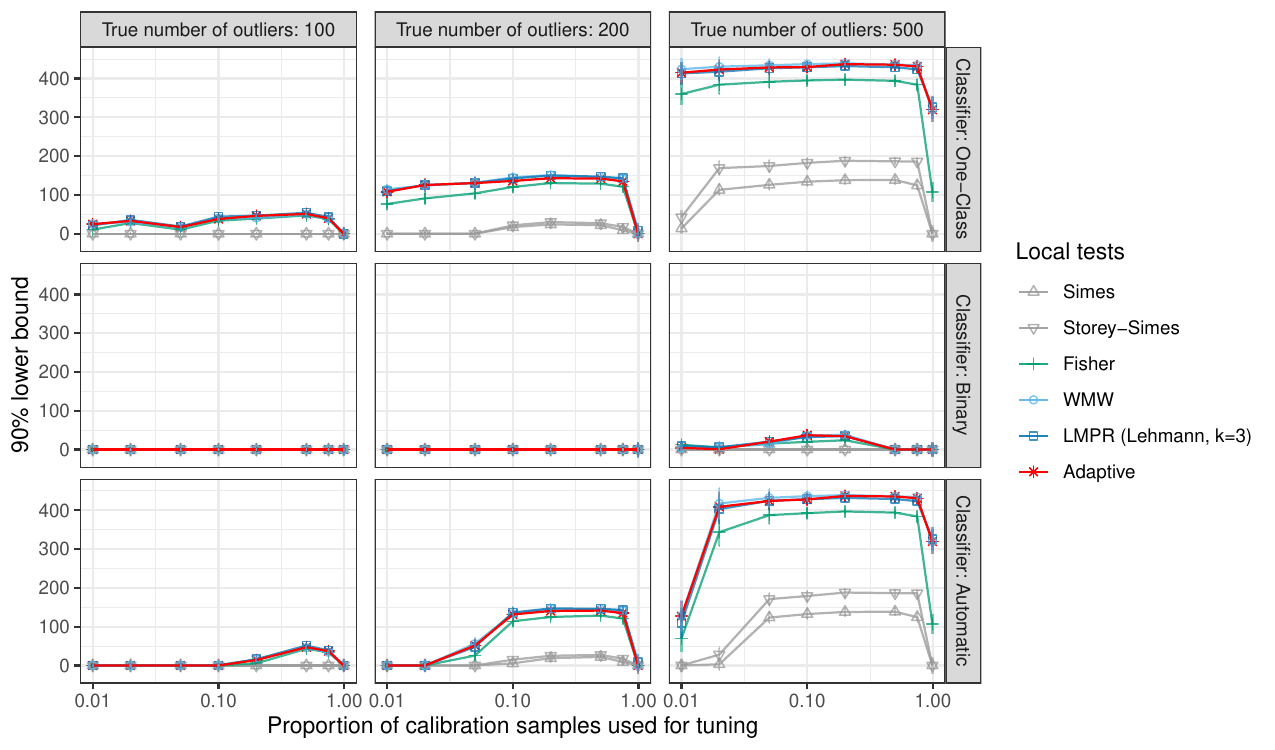}
\caption{Median values for a 90\% lower confidence bound on the number of outliers in a test set, computed by ACODE on synthetic data based on different classifiers and local testing procedures. The results are shown as a function of the proportion of calibration points (1000 observations in total) used for tuning (ref.~Section~\ref*{sec:tuning_new}), for different true numbers of outliers within a test set of size 1000. Other details are as in Figure~\ref*{fig:exp-synthetic-1}, where the proportion of calibration data points used for tuning is fixed at 0.25.
}

\label{fig:exp-synthetic-1-tuning}
\end{figure}

Figure~\ref{fig:exp-synthetic-1-sel-q0.9} provides additional details on the experiments of Figure~\ref*{fig:exp-synthetic-1-sel-q0.5}, reporting the empirical 90-th quantiles for the 90\% lower confidence bounds, instead of the median. 

\begin{figure}[!htb]
\centering
\includegraphics[width=0.9\linewidth]{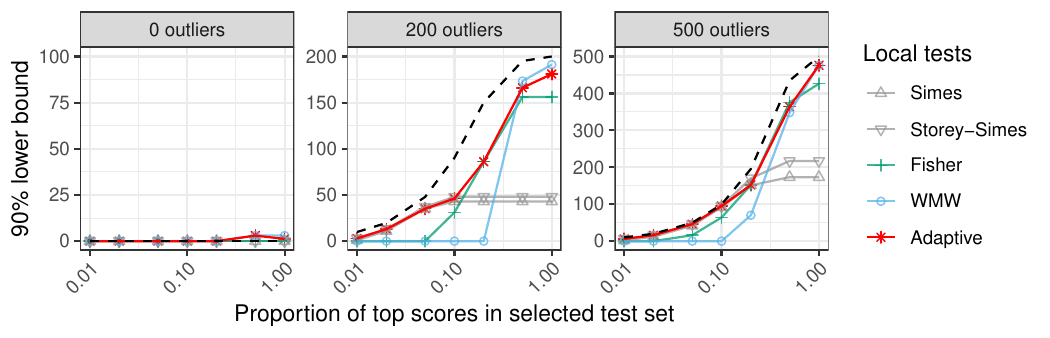}
\caption{Empirical 90-th quantile for a 90\% lower confidence bound on the number of outliers within an adaptively selected subset of 1000 test points, in experiments similar to those of Figure~\ref{fig:exp-synthetic-1}. Other details are as in Figure~\ref*{fig:exp-synthetic-1-sel-q0.5}.
}
\label{fig:exp-synthetic-1-sel-q0.9}
\end{figure}

To assess the difficulty of individual outlier detection in this synthetic setting, we additionally compare ACODE with the Benjamini-Hochberg (BH) \citep{benjamini1995controlling} procedure applied to individual conformal $p$-values at nominal FDR level 10\%, following the approach of \citet{bates2021testing}. As shown in Figure~\ref{fig:exp-synthetic-1-bh}, individual discovery is feasible in this setting, due to the relatively strong signals, but consistently less powerful than collective outlier enumeration with ACODE.

\begin{figure}[!htb]
\centering
\includegraphics[width=\linewidth]{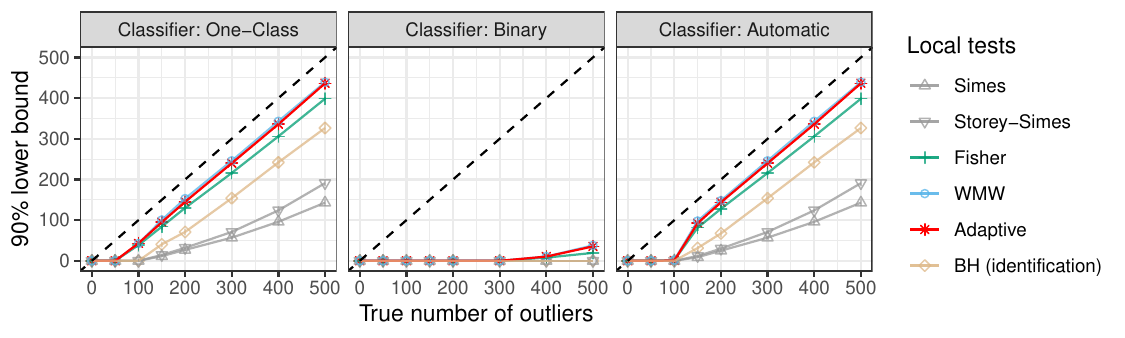}
\caption{Median values for a 90\% lower confidence bound on the number of outliers in a test set, computed by ACODE on synthetic data based on different classifiers and local testing procedures, as in Figure~\ref{fig:exp-synthetic-1}. Here, we introduce an additional benchmark: the number of discoveries made by BH procedure for individual outlier identification with FDR control, applied at nominal FDR level 10\%. This shows our ACODE method has higher power, except when applied using Simes' test as local testing procedure.
}
\label{fig:exp-synthetic-1-bh}
\end{figure}

\FloatBarrier

Figures~\ref{fig:exp-synthetic-2} and~\ref{fig:exp-synthetic-2-sel-q0.5} describe findings similar to those in Figures~\ref*{fig:exp-synthetic-1} and~\ref*{fig:exp-synthetic-1-sel-q0.5}, respectively. The distinction is that now the data are simulated from a binomial model borrowed from \citet{liang2022integrative}, for which binary classifiers are more powerful than one-class classifiers.
Figures~\ref{fig:exp-synthetic-1-q0.9}, \ref{fig:exp-synthetic-2-sel-q0.9}, and~\ref{fig:exp-synthetic-2-q0.9} summarize the 90-th empirical quantiles of the lower confidence bounds for the numbers of outliers presented in Figures~\ref{fig:exp-synthetic-1}, \ref{fig:exp-synthetic-2}, and \ref{fig:exp-synthetic-2-sel-q0.5}, respectively.
Figure~\ref{fig:exp-synthetic-3} delves into experiments related to those depicted in Figures~\ref{fig:exp-synthetic-2} and~\ref{fig:exp-synthetic-2-sel-q0.5}, but utilizing synthetic data from a mixture model that bridges between the distributions considered above.
These results highlight ACODE's flexibility, which selects an effective classifier and local testing procedure for each specific case.

Figures~\ref{fig:exp-G-hat-power} and~\ref{fig:exp-G-hat-lb} offer additional insights into the performance of the Shiraishi local testing procedure, outlined in Section~\ref{sec:new_local-g-wmw}, when applied in different settings. These results highlight its practical advantages in situations where the outlier scores are either underdispersed or overdispersed relative to the inlier scores.

\begin{figure}[!htb]
\centering 
\includegraphics[width=0.9\linewidth]{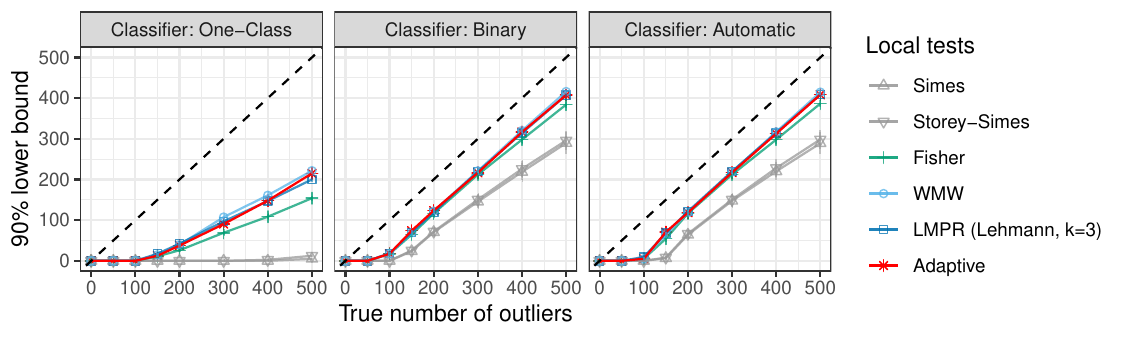}
\caption{Median values for a 90\% lower confidence bound on the total number of outliers in a test set, computed by ACODE on synthetic data based on different classifiers and local testing procedures. The results are shown as a function of the true number of outliers within a test set of size 1000. 
In these experiments, the synthetic data are generated from a binomial model borrowed from \citet{liang2022integrative}.}
\label{fig:exp-synthetic-2}
\end{figure}

\begin{figure}[!htb]
\centering 
\includegraphics[width=0.9\linewidth]{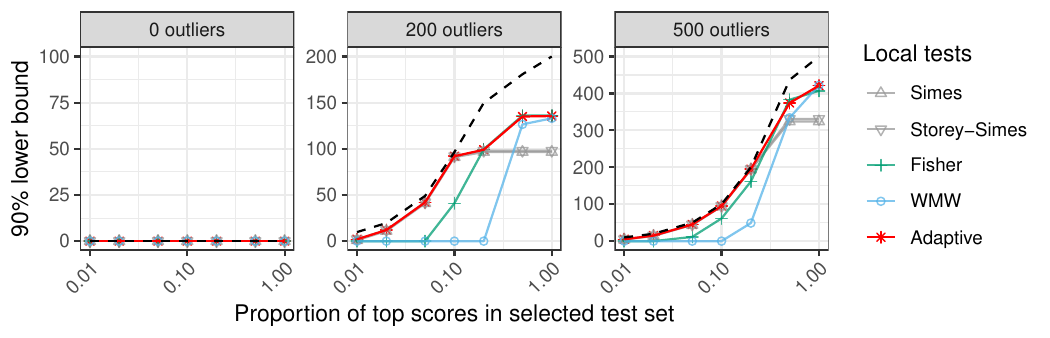}
\caption{Median lower confidence bounds for the number of outliers within an adaptively selected subset of 1000 test points, in numerical experiments otherwise similar to those of Figure~\ref{fig:exp-synthetic-1}.
The results are shown as a function of the proportion of selected test points and of the total number of outliers in the test set.
The dashed curve indicates the true number of outliers in the selected subset, averaged over 100 independent experiments (the true count varies across runs, though with relatively low variance).
In these experiments, ACODE is applied using a one-class support vector classifier to compute the conformity scores.
}
\label{fig:exp-synthetic-2-sel-q0.5}
\end{figure}

\begin{figure}[!htb]
\centering 
\includegraphics[width=0.9\linewidth]{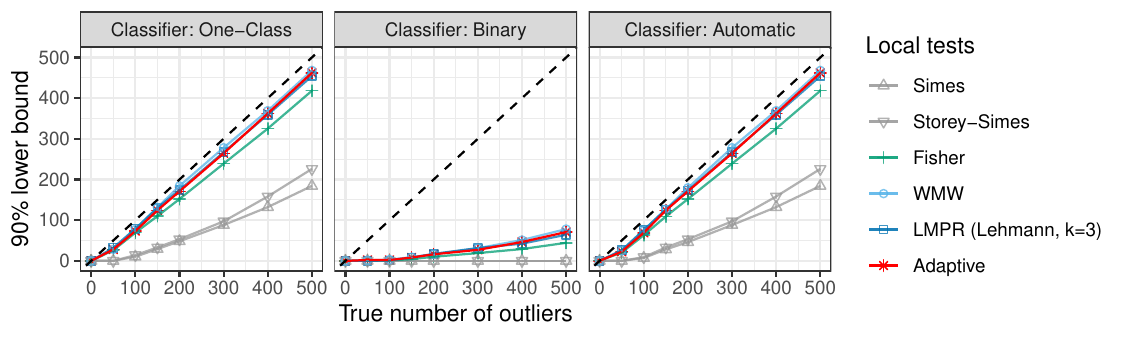}
\caption{Empirical 90-th quantile for a 90\% lower confidence bounds on the total number of outliers in a test set, computed by ACODE on synthetic data based on different classifiers and local testing procedures. Other details are as in Figure~\ref{fig:exp-synthetic-1}.
 }
\label{fig:exp-synthetic-1-q0.9}
\end{figure}

\begin{figure}[!htb]
\centering 
\includegraphics[width=0.9\linewidth]{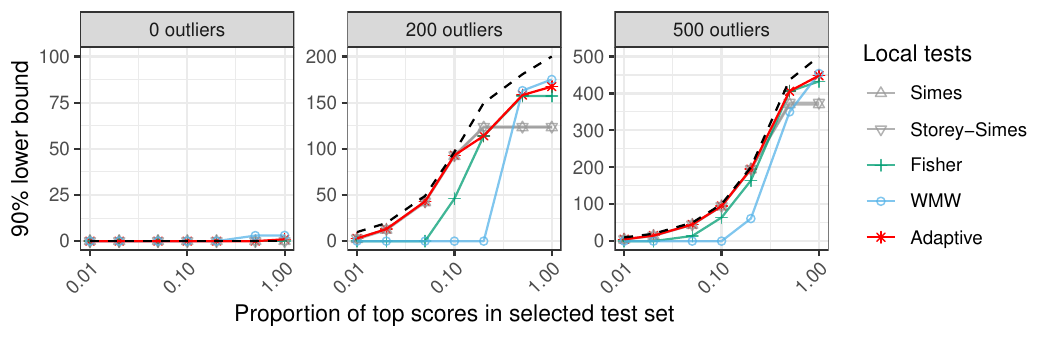}
\caption{Empirical 90-th quantile for a 90\% lower confidence bound for the number of outliers within an adaptively selected subset of 1000 test points, in numerical experiments otherwise similar to those of Figure~\ref{fig:exp-synthetic-1}. Other details are as in Figure~\ref{fig:exp-synthetic-2-sel-q0.5}.
}
\label{fig:exp-synthetic-2-sel-q0.9}
\end{figure}

\begin{figure}[!htb]
\centering 
\includegraphics[width=0.9\linewidth]{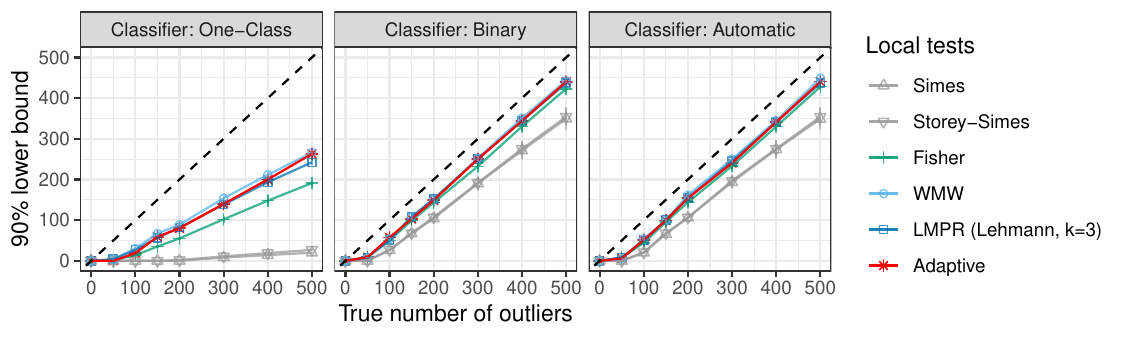}
\caption{Empirical 90-th quantile for a 90\% lower confidence bounds on the total number of outliers in a test set, computed by ACODE on synthetic data based on different classifiers and local testing procedures. Other details are as in Figure~\ref{fig:exp-synthetic-2}.
 }
\label{fig:exp-synthetic-2-q0.9}
\end{figure}

\begin{figure}[!htb]
\centering 
\includegraphics[width=0.9\linewidth]{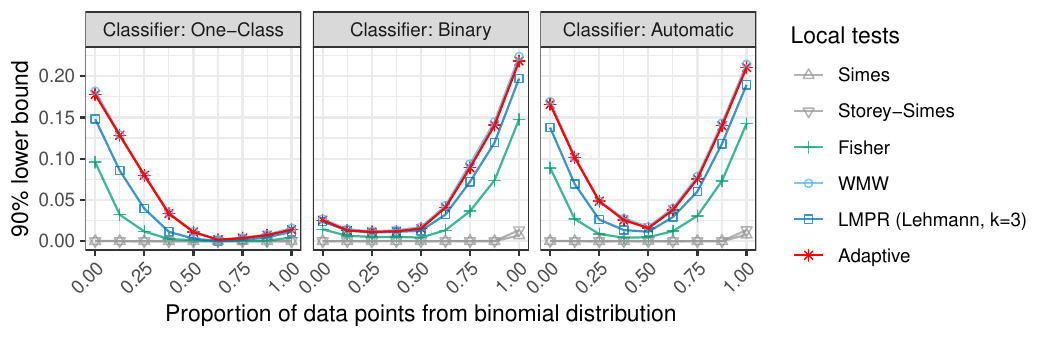}
\caption{Lower confidence bounds for the total number of outliers in a test set, computed by ACODE on synthetic data based on different classifiers and local testing procedures. The results are shown as a function of the true number of outliers within a test set of size 1000. 
In these experiments, the synthetic data are generated from a binomial model borrowed from \citet{liang2022integrative}.
Other details are as in Figure~\ref{fig:exp-synthetic-2}.
 }
\label{fig:exp-synthetic-3}
\end{figure}

\FloatBarrier

\begin{figure}[!htb]
\centering 
\includegraphics[width=\linewidth]{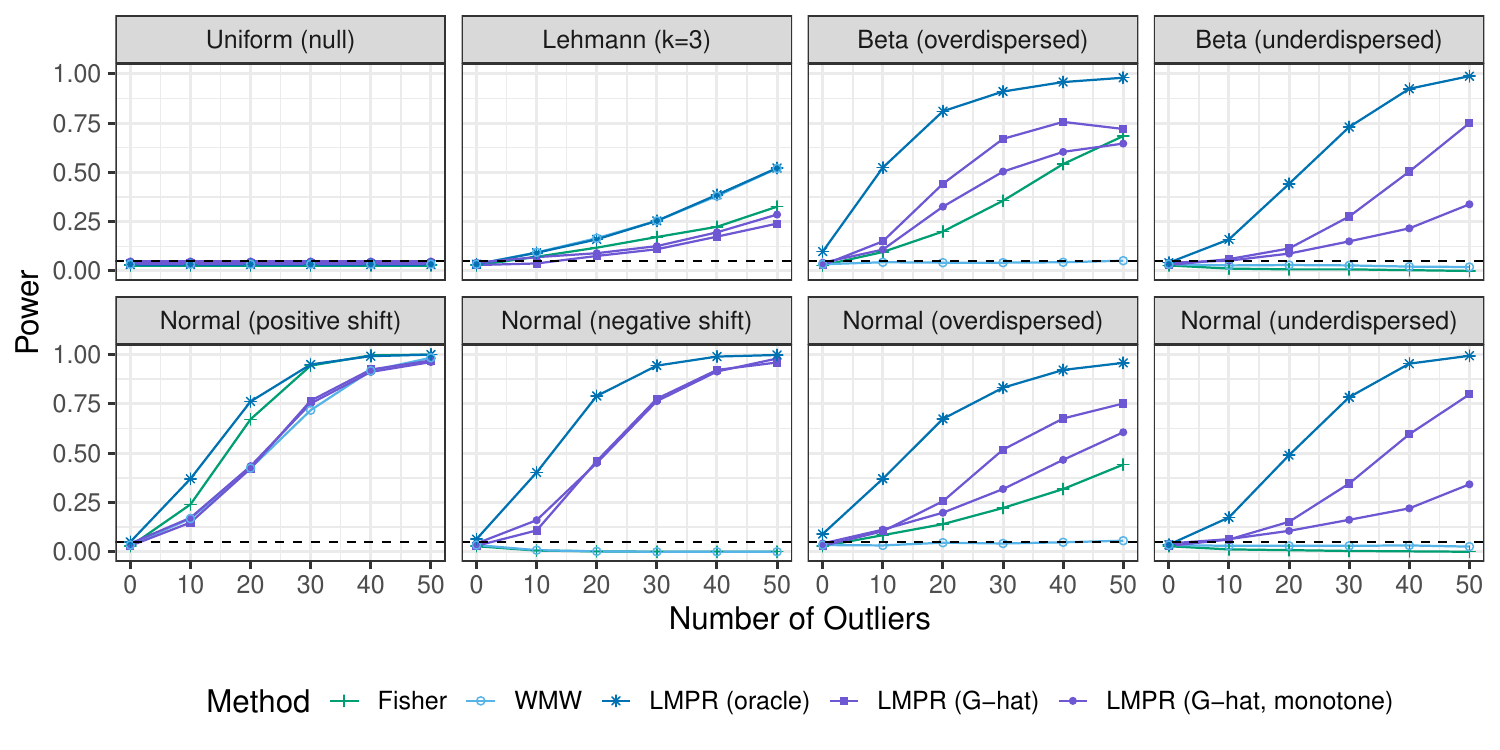}
\caption{Power of different rank tests for the global null hypothesis of no outliers in a test set containing 200 independent observations of a simulated univariate score, using a calibration set containing 500 independent inlier scores. The nominal level is $\alpha=0.05$ (horizontal dashed line).
The results are shown as a function of the true number of outliers in the test set. 
Top: the inliers are randomly sampled from a uniform distribution on $[0,1]$, while the outliers are sampled from the alternative distribution indicated in each panel.
Bottom: the inliers are standard normal, while the outlier scores follow a non-standard normal distribution, as indicated in each panel. Three versions of the Shiraishi test from Section~\ref{sec:new_local-g-wmw} are compared. The first version uses oracle knowledge of the true transformation $G$ linking the outlier distribution to the inlier distribution. The second version of the Shiraishi test uses an empirical estimate of $G$ obtained as described in Appendix~\ref{app:outlier-distr}.
The third version relies on a monotone approximation of the derivative $g = G'$, which (in Figure~\ref{fig:exp-G-hat-lb}) enables a computationally convenient shortcut in the context of closed testing, as also explained in Appendix~\ref{app:outlier-distr}.
 }
\label{fig:exp-G-hat-power}
\end{figure}

\begin{figure}[!htb]
\centering 
\includegraphics[width=\linewidth]{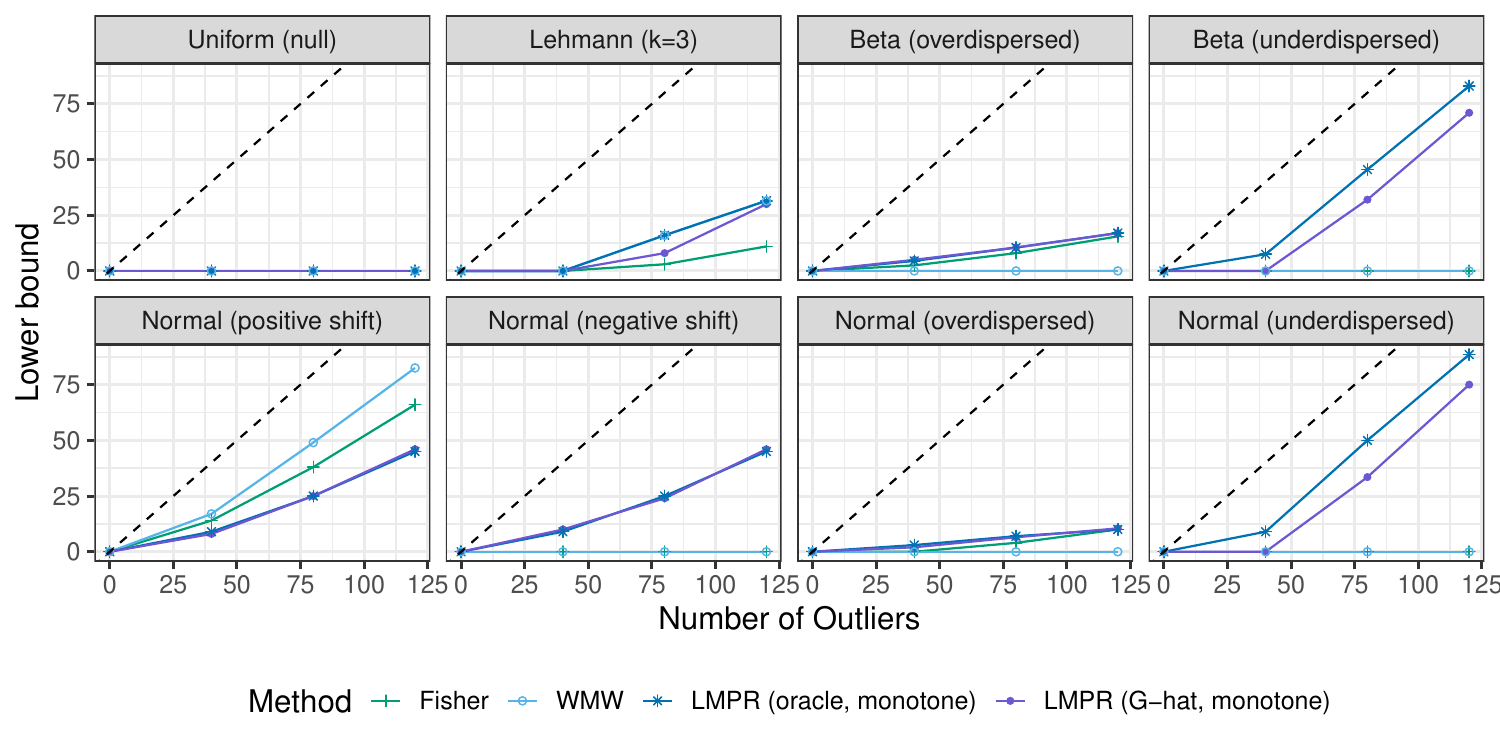}
\caption{
Median values for a 90\% lower confidence bound on the total number of outliers in the test set, in the same experiments of Figure~\ref{fig:exp-G-hat-power}.
The lower bound is calculated through closed testing, using different local testing procedures. For the Shiraishi approach, if the derivative $g$ of the oracle function $G$ is not monotone, it is in practice replaced by a monotone approximation that enables the application of a computationally efficient closed testing shortcut.
 }
\label{fig:exp-G-hat-lb}
\end{figure}

\FloatBarrier

\subsection{Numerical Experiments with Particle Collision Data} \label{app:numerical-lhco}

Figure~\ref{fig:exp-lhco-full} is a more detailed version of Figure~\ref{fig:exp-lhco}, including a broader range for the choice of the local test.
\begin{figure}[!htb]
\centering 
\includegraphics[width=\linewidth]{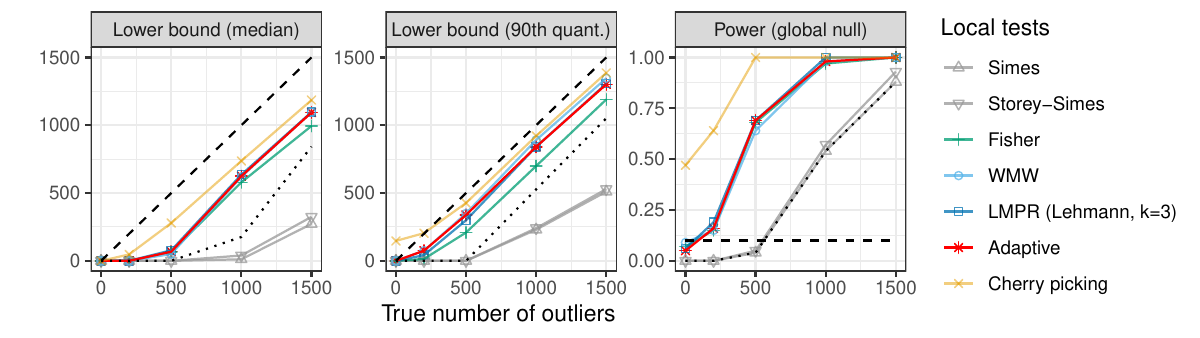}
\caption{Performance of ACODE for collective outlier detection with the LHCO data, in the experiments of Figure~\ref*{fig:exp-lhco}.
Our method utilizes a local testing procedure that may be adaptively selected (red curve) or fixed (other solid curves).
Compared to Figure~\ref*{fig:exp-lhco}, these results include a more detailed comparison of the performance of ACODE applied using different choices of local testing procedures.
}
\label{fig:exp-lhco-full}
\end{figure}

Figure~\ref{fig:exp-lhco-sel} presents results from experiments in which ACODE is used to construct 90\% lower confidence bounds for the number of outliers in a data-driven subset of test points, selected as those with the largest conformity scores. To reduce computational cost, we consider smaller test sets of 2000 data points with varying proportions of outliers, and we do not apply ACODE with the Shiraishi local test, whose implementation becomes relatively expensive when the selected set differs from the full test set. As in Figure~\ref{fig:exp-synthetic-1-sel-q0.5}, conformity scores are computed using a single classifier (AdaBoost) to facilitate interpretation. The results confirm that the adaptive version of ACODE again approximately maximizes power by automatically selecting the testing procedure.

\begin{figure}[!htb]
\centering
\includegraphics[width=0.9\linewidth]{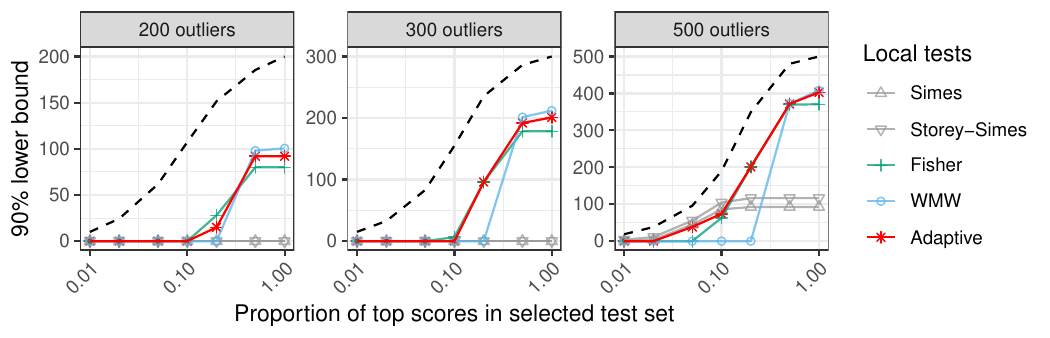}
\caption{Median values over repeated experiments for a 90\% lower confidence bound on the number of outliers within an adaptively selected subset of 1000 test points, in numerical experiments similar to those of Figure~\ref*{fig:exp-synthetic-1}.
The results are shown as a function of the proportion of selected test points and of the total number of outliers.
The dashed curve indicates the true number of outliers in the selected subset, averaged over 100 independent experiments (the true count varies across runs, though with relatively low variance).
}
\label{fig:exp-lhco-sel}
\end{figure}

Table~\ref{tab:lhco-lb} provides a more detailed view of the results in Figure~\ref{fig:exp-lhco-full}.

\begin{table}[!htb]
\centering
\resizebox{\textwidth}{!}{

\begin{tabular}{cccccccc}
\toprule
\multicolumn{1}{c}{ } & \multicolumn{7}{c}{Local testing procedure} \\
\cmidrule(l{3pt}r{3pt}){2-8}
Outliers & Simes & Storey-Simes & Fisher & WMW & LMPR (Lehmann, k=3) & Adaptive & Cherry picking\\
\midrule
\addlinespace[0.3em]
\multicolumn{8}{l}{\textbf{90\% Lower bound (median)}}\\
\hspace{1em}0 & 0 (0) & 0 (0) & 0 (2) & 0 (5) & 0 (4) & 0 (4) & 0 (8)\\
\hspace{1em}200 & 0 (0) & 0 (0) & 0 (2) & 0 (5) & 0 (4) & 0 (5) & 48 (9)\\
\hspace{1em}500 & 0 (1) & 0 (1) & 65 (9) & 50 (14) & 78 (13) & 68 (13) & 279 (12)\\
\hspace{1em}1000 & 12 (9) & 40 (10) & 579 (21) & 620 (23) & 636 (21) & 624 (21) & 737 (15)\\
\hspace{1em}1500 & 272 (18) & 324 (18) & 994 (20) & 1096 (22) & 1096 (20) & 1094 (20) & 1186 (15)\\
\addlinespace[0.3em]
\multicolumn{8}{l}{\textbf{90\% Lower bound (90-th quantile)}}\\
\hspace{1em}0 & 0 (0) & 0 (0) & 0 (2) & 0 (5) & 0 (4) & 0 (4) & 147 (8)\\
\hspace{1em}200 & 0 (0) & 0 (0) & 14 (2) & 75 (5) & 42 (4) & 75 (5) & 204 (9)\\
\hspace{1em}500 & 0 (1) & 0 (1) & 209 (9) & 350 (14) & 299 (13) & 338 (13) & 427 (12)\\
\hspace{1em}1000 & 230 (9) & 241 (10) & 700 (21) & 891 (23) & 841 (21) & 841 (21) & 921 (15)\\
\hspace{1em}1500 & 510 (18) & 529 (18) & 1190 (20) & 1346 (22) & 1307 (20) & 1301 (20) & 1386 (15)\\
\addlinespace[0.3em]
\multicolumn{8}{l}{\textbf{Power (global null)}}\\
\hspace{1em}0 & 0.00 (0.00) & 0.00 (0.00) & 0.07 (0.03) & 0.09 (0.03) & 0.06 (0.02) & 0.05 (0.02) & 0.47 (0.05)\\
\hspace{1em}200 & 0.00 (0.00) & 0.00 (0.00) & 0.15 (0.04) & 0.15 (0.04) & 0.19 (0.04) & 0.16 (0.04) & 0.64 (0.05)\\
\hspace{1em}500 & 0.04 (0.02) & 0.05 (0.02) & 0.68 (0.05) & 0.64 (0.05) & 0.69 (0.05) & 0.69 (0.05) & 1.00 (0.00)\\
\hspace{1em}1000 & 0.54 (0.05) & 0.57 (0.05) & 0.97 (0.02) & 0.98 (0.01) & 1.00 (0.00) & 0.98 (0.01) & 1.00 (0.00)\\
\hspace{1em}1500 & 0.88 (0.03) & 0.93 (0.03) & 1.00 (0.00) & 1.00 (0.00) & 1.00 (0.00) & 1.00 (0.00) & 1.00 (0.00)\\
\bottomrule
\end{tabular}

}
\caption{Performance of ACODE for collective outlier detection with the LHCO data, as a function of the true number of outliers. 
The performance is measured in terms of median and 90-th percentile values of a 90\% lower confidence bound for the number of outliers (top and center) and the power to reject the global null hypothesis of no outliers (bottom). The numbers in parenthesis are standard errors. Other details are as in Figures~\ref*{fig:exp-lhco} and~\ref{fig:exp-lhco-full}. } 
\label{tab:lhco-lb}
\end{table}

\FloatBarrier

\subsection{Additional Experiments with Real Data} \label{app:numerical-real}

In this section, we further investigate the performance of ACODE by applying it to 6 additional data sets previously utilized by in the related literature \citep{bates2021testing,liang2022integrative,marandon2022machine}.
These data sets are:
\begin{itemize}
  \item \texttt{ALOI} \citep{aloi}, containing 27 variables, 1508 outliers, 48026 inliers. Previously used by \citet{bates2021testing}.
  \item \texttt{Covertype} \citep{cover}, containing 10 variables, 2747 outliers, 286048 inliers. Previously used by \citet{bates2021testing}.
  \item \texttt{CreditCard} \citep{creditcard}, containing 30 variables,  492 outliers, 284315 inliers. Previously used by \citet{bates2021testing} and \citet{marandon2022machine}.
  \item \texttt{Mammography} \citep{mammography}, containing 6 variables, 260 outliers, 10923 inliers. Previously used by \citet{bates2021testing}, \citet{liang2022integrative} and \citet{marandon2022machine}.
  \item \texttt{Pendigits} \citep{pendigits}, containing 16 variables, 156 outliers, 6714 inliers. Previously used by \citet{bates2021testing}.
  \item \texttt{Shuttle} \citep{shuttle}, containing 9 variables, 3511 outliers, 45586 inliers. Previously used by \citet{bates2021testing} and \citet{marandon2022machine}.
\end{itemize}

These experiments are conducted following an approach similar to that described in Section~\ref{sec:numerical-synthetic}, applying our method based on the same suite of 6 classification algorithms and 6 local testing procedures.
Each experiment is independently repeated 100 times, each time randomly sampling disjoint training, calibration, tuning, and test sets of sizes 1000, 100, 100, and 100, respectively.
The training, calibration, and tuning sets include only inliers, while the proportion of outliers in the test set is varied as a control parameter between 0 and 1.

Figures~\ref{fig:exp-4-data-credit-card}--\ref{fig:exp-4-data-aloi} summarize, separately for each data set, the performances of the lower confidence bounds and global tests obtained with our method.
The results are reported as a function of the true number of outliers in the test set and are also stratified based on the type of classifier utilized by ACODE and on the underlying local testing procedure.
The findings overall indicate that binary classification models generally yield higher power in these experiments, which can be attributed to the relatively low dimensionality of these data sets.
Moreover, among the local testing methods considered, both Fisher's method and the WMW method demonstrate similarly high power, typically outperforming Simes' method.
Crucially, the fully automatic implementation of ACODE, which selects both the classification model and the local testing procedure in a data-adaptive way, leads to near-oracle performance in all cases.

Finally, Figures~\ref{fig:exp-data-sel} and~\ref{fig:exp-data-sel-q0.9} describe experiments aimed at constructing 90\% lower confidence bounds for the number of outliers within a data-dependent subset of test points, specifically those selected for their high conformity scores. For easier understanding of these experiments, ACODE is now applied with a fixed classification algorithm, the one-class isolation forest, rather than leveraging the full suite of 6 different algorithms.
The results show that the WMW test consistently outperforms the Fisher combination method in terms of power, aligning with findings previously documented in Sections~\ref{sec:numerical-synthetic} and~\ref{sec:numerical-lhco}.
Furthermore, the fully adaptive implementation of ACODE effectively maximizes power by automatically identifying the most powerful local testing procedure for each case, again achieving oracle-like performance.

\begin{figure}[!htb]
\centering 
\includegraphics[width=0.9\linewidth]{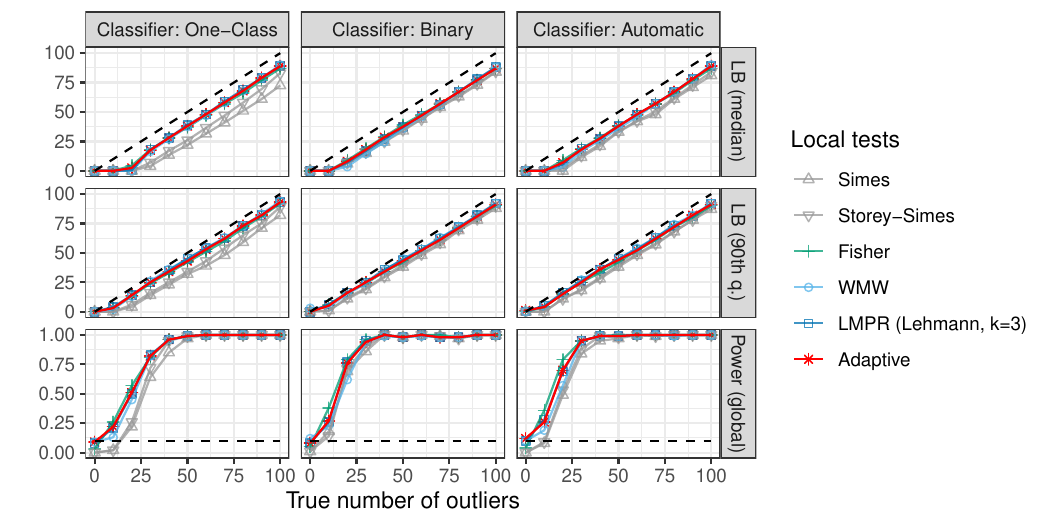}
\caption{Performance of ACODE for collective outlier detection with the \texttt{creditcard} data set, as a function of the true number of outliers in a test set of size 100.
  The performance is measured in terms of median and 90-th percentile values of a 90\% lower confidence bound for the number of outliers (top and center) and the power to reject the global null hypothesis of no outliers (bottom). Other details are as in Figure~\ref{fig:exp-synthetic-1}.
}
\label{fig:exp-4-data-credit-card}
\end{figure}

\begin{figure}[!htb]
\centering 
\includegraphics[width=0.9\linewidth]{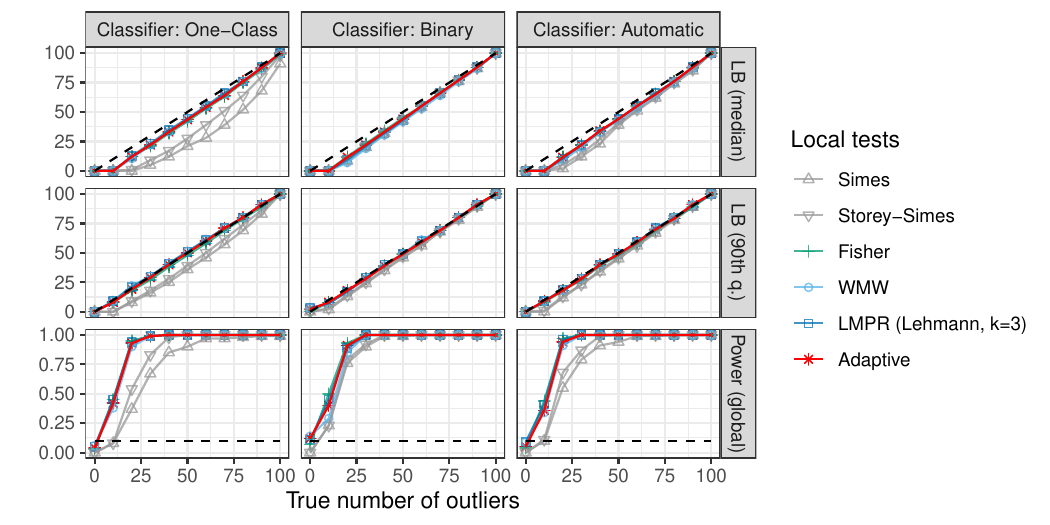}
\caption{Performance of ACODE for collective outlier detection with the \texttt{pendigits} data set, as a function of the true number of outliers in a test set of size 100.
  The performance is measured in terms of median and 90-th percentile values of a 90\% lower confidence bound for the number of outliers (top and center) and the power to reject the global null hypothesis of no outliers (bottom). Other details are as in Figure~\ref{fig:exp-4-data-credit-card}.
}
\label{fig:exp-4-data-pendigits}
\end{figure}

\begin{figure}[!htb]
\centering 
\includegraphics[width=0.9\linewidth]{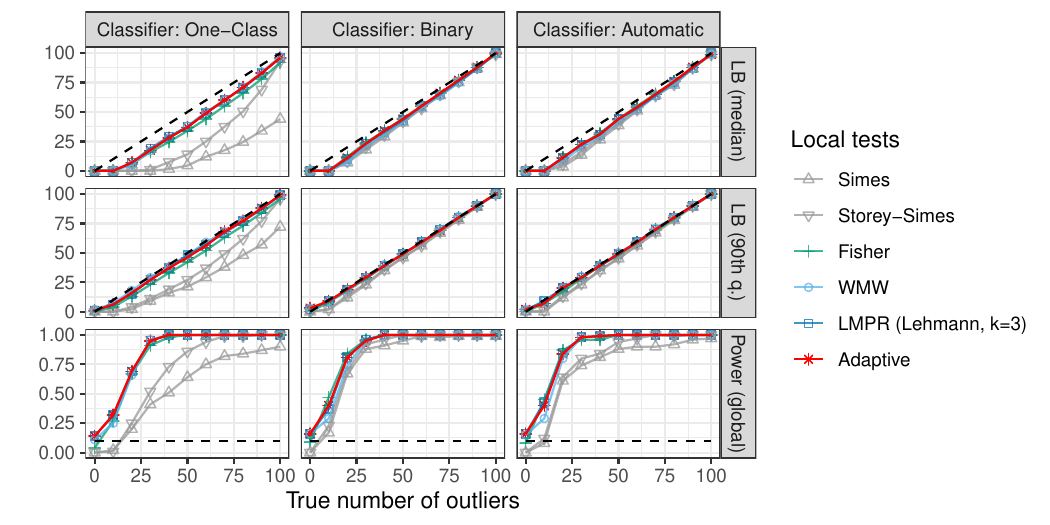}
\caption{Performance of ACODE for collective outlier detection with the \texttt{covertype} data set, as a function of the true number of outliers in a test set of size 100.
  The performance is measured in terms of median and 90-th percentile values of a 90\% lower confidence bound for the number of outliers (top and center) and the power to reject the global null hypothesis of no outliers (bottom). Other details are as in Figure~\ref{fig:exp-4-data-credit-card}.
}
\label{fig:exp-4-data-cover}
\end{figure}

\begin{figure}[!htb]
\centering 
\includegraphics[width=0.9\linewidth]{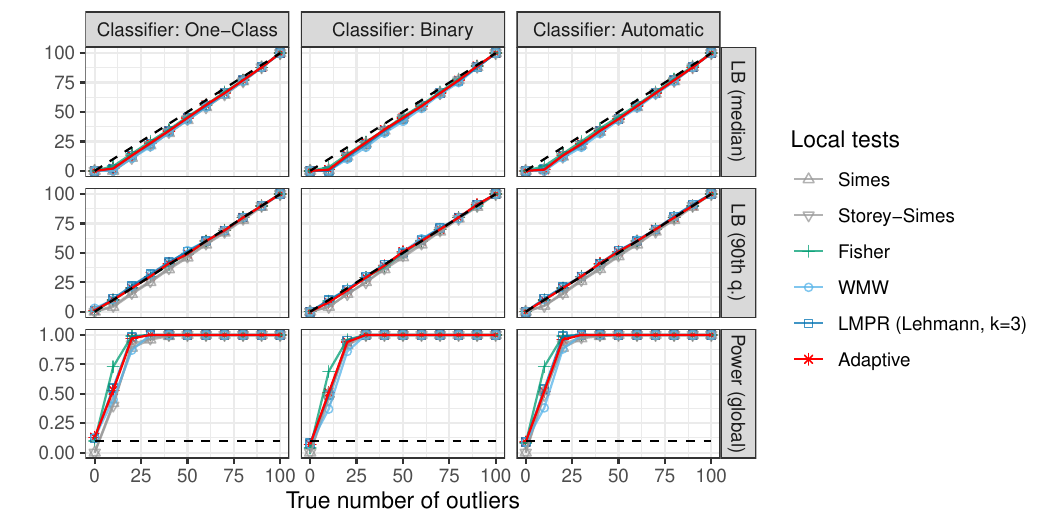}
\caption{Performance of ACODE for collective outlier detection with the \texttt{shuttle} data set, as a function of the true number of outliers in a test set of size 100.
  The performance is measured in terms of median and 90-th percentile values of a 90\% lower confidence bound for the number of outliers (top and center) and the power to reject the global null hypothesis of no outliers (bottom). Other details are as in Figure~\ref{fig:exp-4-data-credit-card}.
}
\label{fig:exp-4-data-shuttle}
\end{figure}

\begin{figure}[!htb]
\centering 
\includegraphics[width=0.9\linewidth]{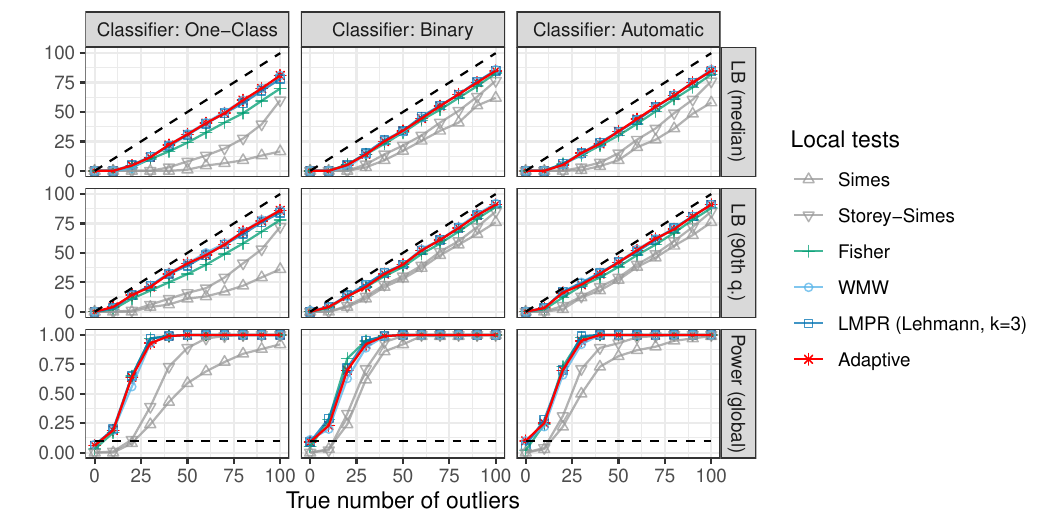}
\caption{Performance of ACODE for collective outlier detection with the \texttt{mammography} data set, as a function of the true number of outliers in a test set of size 100.
  The performance is measured in terms of median and 90-th percentile values of a 90\% lower confidence bound for the number of outliers (top and center) and the power to reject the global null hypothesis of no outliers (bottom). Other details are as in Figure~\ref{fig:exp-4-data-credit-card}.
}
\label{fig:exp-4-data-mammography}
\end{figure}

\begin{figure}[!htb]
\centering 
\includegraphics[width=0.9\linewidth]{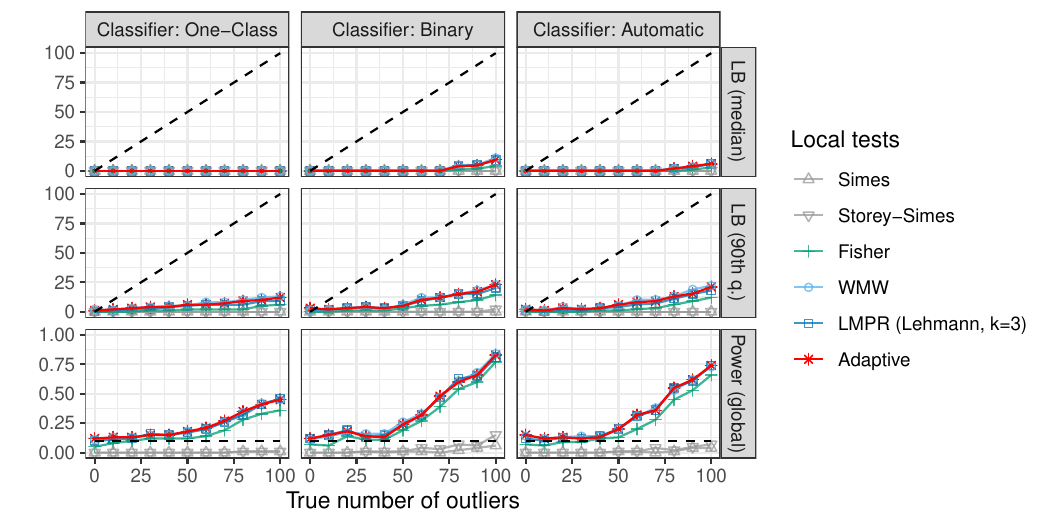}
\caption{Performance of ACODE for collective outlier detection with the \texttt{aloi} data set, as a function of the true number of outliers in a test set of size 100.
  The performance is measured in terms of median and 90-th percentile values of a 90\% lower confidence bound for the number of outliers (top and center) and the power to reject the global null hypothesis of no outliers (bottom). Other details are as in Figure~\ref{fig:exp-4-data-credit-card}.
}
\label{fig:exp-4-data-aloi}
\end{figure}

\begin{figure}[!htb]
\centering 
\includegraphics[width=0.9\linewidth]{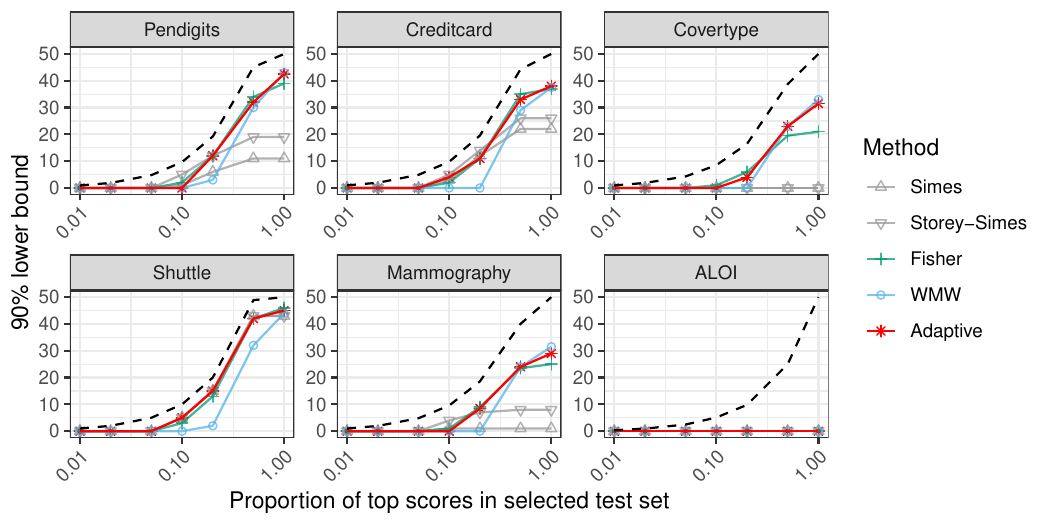}
\caption{Median values for a lower confidence bound on the number of outliers within an adaptively selected subset of 1000 test points, in numerical experiments with several different data sets.
The results are shown as a function of the proportion of selected test points.
In these experiments, ACODE is applied using a one-class isolation forest model to compute the conformity scores.
Other details are as in Figure~\ref{fig:exp-synthetic-1-sel-q0.5}.
}
\label{fig:exp-data-sel}
\end{figure}

\begin{figure}[!htb]
\centering 
\includegraphics[width=0.9\linewidth]{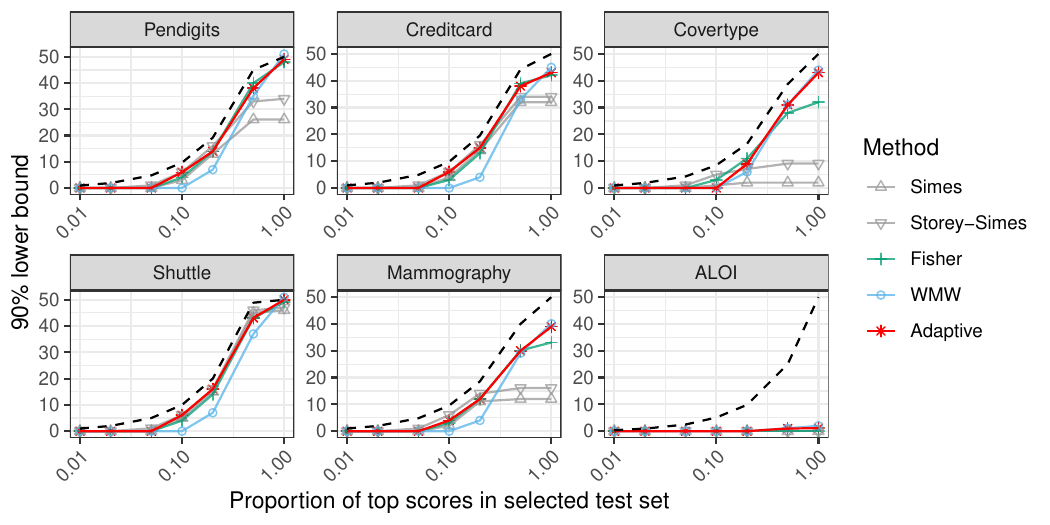}
\caption{Empirical 90-th quantile for a lower confidence bound on the number of outliers within an adaptively selected subset of 1000 test points, in numerical experiments with several different data sets.
Other details are as in Figure~\ref{fig:exp-data-sel}.
}
\label{fig:exp-data-sel-q0.9}
\end{figure}

\FloatBarrier

\subsection{\cite{ludbrook1998permutation} Example} \label{app:ludbrook-example}

Analogously to signal detection, where—depending on sparsity and signal strength—a signal may be estimable, detectable, or undetectable (see Fig. 1 of \citet{donoho2004higher}), our problem admits a similar hierarchy. Outliers may be individually identifiable, enumerable (and thus detectable in aggregate), or undetectable. These inferential tasks form a strict ordering: detection is easier than enumeration, which in turn is easier than identification. Less demanding tasks yield higher power but provide coarser information, reflecting an inherent trade-off between power and informativeness. 

Our closed testing–based approach navigates this trade-off: when effects are detectable, it enables enumeration; when they cannot be pinpointed, it allows them to be progressively bracketed within subsets of observations.

To illustrate this point, we consider a simple example based on the data of \citet{ludbrook1998permutation}. This example shows that even when individual-level detection fails, in particular, when no single 
$p$-value falls below the significance level 
$\alpha$, group-level inference can still reveal the presence of outliers. Specifically, the closed testing procedure yields a lower bound on the number of outliers and, , by examining subsets of observations, allows them, even if they cannot be pinpointed individually, to be bracketed within a smaller group of observations.

\citet[section 4.3]{ludbrook1998permutation} reports the calibration and test scores:
    \begin{eqnarray*}
    (x_1,x_2,x_3,x_4,x_5,x_6,x_7) &=& (5.42, 5.86, 6.16, 6.55, 6.8, 7, 7.11),\\
    (y_1,y_2,y_3,y_4,y_5) &=& (6.51, 7.56, 7.61, 7.84, 11.5).
    \end{eqnarray*}
    
    The resulting conformal $p$-values are:
    \[
    (p_1,p_2,p_3,p_4,p_5) = (0.625, 0.125, 0.125, 0.125, 0.125).
    \]
    
    At the nominal level $\alpha = 0.05$, all $p$-values exceed $\alpha$, and neither the Benjamini–Hochberg nor the Storey procedure rejects any hypotheses. In contrast, the WMW rank test rejects the global null hypothesis  that there are no outliers with a $p$-value of 0.03, as reported in \cite{ludbrook1998permutation}. This suggests that, although individual outliers may not be identified, a group of outliers may still be detected.
    
    Using the closed testing procedure based on local WMW tests, we obtain an overall bound of $d = 2$. This means that, with 95\% confidence, at least two of the five test observations are outliers, although we cannot determine exactly which ones.
    
    To narrow down location of the outliers, we can examine subsets of $[n]$.  For example, $d(S) = 2$ for $S = \{2, 3, 4, 5\}$, indicating that at least two outliers are among the four observations with the largest test scores. Zooming in further, for $S'=\{3,4,5\}$ and $S''=\{4,5\}$ we obtain $d(S')=1$ and $d(S'')=0$, indicating that while at least one outlier lies among the three highest-scoring test observations, the outliers cannot be bracketed further. 
    

    Note that if interest is restricted to an a priori defined subset $R$ of the test observations, then applying the closed testing procedure directly to $R$ is at least as powerful, and typically more powerful, than applying the procedure to the full set $[n]$ and subsequently focusing on $R$.
    For example, if the set $R = \{2, 3, 4, 5\}$ had been specified a priori as the hypotheses of interest, the resulting lower bound would improve from 2 to 3. 
    Choosing a priori a subset does not always lead to an improvement. For example, the set $\{1, 2, 3, 4\}$ would result in the same lower bound of 1.

\FloatBarrier


\end{document}